\newcommand{\field}[1]{\mathbb{#1}}
\newcommand{\N}{\field{N}}
\newcommand{\R}{\field{R}}
\newcommand{\C}{\field{C}}
\newcommand{\Z}{\field{Z}}
\newcommand{\HH}{\mathscr H}
\newcommand{\LL}{\mathscr L}
\newcommand{\FF}{\mathcal F}
\newcommand{\hh}{\mathfrak h}
\newcommand{\Estar}{\mu} 
\newcommand{\Hrel}{H_{\mathrm{rel}}}
\newcommand{\eps}{\varepsilon}
\newcommand{\ph}{\varphi}
\newcommand{\const}{\mathrm{const}}
\newcommand{\ran}{\mathrm{Ran}}
\renewcommand{\ker}{\mathrm{Ker}}
\newcommand{\linhull}{\mathrm{span}}
\newcommand{\restricted}{|\grave{}\,}
\newcommand{\norm}[1]{\mbox{$\left\| #1 \right\|$}}           
\newcommand{\sprod}[2]{\langle #1,#2 \rangle}        
\newcommand{\ket}[1]{\left| #1 \right\rangle}               
\newcommand{\bra}[1]{\left\langle #1 \right|}               
\newcommand{\rst}{\! \upharpoonright \!} 
\newtheorem{thm}{Theorem}[section]
\newtheorem{prop}[thm]{Proposition}
\newtheorem{cor}[thm]{Corollary}
\newtheorem{lemma}[thm]{Lemma}
\title{\textbf{Spectral Theory of the Fermi Polaron}}
\author{M.~Griesemer\footnote{marcel.griesemer@mathematik.uni-stuttgart.de}\ and U.~Linden\footnote{ulrich.linden@mathematik.uni-stuttgart.de}\\  
\small Fachbereich Mathematik, Universit\"at Stuttgart, D-70569 Stuttgart, Germany}  
\date{}
\begin{document}
\maketitle

\begin{abstract} 
The Fermi polaron refers to a system of free fermions interacting with an impurity particle by means of two-body contact forces. Motivated by the physicists' approach to this system, the present article describes a general mathematical framework for defining many-body Hamiltonians with two-body contact interactions by means of a renormali\-zation procedure. In the case of the Fermi polaron the well-known TMS Hamiltonians are shown to emerge. For the Fermi polaron in a box  $[0,L]^2\subset \R^2$ a novel variational principle, established within the general framework, links the low-lying eigenvalues of the system to the zero-modes of a Birman-Schwinger type operator. It allows us to show, e.g., that the \emph{polaron}- and \emph{molecule} energies, computed in the physical literature, are indeed upper bounds to the ground state energy of the system.
\end{abstract}

\section{Introduction}


The Fermi polaron is a popular model in theoretical physics describing a gas of ideal fermions in contact with an impurity particle, the interaction being an attractive point interaction. This model describes, e.g., a sea of fermionic spin-up atoms in contact with a spin down atom of the same species, which is the case of extreme imbalance in the spin population in a gas of fermionic atoms. Formally, the Hamiltonian of the Fermi polaron reads
\begin{equation} \label{Formal_Expression}
   - \frac{1}{M} \Delta_y - \sum_{i = 1}^N \Delta_{x_i} - g \sum_{i = 1}^N \delta(x_i - y),
\end{equation}
where $y$ and $x_i$ denote the positions of the impurity and the fermions, respectively, $M>0$ is the mass of the impurity, $ \delta(x_i - y)$ denotes a Dirac-$\delta$-potential and $g$ plays the role of a coupling constant. Experiments on ultra cold gases of fermionic atoms with imbalanced spin population \cite{Nature} have triggered the interest in the analysis of this model (see e.g. \cite{Bruun_Review,Enss,BruunMassignan,ProkoSvistu,Parish,ParishLevinsen}). The  physicists are interested in the form of the ground state as a function of the coupling strength and in the possibility to observe a so called BEC-BCS crossover. The debate on these issues for two-dimensional systems is part of the motivation for the present work.

In the physics literature the starting point in the analysis of the Fermi polaron is a second quantized version of  \eqref{Formal_Expression} with an ultraviolett cutoff $\Lambda$ imposed on the high momenta involved in the fermion-impurity interaction \cite{CombescotGiraud,Parish}. This cutoff is eventually sent to infinity, while the two-body (one fermion and the impurity) binding energy $E_B$ is kept fixed. The regularized Hamiltonian $H_\Lambda$ is a function of the cutoff $\Lambda$ without an obvious limit as $\Lambda\to \infty$. Nevertheless, the existence of a non-trivial model emerging in this limit is taken for granted by the physicists, and the attention is focused on the form of the ground state as a function of $E_B$. There are two distinct families of variational states, the \emph{polaron} and the \emph{molecule} states, that are considered good approximations to the ground state at weak and strong coupling, respectively \cite{Chevy,ChevyMora,PDZ}.  The stationarity of the energy of these variational states is expressed in terms of non-linear, implicit equations for the Lagrange multiplier associated with the normalization condition. Results of numerical solutions of these equations are taken as evidence that the ground state in the weak coupling regime is well approximated by a polaron state whereas in the strong coupling regime it is better approximated by a molecule state \cite{Parish,ParishLevinsen}.

The present paper is inspired by the work described above and by previous mathematical work on contact interactions \cite{Figari, DR}. Following the approach of \cite{DR}, we develop a general mathematical framework for studying the spectrum of many-particle systems with two-body contact interactions. This framework has the same structure as the singular perturbation theory developed by Posilicano \cite{Pos2001, Pos2008,CFP2018}, but it has a different starting point, hence other hypotheses, and a different focus. It is taylormade for the Fermi polaron in a box $[0,L]^2\subset \R^2$, for which it allows us to establish existence of a self-adjoint Hamiltonian $H_N$ in terms of the resolvent limit, as $\Lambda\to\infty$, of the regularised, second quantised Hamiltonian $H_\Lambda$ restricted to the space $\HH_N$ of $N$ fermions and the impurity. A generalized Birman-Schwinger operator $\phi(z)$, $z\in \C$ acting on a smaller space $\tilde\HH_{N-1}$ plays an important role in this proof and in the subsequent analysis of the spectrum of $H_N$. Let $H_{0,N}$ denote the Hamiltonian of $N$ fermions and the impurity without any interactions. For $z\in \rho(H_{0,N})$ we show there is a bounded operator $B_z\in \LL(\HH_N,\tilde\HH_{N-1})$ such that
\begin{equation}\label{Krein}
     (H_N-z)^{-1} = (H_{0,N}-z)^{-1} + B_{\bar z}^{*}\phi(z)^{-1} B_z
\end{equation}
where $B_z$ is an isomorphism from $\ker(H_N-z)$ to $\ker(\phi(z))$. This allows us to prove for $E<\min\sigma(H_{0,N})$ that
\begin{equation} \label{Birman_Schwinger_Principle}
   \mu_{\ell}(H_N) \leq  E \quad \Leftrightarrow \quad \mu_{\ell} (\phi(E)) \leq 0,
\end{equation}
where $\mu_{\ell}(\cdot)$ denotes the $\ell$th eigenvalue counted from below with multiplicities. Equivalence \eqref{Birman_Schwinger_Principle}, which also holds with strict inequalities, implies that $H_N-E$ and $\phi(E)$ have the same number of negative eigenvalues, which is analog to the familiar Birman-Schwinger principle for the negative eigenvalues of Schr\"odinger operators. In view of \eqref{Birman_Schwinger_Principle} with $\ell=1$, any solution $E$ to an equation $\sprod{w}{\phi(E)w}=0$, with $w\in \tilde\HH_{N-1}\backslash\{0\}$, is an upper bound to the ground state energy $\mu_1(H_N)$ of $H_N$. The vector $w$  is arbitrary and subject to optimisation. Mapping the variational states from the physical literature to the smaller space $\tilde\HH_{N-1}$ we construct analogs of the polaron and molecule states for the Birman-Schwinger operator $\phi(z)$. This mapping reduces the set of coefficients significantly and simplifies the variational computations compared to the work by the physicists. We moreover reproduce the results from the physics literature for the energy of the polaron and the molecule, and by doing so, we prove that these expressions are upper bounds to the ground state energy of the Hamiltonian $H_N$ as defined above.  From the point of view of applications this is the main result of the present paper.  Further applications of our new variational principle are published elsewhere:  in \cite{Diss-Linden} it is shown that the molecule energy is indeed lower than the polaron energy in the limit of large $|E_B|$, in \cite{GriesemerLinden1} stability of the two-dimensional fermi-polaron is established, and in \cite{LindenMitrouskas2018} it is shown the polaron energy correctly describes the ground state energy in the high density limit if the impurity mass is infinite.

	In the mathematical literature many-particle Hamiltonians with contact interactions are usually described in terms of TMS-Hamiltonians, named after Ter-Martirosyan and Skornyakov, which are defined in terms of boundary conditions at the collision planes $x_i=x_j$ for the free Hamiltonian \cite{Five_Italians,Five_Italians_Simplified,Minlos_N+1,Five_Italians_study3,MichPfei2016,MoserSeiringer,MichOtto2017,MichOtto2018,MoSe2018,MoSe2018b}. The focus in most of these papers is on the questions of  self-adjointness and stability, where stability refers to the dependence of the ground state energy on the number $N$ of fermions and the mass $M$ of the impurity.  Qualitative aspects of the spectrum are analyzed in \cite{BMO2018}, and one-dimensional systems of three particles with point-interactions are shown to be resolvent limits of scaled Schr\"odinger operators in \cite{Basti2018}.  - All the above work on many-particle TMS-Hamiltonians, with the exception of \cite{Basti2018}, is based on a construction of the Hamiltonian in terms of semi-bounded closed form described for the first time in \cite{Figari}.  This quadratic form can be seen as the $\Gamma-$limit of approximating forms with a UV-cutoff in the relative particle momenta \cite{Figari}, but it can also be written down directly, see e.g.~\cite{Five_Italians_Simplified}. An alternative approach for defining two-body-delta-interactions was presented in \cite{DR} for the system of bosons in 2d. In this new approach the Hamiltonian is defined as a strong resolvent limit of  UV-regularized, second quantized Hamiltonians. For $N=2$ it was also shown that this new approach leads to TMS Hamiltonians but for  $N>2$ this question was left open.  While $\Gamma-$convergence is closely related to strong resolvent convergence \cite{DalMaso}, and a UV regularization played an important role both in \cite{Figari} and \cite{DR}, it is far from obvious whether the two construction lead to the same Hamiltonian. It is one of the main objectives of the present paper to clarify this point for the system of our main concern, the Fermi-polaron in two dimensions.

All the questions addressed in this paper for the 2d Fermi polaron in a square box with periodic boundary conditions can equally be studied with other boundary conditions, other traps, and in three-dimensional space. This can be done with the tools developed in the present paper, as the abstract part, Sections  \ref{sec:BS-operator} - \ref{Sect:Variational_Principle}, is independent of such model characteristics.  Our choice of square boxes with periodic boundary conditions follows the physical literature and it is motivated by our search for the meaning of the polaron and molecule equations.   
The exclusion of three dimensions avoids the so called Thomas effect, a spectral phenomenon that occurs for small values of the impurity mass $M$ \cite{Five_Italians}. Proving absence of a Thomas effect for large $M$ amounts to serious technical difficulties with the verification of the hypotheses of \Cref{Convergence_Theorem} (see Section~\ref{Sect:Regularization_Schemes}). But there are no principle obstacles, and  based on \cite{MoserSeiringer} we conjecture that the statement of \Cref{Regularization_Theorem} holds in three dimensions as well provided that $M>0.36$. 

This paper is organized as follows: Section \ref{Sect:Fermi_Polaron_Box} first describes the regularized Hamiltonian of the Fermi polaron in second quantized form. Then, the free parameter $E_B<0$ in this Hamiltonian is shown to agree with the ground state energy in the two-body subspace of the sector of vanishing total momentum. Sections  \ref{sec:BS-operator} - \ref{Sect:Variational_Principle} are devoted to Hamiltonians that are given in terms of  resolvent limits of sequences of semi-bounded self-adjoint operators of the general form $H_n=H_0-g_n A_n^{*}A_n$, $n\in \N$.  The equivalence \eqref{Birman_Schwinger_Principle} is established in this general setting. The two-dimensional Fermi polaron fits into this general framework, as we show in Sections~ \ref{Sect:Regularization_Schemes} and \ref{sec:all-space}. Section~\ref{sec:all-space} shows, in addition,  that vectors in the domain of the Hamiltonian are characterized by the TMS condition.
In Section \ref{Sect:Pol_and_Mol_Equation} we derive the polaron and the molecule equations from the physical literature and we use \eqref{Birman_Schwinger_Principle} to prove that the solutions to these equations are upper bounds to the ground state energy of $H_N$. Finally, Section~\ref{sec:two-species} explains how the more general class of systems consisting of $N_1+N_2$ particles from two species of fermions fits into the abstract framework of Sections~\ref{sec:BS-operator} - \ref{Sect:Variational_Principle}.


\section{The regularized Hamiltonian in second quantization}
 \label{Sect:Fermi_Polaron_Box}

This paper is mainly concerned with a system of $N$ identical fermions and a single impurity in a two-dimensional box $\Omega = [0,L]^2$ with periodic boundary conditions. The Hilbert space of this system is given by
\begin{equation} \label{Representation_H_1}
   \HH_N := L^2(\Omega) \otimes \bigwedge\nolimits^{\!N} L^2(\Omega).
\end{equation}
Since we work in second quantization, we consider $\HH_N$ as a subspace of $\FF\otimes\FF$, where $\FF$ is the antisymmetric Fock space over $L^2(\Omega)$, and we define the regularized Hamiltonian on $\FF\otimes\FF$. To this end we need an ONB of $L^2(\Omega)$. In view of the periodic boundary conditions, a suitable orthonormal basis of $L^2(\Omega)$ is given by the functions 
$$
   \varphi_k(x) := \frac{e^{ikx}}{L} \quad \text{for} \: k \in \kappa \Z^2,\quad  \kappa := \frac{2\pi}{L}.
$$
The corresponding fermionic annihilation and creation operators are denoted by $a_k$ and  $a^{*}_k$ in the case of the fermions, and by $b_k$ and  $b^{*}_k$ in the case of the impurity particle. 

The second quantization of  \eqref{Formal_Expression} expressed in terms of $a_k, a^{*}_k$ and  $b_k, b^{*}_k$ is not a well-defined operator. In the physics literature this problem is solved by dropping terms in the interaction part with momentum $k \in \kappa \Z^2$ of magnitude larger than some cutoff $\Lambda$. The precise form of the cutoffs is immaterial, as we will see in this paper. We work with a general class of cutoffs in terms of two functions $\alpha,\beta : \kappa \Z^2 \to [0,1]$ subject to the condition that 
\begin{equation} \label{First_Condition_Regularization}
   C(\alpha,\beta) := \sup_{q \in \kappa \Z^2} \sum_k |\alpha(k) \beta(q-k)|^2 < \infty.
\end{equation}
Here and in the following all sums run over the elements of the momentum lattice $\kappa \Z^2$. Condition \eqref{First_Condition_Regularization} will allow us to show in Lemma~\ref{Regularized_Interaction} that the following expressions give well-defined operators on $\HH_N$, for any $N\in\N$. We set 
\begin{equation} \label{H_alpha_beta}
   H_{\alpha,\beta}:= H_0 - g_{\alpha,\beta} W_{\alpha,\beta}
\end{equation}
where
\begin{align}
       H_0 &:= \sum_k k^2 (a^*_k a_k + \tfrac{1}{M} b^*_k b_k), \label{Definition_H_0}\\
       W_{\alpha, \beta} &:= \sum_{k,l,q} \alpha(k) \alpha(l) \beta(q-k) \beta(q-l) \: a_k^* \: b_{q-k}^* \: b_{q-l} \: a_l \label{W-ab}
\end{align}
and 
\begin{equation} \label{CouplingConstant}
   g_{\alpha,\beta}^{-1} = \sum_{k} \frac{\alpha(k)^2 \beta(-k)^2}{(1+\frac{1}{M}) k^2 - E_B}.
\end{equation}
The number $E_B < 0$ is a free parameter that determines the coupling strength. By our choice of  $g_{\alpha,\beta}$, $E_B$  is the ground state energy of $H_{\alpha,\beta}$ in the two-body sector $\HH_{N=1}$ with total momentum zero. This is proved below.

It is essential for our analysis, that $W_{\alpha,\beta}=V_{\alpha,\beta}^{*}V_{\alpha,\beta}$ on $\HH_N$ with a suitable operator $V_{\alpha,\beta}:\HH_N \to \tilde\HH_{N-1}$. This operator on $\FF\otimes \FF$ is given by 
\begin{equation}\label{V-ab}
   V_{\alpha, \beta} := \sum_{k,q} \alpha(k) \beta(q-k) \: m_q^* \: b_{q-k} \: a_k,
\end{equation}
where $m_q$ is another notation for $b_q$ stressing the distinct role of the particle, called \emph{angel} in \cite{DR}, that is created by $V_{\alpha, \beta}$. Similarly, $\tilde\HH_{N-1}$ is another notation for $\HH_{N-1}$ reminding us that the additional particle is the angel and not the impurity. The motivation for distinguishing the angel from the impurity becomes fully clear Section~\ref{sec:two-species}, where we discuss the generalization to systems of $N_1+N_2$ fermions. Then $m_q\neq b_q$. 

The desired identity $W_{\alpha, \beta} = V_{\alpha, \beta}^{*} V_{\alpha, \beta}$ follows from the fact that $m_p m_q^* = \delta_{p,q}$ on the vacuum sector.

\begin{lemma} \label{Regularized_Interaction}
Let $\alpha,\beta$ be real-valued functions on $\kappa\Z^2$ satisfying \eqref{First_Condition_Regularization}. Then, for each $N\in\N$, the series \eqref{W-ab} and \eqref{V-ab} define bounded sesquilinear forms on $\HH_N\times\HH_N$ and  $\tilde\HH_{N-1}\times\HH_N$, respectively. The corresponding operators $W_{\alpha,\beta}\in \LL(\HH_N)$ and  $V_{\alpha,\beta}\in \LL(\HH_N,\tilde\HH_{N-1})$ obey,
\begin{itemize}
\item[(1)] $\| V_{\alpha, \beta}\| \leq \sqrt{N} \cdot C(\alpha,\beta)^{1/2}$,
\item[(2)] $W_{\alpha,\beta} =V_{\alpha,\beta}^{*}V_{\alpha,\beta}.$
\end{itemize}
\end{lemma}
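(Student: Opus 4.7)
My plan is to establish the norm bound in (1) on $V_{\alpha,\beta}$ first by a Cauchy--Schwarz estimate of its matrix element, and then derive both the factorization (2) and the boundedness of $W_{\alpha,\beta}$ as consequences. Throughout, I would interpret the formal series in \eqref{W-ab}--\eqref{V-ab} as sesquilinear forms on the dense subspace of vectors with only finitely many nonzero Fourier modes, where every sum is finite, and extend them by continuity once the bound in (1) is in place.

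For (1), start from the matrix element
\begin{equation*}
v(\phi,\psi) := \sum_{k,q}\alpha(k)\beta(q-k)\langle m_q\phi,\, b_{q-k}a_k\psi\rangle,\qquad \phi\in\tilde\HH_{N-1},\ \psi\in\HH_N.
\end{equation*}
Peeling off the outer sum over $q$ and applying Cauchy--Schwarz in $\ell^2$ yields
\begin{equation*}
|v(\phi,\psi)| \leq \Bigl(\sum_q \|m_q\phi\|^2\Bigr)^{1/2}\Bigl(\sum_q \|X_q\psi\|^2\Bigr)^{1/2},\qquad X_q\psi := \sum_k \alpha(k)\beta(q-k)\,b_{q-k}a_k\psi.
\end{equation*}
The first factor equals $\|\phi\|$ since $\sum_q m_q^*m_q$ is the angel number operator, which acts as the identity on $\tilde\HH_{N-1}$. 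For the second, I would decompose $\psi = \sum_p b_p^*\chi_p$ into purely fermionic $N$-particle components $\chi_p$; because only the summand with impurity momentum $q-k$ survives $b_{q-k}a_k\psi$, one gets $b_{q-k}a_k\psi = \pm a_k\chi_{q-k}$. A second Cauchy--Schwarz, combined with \eqref{First_Condition_Regularization}, gives $\|X_q\psi\|^2 \leq C(\alpha,\beta)\sum_k\|a_k\chi_{q-k}\|^2$. Summing over $q$, reindexing by $p := q-k$, and invoking $\sum_k a_k^*a_k = N$ on the $N$-fermion sector produces $\sum_q\|X_q\psi\|^2 \leq N\,C(\alpha,\beta)\|\psi\|^2$. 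Combining the two estimates yields $|v(\phi,\psi)| \leq \sqrt{N}\,C(\alpha,\beta)^{1/2}\|\phi\|\|\psi\|$, and the Riesz representation theorem then produces $V_{\alpha,\beta}\in\LL(\HH_N,\tilde\HH_{N-1})$ with the asserted bound.

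For (2), I would expand $\langle V_{\alpha,\beta}\phi, V_{\alpha,\beta}\psi\rangle$ for $\phi,\psi\in\HH_N$ and use the adjoint identity $\langle m_q^*A, m_{q'}^*B\rangle = \langle A, m_qm_{q'}^*B\rangle$. Since both $A = b_{q-k}a_k\phi$ and $B = b_{q'-l}a_l\psi$ lie in the zero-angel sector, the (anti-)commutation relation $m_qm_{q'}^* = \delta_{q,q'} \mp m_{q'}^*m_q$ reduces to $\delta_{q,q'}$ on these vectors, collapsing the $q,q'$ double sum to a single sum that reproduces precisely the sesquilinear form associated to \eqref{W-ab}. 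This proves $W_{\alpha,\beta} = V_{\alpha,\beta}^* V_{\alpha,\beta}$, and in particular $\|W_{\alpha,\beta}\| \leq \|V_{\alpha,\beta}\|^2 \leq N\,C(\alpha,\beta)$.

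The main obstacle is bookkeeping rather than any single deep step: the formal manipulations of infinite sums and of the CAR/CCR have to be performed first on vectors of finite momentum support, where absolute convergence is automatic, and only then extended by the continuity derived above. Signs arising from the (anti-)commutation of the fermionic operators $a$ with the impurity and angel operators $b,m$ do not affect any of the norm estimates.
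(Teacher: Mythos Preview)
Your proof is correct and follows essentially the same strategy as the paper: two applications of Cauchy--Schwarz together with the number-operator identities $\sum_q m_q^*m_q=1$ on $\tilde\HH_{N-1}$ and $\sum_k a_k^*a_k=N$ on $\HH_N$, followed by the observation $m_q m_{q'}^*=\delta_{q,q'}$ on the angel vacuum to obtain the factorization $W_{\alpha,\beta}=V_{\alpha,\beta}^*V_{\alpha,\beta}$. The only cosmetic differences are that the paper applies Cauchy--Schwarz first in $k$ and then in $q$ (rather than the reverse), and it uses the identity $\sum_{k,q}\|b_{q-k}a_k\psi\|^2=N\|\psi\|^2$ directly instead of your explicit decomposition $\psi=\sum_p b_p^*\chi_p$; both routes are equivalent.
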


\begin{proof}
The series defining $\sprod{w}{ V_{\alpha, \beta}\psi}$ for $w \in \tilde\HH_{N-1}$ and $\psi \in \HH_N$ is absolutely convergent. Indeed,
\begin{eqnarray*}
      \lefteqn{\sum_{k,q} \alpha(k) \beta(q-k) |\sprod{m_q w}{b_{q-k} a_k\psi}|}\\
                   & &\leq \sum_{k,q} |\alpha(k) \beta(q-k)| \cdot \norm{m_q w} \cdot \norm{b_{q-k} a_k \psi} \\
                    &&\leq \sum_q \left( \sum_{k} |\alpha(k) \beta(q-k)|^2 \right)^{1/2} \cdot \left( \sum_k \norm{b_{q-k} a_k \psi}^2 \right)^{1/2} \cdot \norm{m_q w} \\
                    &&\leq C(\alpha,\beta)^{1/2} \cdot \sum_{q} \left( \sum_k \norm{b_{q-k} a_k \psi}^2 \right)^{\! 1/2} \norm{m_q w} \\
                    &&\leq C(\alpha,\beta)^{1/2} \cdot \left( \sum_{k,q} \norm{b_{q-k} a_k \psi}^2 \right)^{\! 1/2} \cdot \left( \sum_q \norm{m_q w}^2 \right)^{\! 1/2} \\
                    &&= C(\alpha,\beta)^{1/2} \cdot \sqrt{N} \cdot \norm{\psi} \cdot \norm{w}.
\end{eqnarray*}
In a similar way one shows that the series defining $\sprod{\ph}{W_{\alpha,\beta}\psi}$ for $\ph,\psi\in \HH_N$ is absolutely convergent. This allows us, in particular, to exchange summands in this series at will.  Using that $m_q m_p^* = \delta_{q,p}$ on the vacuum sector we obtain,
\begin{align*}
    \sprod{\ph}{W_{\alpha,\beta}\psi} &= \sum_{k,\ell}\sum_q \alpha(k)\alpha(\ell)\beta(q-k)\beta(q-\ell)\sprod{b_{q-k}a_k\ph}{b_{q-\ell}a_\ell\psi}\\
    &= \sum_{k,\ell}\sum_{q}\sum_{p} \alpha(k)\alpha(\ell)\beta(q-k)\beta(p-\ell)\sprod{m_q^{*}b_{q-k}a_k\ph}{m_p^{*}b_{p-\ell}a_\ell\psi}\\
    &= \sum_{k,q} \alpha(k)\beta(q-k) \sprod{m_q^{*}b_{q-k}a_k\ph}{V_{\alpha,\beta}\psi}\\
    & = \sprod{V_{\alpha,\beta}\ph}{V_{\alpha,\beta}\psi},
\end{align*}
which completes the proof.
\end{proof}

It remains to explain the choice \eqref{CouplingConstant} for  $g_{\alpha,\beta}$. As pointed out above, the regularized Hamiltonian $H_{\alpha, \beta}$ is of the form $H_0 - g V_{\alpha, \beta}^* V_{\alpha, \beta}$ and hence  \Cref{General_Connection_H_Phi} from the next section applies. By this proposition, $\lambda<0$ is an eigenvalue of $H_{\alpha, \beta}$ on $\HH_{N=1}$ if and only if the operator $\phi(\lambda) := g_{\alpha, \beta}^{-1} - V_{\alpha, \beta} (H_0 - \lambda)^{-1} V_{\alpha,\beta}^*$ has a non-trivial kernel in the target space $\tilde\HH_{N=0}=L^2(\Omega)$ of $V_{\alpha,\beta}\rst \HH_{N=1}$. Upon normal ordering with the help of the pull-through formulas \eqref{Pull_Through_Formula}, we obtain 
\begin{align*}
   \phi(\lambda)\rst \tilde\HH_{N=0} &= \sum_q f(\lambda,q) m_q^* m_q,\\
    f(\lambda,q) &:=  g_{\alpha,\beta}^{-1} - \sum_k \frac{\alpha(k)^2 \beta(q-k)^2}{\frac{1}{M}(q-k)^2 + k^2 - \lambda}.
\end{align*}
The kernel of this operator is non-trivial if and only if $f(\lambda,q)=0$ for some $q$. Then  $m_q^* \ket{\text{vac}} \in \ker(\phi(\lambda))$ and, by  \Cref{General_Connection_H_Phi},
$$
   R_0(\lambda) V_{\alpha,\beta}^* m_q^* \ket{\text{vac}} = \sum_k \alpha(k) \beta(q-k) R_0(\lambda) a_k^* b_{q-k}^* \ket{\text{vac}}
$$
is the corresponding eigenvector of $H_{\alpha, \beta}\rst \HH_{N=1}$. This is a state of total momentum $q$. From the monotonicity of $\lambda\mapsto f(\lambda,q)$ on $\R_{-}$ it is clear, that $f(\lambda,q)=0$ has at most one solution $\lambda<0$, and, by \eqref{CouplingConstant}, $f(E_B,0)=0$. It follows that $E_B$ is the only negative eigenvalue and hence the ground state of $H_{\alpha, \beta}\rst \HH_{N=1}$ in the sector of momentum $q=0$.


\section{Schur complements and the generalized Birman-Schwinger operator}
\label{sec:BS-operator}

In this section we give a general discussion of operators of the form
\begin{equation}\label{General_Form_Operator}
   H = H_0 - g A^* A,
\end{equation}
where $H_0$ is a positive operator on a Hilbert space $\HH$, $g > 0$ is a positive coupling constant, and $A \in \LL(\HH,\tilde\HH)$, where $\tilde\HH$ is another Hilbert space, possibly different from $\HH$. The regularized Hamiltonians defined in the previous section and in Section~\ref{sec:all-space} are of this form. Another instructive and important example of operators of the type \eqref{General_Form_Operator} is given at the end of this section.

For $z \in \rho(H_0)$ we define an operator $\phi(z): \tilde\HH \to \tilde\HH$ by
\begin{equation}\label{Birman_Schwinger_Operator}
   \phi(z) := g^{-1} - A R_0(z) A^*, 
\end{equation}
where
$$
   R_0(z) := (H_0 - z)^{-1}.
$$
The operator \eqref{Birman_Schwinger_Operator} will be called the (generalized) Birman-Schwinger operator of $H$ at the point $z$.
This is justified by the analogy with the Birman-Schwinger operator for Schr\"odinger operators, which would correspond to $A R_0(z) A^*$, and by the following proposition.

\begin{prop}\label{General_Connection_H_Phi}
\text{}
\begin{enumerate}
 \item[(a)] Let $z \in \rho(H_0)$. Then,
$$
   z \in \rho(H) \: \Leftrightarrow \: 0 \in \rho(\phi(z)),
$$
and the resolvents $R(z) := (H-z)^{-1}$ and $\phi(z)^{-1}$ are connected by the equations
\begin{align}
 R(z) &= R_0(z) + R_0(z) A^* \phi(z)^{-1} A R_0(z), \label{Resolvent}\\
 \phi(z)^{-1} &= g + g^2 A R(z) A^*. \label{Inverse_of_Phi}
\end{align}
 \item[(b)] $z\in \rho(H_0)$ is an eigenvalue of $H$ if and only if $0$ is an eigenvalue of $\phi(z)$. Moreover,
\begin{align*}
   A: \ker(H-z) \to \ker(\phi(z)) \\
   R_0(z) A^*: \ker(\phi(z)) \to \ker(H-z)
\end{align*}
are isomorphisms.
\end{enumerate}
\end{prop}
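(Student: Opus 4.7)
The proposition will be proved by direct algebraic verification of the two resolvent formulas \eqref{Resolvent} and \eqref{Inverse_of_Phi}; once these are established, both parts follow as bookkeeping. Note first that $H=H_0-gA^{*}A$ is self-adjoint on $D(H)=D(H_0)$, because $gA^{*}A$ is bounded and self-adjoint, so $\rho(H)$ is well defined. The one algebraic fact used repeatedly is
\[
    A R_0(z) A^{*} \;=\; g^{-1} - \phi(z),
\]
which allows every occurrence of $gA^{*}A R_0(z) A^{*}$ to be rewritten as $A^{*}(I-g\phi(z))$.

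For part (a), to prove $0\in\rho(\phi(z))\Rightarrow z\in\rho(H)$, I would introduce the candidate
\[
    R \;:=\; R_0(z) + R_0(z) A^{*} \phi(z)^{-1} A R_0(z)
\]
and verify $(H-z)R=I$ on $\HH$ and $R(H-z)=I$ on $D(H_0)$ by plain expansion, using the identity above to cancel the cross terms. Since $R_0(z)$ maps $\HH$ into $D(H_0)$ and $A^{*}$ is bounded, the range of $R$ lies in $D(H_0)=D(H)$, so $(H-z)R$ is well defined. For the converse $z\in\rho(H)\Rightarrow 0\in\rho(\phi(z))$, I would verify $\phi(z)\bigl(g+g^{2}AR(z)A^{*}\bigr)=I$ on $\tilde\HH$, and symmetrically from the right. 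This reduces, via the second resolvent identity $R(z)-R_0(z)=gR_0(z)A^{*}A R(z)$ (a one-line consequence of $H-H_0=-gA^{*}A$), to a cancellation of two identical terms.

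For part (b), I would check that $A$ and $R_0(z)A^{*}$ send the respective kernels into each other and are mutual inverses up to the factor $g$. If $(H-z)\psi=0$, then $(H_0-z)\psi=gA^{*}A\psi$, whence $\psi=gR_0(z)A^{*}A\psi$; applying $A$ yields $\phi(z)A\psi=0$, and the same relation shows that $A\psi=0$ forces $\psi=0$. Conversely, if $\phi(z)w=0$ and $\psi:=R_0(z)A^{*}w$, then a short computation gives
\[
    (H-z)\psi \;=\; A^{*}w - gA^{*}AR_0(z)A^{*}w \;=\; A^{*}\bigl(g\phi(z)w\bigr) \;=\; 0,
\]
and $A\psi=AR_0(z)A^{*}w=g^{-1}w$. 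Hence $A$ and $gR_0(z)A^{*}$ are mutually inverse between $\ker(H-z)$ and $\ker(\phi(z))$, establishing both isomorphisms.

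No serious obstacle is expected: the argument is entirely algebraic, and the only point demanding care is the domain question, namely that $R_0(z)A^{*}$ maps $\tilde\HH$ into $D(H_0)$ so that $H-z$ may legitimately be applied to $\psi$. This is automatic from the boundedness of $A^{*}$ and the range property of $R_0(z)$, and will be noted in passing rather than proved.
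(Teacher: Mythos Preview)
Your proof is correct, but it proceeds by a genuinely different route from the paper's. The paper introduces the $2\times 2$ block operator
\[
   \tilde H(z)=\begin{pmatrix} H_0-z & A^{*}\\ A & g^{-1}\end{pmatrix}
\]
and factorizes it in two ways via Schur complements: once with $H-z$ on the diagonal, once with $\phi(z)$. The equivalence $z\in\rho(H)\Leftrightarrow 0\in\rho(\phi(z))$ is then immediate (both being equivalent to invertibility of $\tilde H(z)$), and the two resolvent formulas \eqref{Resolvent}, \eqref{Inverse_of_Phi} drop out by comparing the two resulting expressions for $\tilde H(z)^{-1}$. Part (b) is likewise read off from the kernel of $\tilde H(z)$ via the two factorizations.

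Your approach bypasses the block matrix entirely and verifies the resolvent identities by direct expansion, pivoting on the single relation $AR_0(z)A^{*}=g^{-1}-\phi(z)$ together with the second resolvent identity. This is more elementary in the sense that it needs no auxiliary construction, and each step is a transparent cancellation. The paper's approach, on the other hand, explains \emph{why} the two objects are linked: $H-z$ and $\phi(z)$ are the two Schur complements of one and the same operator, so their invertibility and kernels are coupled structurally rather than coincidentally. Your part (b) is arguably cleaner than the paper's, since you exhibit the explicit mutual inverses $A$ and $gR_0(z)A^{*}$ between the kernels, whereas the paper extracts both from the block kernel description.
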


\begin{proof}
To prove (a) we define the block operator
$$
   \tilde{H}(z) = \left( \begin{array}{cc} H_0 - z & A^* \\ A & g^{-1} \end{array} \right) : D(H_0) \oplus \tilde\HH \to \HH \oplus \tilde\HH
$$
The following identities are straightforward to verify:
\begin{align}
   \tilde{H}(z) &= \left( \begin{array}{cc} 1 & gA^* \\ 0 & 1 \end{array} \right) \cdot \left( \begin{array}{cc} H - z & 0 \\ 0 & g^{-1} \end{array} \right) \cdot \left( \begin{array}{cc} 1 & 0 \\ gA & 1 \end{array} \right), \label{Matrix_Decomposition_1}\\
   \tilde{H}(z) &= \left( \begin{array}{cc} 1 & 0 \\ A R_0(z) & 1 \end{array} \right) \cdot \left( \begin{array}{cc} H_0 - z & 0 \\ 0 & \phi(z) \end{array} \right) \cdot \left( \begin{array}{cc} 1 & R_0(z)A^* \\ 0 & 1 \end{array} \right). \label{Matrix_Decomposition_2}
\end{align}
We see that $H-z$ is the first Schur complement of $\tilde{H}(z)$ while $\phi(z)$ is the second Schur complement (cf.~\cite{Tretter}). The triangular block operators with identities on the diagonal have bounded inverses, which are obtained by changing the sign of the off-diagonal terms. From \eqref{Matrix_Decomposition_1} and \eqref{Matrix_Decomposition_2} we can read off
$$
   0 \in \rho(\tilde{H}(z)) \Leftrightarrow z \in \rho(H)
$$
and
$$
   0 \in \rho(\tilde{H}(z)) \Leftrightarrow 0 \in \rho(\phi(z)),
$$
respectively. We combine both statements and obtain
$$
   z \in \rho(H) \Leftrightarrow 0 \in \rho(\phi(z)).
$$
In this case we can invert both sides of expressions \eqref{Matrix_Decomposition_1} and \eqref{Matrix_Decomposition_2}.
\begin{align*}
   \tilde{H}(z)^{-1} &= \left( \begin{array}{cc} R(z) & -g R(z) A^* \\ -g A R(z) & g + g^2 A R(z) A^* \end{array} \right) \\
   \tilde{H}(z)^{-1} &= \left( \begin{array}{cc} R_0(z) + R_0(z) A^* \phi(z)^{-1} A R_0(z) & -g A^* \phi(z)^{-1} \\ -\phi(z)^{-1} A R_0(z) & \phi(z)^{-1} \end{array} \right) 
\end{align*}
A comparison of the two equations yields \eqref{Resolvent} and \eqref{Inverse_of_Phi} and the proof of (a) is complete.

From \eqref{Matrix_Decomposition_1}, \eqref{Matrix_Decomposition_2}, from the invertibility of the triangular block operators and the invertibility of $H_0-z$ we obtain the equivalences
$$
   \ker(H-z) \neq \{ 0 \} \quad \Leftrightarrow \quad \ker(\tilde{H}(z)) \neq \{ 0 \} \quad \Leftrightarrow \quad \ker(\phi(z)) \neq \{ 0 \},
$$
and
\begin{align*}
   & \left( \begin{array}{c} \psi \\ w \end{array} \right) \in \ker(\tilde{H}(z)) \subseteq D(H_0) \oplus \tilde\HH \\
   \Leftrightarrow \qquad & \psi \in \ker(H-z) \: \land \: w + g A \psi = 0 \\
   \Leftrightarrow \qquad & w \in \ker(\phi(z)) \: \land \: \psi + R_0(z) A^* w = 0.
\end{align*}
This proves (b).
\end{proof}

\noindent
\textbf{Example.} Let $H_0$ be a positive operator on some Hilbert space $\HH$, let $\eta\in \HH\backslash\{0\}$ and let $H=H_0-g|\eta\rangle\langle\eta|$. Then $H$ is the special case of \eqref{General_Form_Operator}, where $A \in \LL(\HH, \C)$ is given by $A \psi = \sprod{\eta}{\psi}$ and hence  $A^* \psi \in \LL(\C, \HH)$ with $A^* c = c \cdot \eta$ for $c \in \C$. It follows that 
$$
       \phi(z) = g^{-1} -\sprod{\eta}{(H_0-z)^{-1}\eta},
$$
and that $\lambda\in \rho(H_0)$ is an eigenvalue of $H$ if and only if $\phi(\lambda)=0$. It is then a straightforward computation to verify that 
$$
     (H_0-\lambda)^{-1}\eta
$$
is an eigenvector of $H$ associated with $\lambda$. By \Cref{General_Connection_H_Phi}, the resolvent of $H$ for $z\in \rho(H)\cap\rho(H_0)$ is given by 
$$
      R(z) = R_0(z) + \phi(z)^{-1} R_0(z)|\eta\rangle\langle\eta| R_0(z).
$$

\section{The Hamiltonian in the strong resolvent limit}
\label{Sect:Strong_Resolvent_Limit}

Now we consider sequences $(H_n)_{n \in \N}$ of operators of the form \eqref{General_Form_Operator} and we establish sufficient conditions for the strong resolvent convergence of such sequences. \Cref{Convergence_Theorem}, below, is the key tool for our construction of the Hamiltonian of the Fermi polaron in a two-dimensional box (\Cref{Sect:Regularization_Schemes}). As a preparation we first prove:

\begin{lemma}\label{Convergence_Inverse}
 Let $T_n,T:D\subset \HH\to \HH$ be essentially self-adjoint operators and suppose that  $T_n \geq c>0$ for all $n \in \N$ and some $c\in\R$. If $T_n \psi \to T \psi$ as $n \to \infty$  for all $\psi \in D$, then $\overline{T} \geq c$ and
$$
   \overline{T}_n^{-1} \to \overline{T}^{-1} \qquad (n \to \infty)
$$
in the strong operator topology.
\end{lemma}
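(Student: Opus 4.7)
The plan is first to establish that $\overline{T}$ inherits the lower bound $c$, which makes both $\overline{T}$ and every $\overline{T}_n$ boundedly invertible with $\|\overline{T}^{-1}\|, \|\overline{T}_n^{-1}\| \leq 1/c$, and then to prove strong convergence of the inverses by a standard core-approximation argument.

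For the lower bound, I would observe that for any $\psi\in D$ the hypothesis $T_n\ge c$ gives $\sprod{\psi}{T_n\psi}\ge c\|\psi\|^2$; passing to the limit with the help of $T_n\psi\to T\psi$ yields $\sprod{\psi}{T\psi}\ge c\|\psi\|^2$. Since $D$ is a core for $\overline{T}$ (and for each $\overline{T}_n$) by essential self-adjointness, this quadratic form bound extends to the full domain, so $\overline{T}\ge c$ and $\overline{T}_n\ge c$. Because self-adjoint operators bounded below by $c>0$ are invertible with operator-norm bound $1/c$ on the inverse, $0\in\rho(\overline{T})\cap\bigcap_n\rho(\overline{T}_n)$ and the inverses are uniformly bounded by $1/c$.

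For the strong convergence, fix $\ph\in\HH$ and set $\chi:=\overline{T}^{-1}\ph\in D(\overline{T})$. Using that $D$ is a core for $\overline{T}$, choose a sequence $(\psi_k)\subset D$ with $\psi_k\to\chi$ and $T\psi_k\to\overline{T}\chi=\ph$. Since $\psi_k\in D\subset D(\overline{T}_n)$, the identity $\overline{T}_n^{-1}\ph-\psi_k=\overline{T}_n^{-1}(\ph-T_n\psi_k)$ together with the uniform bound $\|\overline{T}_n^{-1}\|\le 1/c$ gives
$$
\|\overline{T}_n^{-1}\ph-\chi\| \;\leq\; \tfrac{1}{c}\|\ph-T\psi_k\| + \tfrac{1}{c}\|T\psi_k-T_n\psi_k\| + \|\psi_k-\chi\|.
$$
Given $\eps>0$, I would fix $k$ large enough that the first and third terms on the right are each less than $\eps$, and then use the pointwise convergence $T_n\psi_k\to T\psi_k$ (for that fixed $k$) to send $n\to\infty$ and make the middle term small. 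This yields $\overline{T}_n^{-1}\ph\to\overline{T}^{-1}\ph$.

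I do not expect any serious obstacle: the only subtle point is the careful use of essential self-adjointness of $\overline{T}$ to produce an approximating sequence $\psi_k\in D$ that converges both in norm and under $T$, which is precisely what turns the $1/c$ estimate into strong convergence once $n$ is sent to infinity with $k$ fixed.
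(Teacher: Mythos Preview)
Your proof is correct and follows essentially the same approach as the paper: pass the lower bound $c$ to the limit, infer uniform boundedness of the inverses, and then exploit that $D$ is a core for $\overline{T}$. The paper streamlines your $\eps/3$ step slightly by noting that, since $\overline{T}_n^{-1}$ is uniformly bounded, it suffices to check convergence on the dense set $\ran T=T(D)$, where for $\psi=T\ph$ one has directly $\overline{T}_n^{-1}\psi-\overline{T}^{-1}\psi=\overline{T}_n^{-1}(T\ph-T_n\ph)\to 0$.
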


\begin{proof}
From $T_n\geq c$ and $T_n\psi \to T\psi$ it follows that $T\geq c$. Passing to the closures we see that
\begin{equation}\label{CI-1}
      \overline{T}_n\geq c\quad \text{and}\quad \overline{T}\geq c.
\end{equation}
Since $\overline{T}$ is self-adjoint it follows that $0\in \rho(\overline{T})$ and hence that $\ran\overline{T}=\HH$, which implies that $\ran T$ is dense.  Since, by \eqref{CI-1}, $\overline{T}_n^{-1}$ is uniformly bounded, it suffices to prove the desired convergence on a dense subspace such as $\ran T\subset\HH$. For $\psi = T\ph$ with $\ph\in D$ we have 
\begin{align*}
  \overline{T}_n^{-1}\psi - \overline{T}^{-1}\psi &= \overline{T}_n^{-1}(T\ph) -\ph
  = \overline{T}_n^{-1}(T\ph -T_n\ph) \to 0, \quad (n\to\infty).
\end{align*}
\end{proof}

\begin{thm}\label{Convergence_Theorem}
 Let $(A_n)_{n \in \N}$ be a sequence in $\LL(\HH, \tilde\HH)$, let $(g_n)_{n \in \N}$ be a sequence of positive numbers, and let $$\phi_n(z) := g_n^{-1} - A_n R_0(z) A_n^*\qquad\text{for}\ z\in \rho(H_0).$$ Suppose there exists a number $\Estar<0$ such that following hypotheses are satisfied.
\begin{enumerate}[font=\normalfont, label={(\textrm{\alph*})}]
 \item The limit $B_\Estar := \lim_{n\to\infty} A_n R_0(\Estar)$ exists in $\LL(\HH, \tilde\HH)$.
 \item There is a dense subspace $D \subseteq \tilde\HH$ and an essentially self-adjoint operator $\phi(\Estar): D \to \tilde\HH$ such that for $\psi\in D$, $\phi_n(\Estar)\psi \to \phi(\Estar)\psi$ as $n \to \infty$.
 \item There is a positive number $c > 0$ such that $\phi_n(\Estar) \geq c$ for all $n \in \N$.
\end{enumerate}
Then, the sequence $H_n := H_0 - g_n A_n^* A_n$ has a limit $H: D(H) \to \HH$ in the strong resolvent sense.
The operator $H$ is self-adjoint, $H>\Estar$, and 
\begin{equation}\label{Resolvent_Limit_Operator}
   (H - \Estar)^{-1} = R_0(\Estar) + B_\Estar^* \phi(\Estar)^{-1} B_\Estar.
\end{equation}
\end{thm}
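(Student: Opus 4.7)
The plan is to realize each $H_n$ as a self-adjoint operator on $D(H_0)$ (with $g_n A_n^{*}A_n$ a bounded self-adjoint perturbation) and then to invoke \Cref{General_Connection_H_Phi}. Hypothesis~(c) gives $0 \in \rho(\phi_n(\Estar))$, so the proposition yields $\Estar \in \rho(H_n)$ together with
$$R_n(\Estar) = R_0(\Estar) + B_{\Estar,n}^{*}\phi_n(\Estar)^{-1}B_{\Estar,n},$$
where $B_{\Estar,n} := A_n R_0(\Estar)$ and $R_n(z) := (H_n - z)^{-1}$. Since $\phi_n(\Estar)^{-1}\geq 0$ and $R_0(\Estar) > 0$, the right-hand side is positive self-adjoint; hypothesis~(a) makes the norms $\norm{B_{\Estar,n}}$ uniformly bounded, and together with $\norm{\phi_n(\Estar)^{-1}} \leq c^{-1}$ this gives a uniform bound $\norm{R_n(\Estar)} \leq M$, hence the uniform lower bound $H_n \geq \Estar + M^{-1}$.

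Next I would pass to the strong limit termwise. Norm convergence $B_{\Estar,n} \to B_\Estar$ is hypothesis~(a), which also yields norm convergence of the adjoints. The restrictions $\phi_n(\Estar)\rst D$ are essentially self-adjoint, their closures being the bounded operators $\phi_n(\Estar)$ on $\tilde\HH$, so hypotheses~(b) and (c) place us in the setting of \Cref{Convergence_Inverse}, which supplies $\overline{\phi(\Estar)}\geq c$, the norm bound $\norm{\overline{\phi(\Estar)}^{-1}} \leq c^{-1}$, and strong convergence $\phi_n(\Estar)^{-1} \to \overline{\phi(\Estar)}^{-1}$. A standard triangle-inequality argument, absorbing $\norm{B_{\Estar,n}-B_\Estar}$ into the uniform bound $c^{-1}$, then produces strong convergence of the composite, so that
$$R_n(\Estar) \to R := R_0(\Estar) + B_\Estar^{*}\overline{\phi(\Estar)}^{-1}B_\Estar.$$

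From here the construction of $H$ is algebraic. Since $\sprod{\psi}{R\psi} \geq \sprod{\psi}{R_0(\Estar)\psi} > 0$ for $\psi \neq 0$, the bounded positive self-adjoint operator $R$ is injective, so $R^{-1}$ is a (generally unbounded) self-adjoint operator. Setting $H := \Estar + R^{-1}$ delivers a self-adjoint operator with $(H-\Estar)^{-1} = R$, which is exactly \eqref{Resolvent_Limit_Operator}, and the strict positivity $R > 0$ passes to $H - \Estar = R^{-1} > 0$, giving $H > \Estar$ in the quadratic-form sense.

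The remaining step, which I expect to be the main obstacle, is upgrading strong convergence of the resolvents at the single real point $\Estar$ to genuine strong resolvent convergence, i.e., to strong convergence of $R_n(z)$ at some (hence every) non-real $z$. The uniform lower bound $H_n \geq \Estar + M^{-1}$ is crucial: for $z$ with $\Ima z \neq 0$ it provides the uniform bound $\norm{R_n(z)} \leq \abs{\Ima z}^{-1}$, and the first resolvent identity at $\Estar$ can be rearranged into an identity expressing $R_n(z) - (H-z)^{-1}$ as a product of uniformly bounded factors with $R_n(\Estar) - R$ in the middle, so that the strong convergence already established implies strong resolvent convergence. This extension is a standard consequence of uniform semi-boundedness and amounts to a version of Kato's theorem on strong resolvent convergence.
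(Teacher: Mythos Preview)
Your proof is correct and follows essentially the same route as the paper: invoke \Cref{General_Connection_H_Phi} to write $R_n(\Estar)$ in Krein form, use \Cref{Convergence_Inverse} for the strong convergence $\phi_n(\Estar)^{-1}\to\overline{\phi(\Estar)}^{-1}$, pass to the strong limit $R$, and define $H:=\Estar+R^{-1}$ from the injectivity of $R>0$. The only deviation is in the final step, where the paper extends the convergence from $\Estar$ to nearby non-real $z$ via the Neumann series $(H_n-z)^{-1}=\sum_{k\geq 0}(z-\Estar)^k R_n(\Estar)^{k+1}$ (using the uniform bound on $\|R_n(\Estar)\|$) and then cites Reed--Simon~VIII.19, whereas you sketch a resolvent-identity rearrangement; your version is also valid once one writes $R_n(z)-(H-z)^{-1}=[1+(z-\Estar)R_n(z)]\,(R_n(\Estar)-R)\,[1-(z-\Estar)R]^{-1}$ with the $n$-independent factor on the right, so that it acts first on the vector.
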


\noindent
\textbf{Remarks.}
\begin{enumerate}
\item In applications of this theorem the numbers $g_n$ are chosen in such a way that Hypothesis (b) is satisfied.
\item By the resolvent identity for $R_0(z)$, Hypothesis (a) implies that for all $z \in \rho(H_0)$
\begin{equation} \label{RE_Rz}
   A_n R_0(z) \to B_z:= B_\Estar + (z - \Estar) B_\Estar R_0(z)\qquad (n\to\infty).
\end{equation}
Moreover, by (a), the operator $A:D(H_0)\subset\HH\to \tilde\HH$ defined by 
$$
        A\ph := \lim_{n\to\infty}A_n\ph,\qquad \ph\in D(H_0)
$$ 
exists and $B_z = AR_0(z)$ for all $z\in \rho(H_0)$. We are interested in the case where $A$ is an unbounded operator. Boundedness of $A$ implies $\ran B_z^{*}\subset D(H_0)$ which, by \Cref{Notin_Lemma}, is not true in the context of Section~\ref{Sect:Regularization_Schemes}.

\item If Hypotheses (a) and (b) are satisfied, then, by the previous remark, for all $z \in \rho(H_0)$ and all $w\in D$ we have $\phi_n(z)w\to \phi(z)w$ as $n\to\infty$, where 
\begin{equation} \label{PhiE_Phiz_kurz}
   \phi(z) = \phi(\Estar) + (\Estar-z) B_z B_{\Estar}^*.
\end{equation}
Since $\phi(z) - \phi(\Estar)$ is a bounded operator, it follows that $\phi(z)$ is closable and that the domain of the closure is independent of $z$.
Moreover, $\phi(\lambda)$ is essentially self-adjoint for all $\lambda < 0$. In the following, the closure of $\phi(z)|_D$ is denoted by $\phi(z)$ as well, and its domain is denoted by $D(\phi)$.

\item By the monotonicity of the resolvent $\tau\mapsto R_0(\tau)$,  $\phi_n(\tau)\geq \phi_n(\Estar)$ if $\tau \leq \Estar$. Hence, Hypotheses (a)-(c) are satisfied for all $\tau \leq \Estar$. 
\end{enumerate}

\begin{proof}[Proof of \Cref{Convergence_Theorem}]
By definition, $\phi_n(\Estar)$ is a bounded self-adjoint operator, which, by Assumptions (b), (c) satisfies the hypotheses of \Cref{Convergence_Inverse}. It follows that $0$ belongs to the resolvent set of $\phi(\Estar)$ and that 
\begin{equation}\label{RR1}
   \phi_n(\Estar)^{-1} \to \phi(\Estar)^{-1} \qquad (n \to \infty)
\end{equation}
strongly. By Proposition \ref{General_Connection_H_Phi}, the number $\Estar$ belongs to the resolvent set of $H_n$ and
\begin{equation} \label{Regularized_Resolvent}
   (H_n - \Estar)^{-1} = R_0(\Estar) + R_0(\Estar) A_n^* \phi_n(\Estar)^{-1} A_n R_0(\Estar).
\end{equation}
In view of the strong convergence of \eqref{RR1}, Assumption (a) and \eqref{Regularized_Resolvent} imply that 
\begin{equation}\label{RR2}
   (H_n - \Estar)^{-1} \to R(\Estar) := R_0(\Estar) + B_{\Estar}^* \phi(\Estar)^{-1} B_{\Estar} \qquad (n\to\infty)
\end{equation}
strongly. The operator $R(\Estar)$ is bounded, self-adjoint, and strictly positive, that is $R(\Estar)>0$, because $R_0(\Estar)>0$ and $\phi(\Estar)^{-1}>0$. It follows that $R(\Estar)\HH$ is dense and that the operator
$$
      H:= R(\Estar)^{-1} + \Estar
$$
is self-adjoint with domain  $R(\Estar)\HH$, $\Estar$ is in the resolvent set, and $H>\Estar$ by the positivity of $R(\Estar)^{-1}$. By Theorem VIII.19 of \cite{RS1}, it remains to prove the strong resolvent convergence for some $z\in \C\backslash\R$.

The strong convergence \eqref{RR2} implies that $\sup_n\norm{(H_n - \Estar)^{-1}} < 1/\eps$ for $\eps$ small enough. All $z\in B(\Estar,\eps)$ belong to the resolvent set of $H_n$ and
$$
   (H_n  - z)^{-1} = \sum_{k=0}^\infty (z-\Estar)^k (H_n - \Estar)^{-k-1}.
$$
From this equation and from the strong convergence $(H_n - \Estar)^{-1}\to (H-\Estar)^{-1}$ we see that for $z\in B(\Estar,\eps)$,
$$
   (H_n-z)^{-1} \to R(z) := \sum_{k=0}^\infty (z-\Estar)^k (H-\Estar)^{-k-1}\qquad (n\to\infty)
$$
strongly. It is easy to check that $R(z) = (H-z)^{-1}$ and the proof is complete. 
\end{proof}

The above remarks in combination with  \Cref{Convergence_Theorem}  imply the following corollary.  See also \Cref{Discrete_Spectrum_H}, below, and the remark thereafter.


\begin{cor} \label{resolvent-on-R}
For all $\tau\leq \Estar$ and $B_\tau$ defined by \eqref{RE_Rz},
\begin{equation} \label{eq:Resolvent_of_H}
     (H - \tau)^{-1} = (H_0 - \tau)^{-1} + B_{\tau}^{*}\phi(\tau)^{-1} B_\tau.
\end{equation}
\end{cor}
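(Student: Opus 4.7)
The plan is to reapply Theorem~\ref{Convergence_Theorem} with $\Estar$ replaced by an arbitrary $\tau\leq\Estar$, and then to identify the resulting self-adjoint operator with $H$ via uniqueness of the strong resolvent limit. Concretely, if hypotheses (a)--(c) can be verified at $\tau$, the theorem produces a self-adjoint operator $H^{(\tau)}$ such that $H_n\to H^{(\tau)}$ in the strong resolvent sense, together with the explicit formula
$$
  (H^{(\tau)}-\tau)^{-1} = R_0(\tau) + B_\tau^{*}\phi(\tau)^{-1} B_\tau.
$$
Since the same sequence $(H_n)$ already converges to $H$ in the strong resolvent sense by Theorem~\ref{Convergence_Theorem} applied at $\Estar$, uniqueness of that limit forces $H^{(\tau)}=H$, and this immediately yields \eqref{eq:Resolvent_of_H}.

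The bulk of the work is therefore the verification of (a)--(c) at $\tau$, and here there is essentially nothing new to do, because the three remarks following Theorem~\ref{Convergence_Theorem} already provide each ingredient. For (a), the resolvent identity exploited in Remark~2 gives $A_n R_0(\tau)\to B_\tau$ in $\LL(\HH,\tilde\HH)$, with $B_\tau$ as in \eqref{RE_Rz}. For (b), Remark~3 supplies the pointwise convergence $\phi_n(\tau)w\to\phi(\tau)w$ on the common core $D$ and records that $\phi(\tau)$ is essentially self-adjoint for every real $\tau<0$. For (c), I would invoke operator monotonicity of the resolvent of $H_0\geq 0$: from $\tau\leq\Estar<0$ one gets $R_0(\tau)\leq R_0(\Estar)$, hence $A_n R_0(\tau) A_n^{*}\leq A_n R_0(\Estar) A_n^{*}$, so $\phi_n(\tau)\geq \phi_n(\Estar)\geq c>0$. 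This is precisely the observation recorded in Remark~4.

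The identification $H^{(\tau)}=H$ in the last step can be carried out on any small disc about $\Estar$ in $\C\setminus\R$ where both statements of Theorem~\ref{Convergence_Theorem} (at $\tau$ and at $\Estar$) guarantee strong convergence of $(H_n-z)^{-1}$: equality of the two resolvent limits at a single nonreal point already forces equality of the self-adjoint operators. Substituting $H^{(\tau)}=H$ into the formula above yields the claim. Since every ingredient is in place, no step poses a real obstacle; the only point requiring a line of care is the operator-monotonicity argument for (c), which relies essentially on $\tau\leq\Estar<0$ and on the standing assumption $H_0\geq 0$.
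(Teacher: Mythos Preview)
Your proof is correct and follows exactly the approach the paper itself indicates: the remarks following \Cref{Convergence_Theorem} show that hypotheses (a)--(c) hold at every $\tau\leq\Estar$, so the theorem applies verbatim with $\Estar$ replaced by $\tau$, and uniqueness of the strong resolvent limit identifies the resulting operator with $H$. You have simply made explicit the uniqueness step that the paper leaves implicit.
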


\noindent
\textbf{Remark:}  An abstract resolvent identity identical to \eqref{eq:Resolvent_of_H} previously appeared in the singular perturbation theory of Posilicano \cite{Pos2001,Pos2008,CFP2018}. In this theory, the given objects are a self-adjoint operator $H_0$, an $H_0$-bounded operator $A$, and an operator-valued function $\phi(z)$, $z\in \rho(H_0)$, with properties like \eqref{PhiE_Phiz_kurz}, where $B_z=A(H_0-z)^{-1}$. Assuming that $\ker A$ is dense in $\HH$, or, at least that $\ran B_{\bar z}^{*}\cap D(H_0) = \{0\}$, it is shown in \cite{Pos2001} that the r.h.s of \eqref{eq:Resolvent_of_H} is the resolvent of a self-adjoint operator $H$ extending $H_0\restricted\ker A$. There is more to say about \eqref{eq:Resolvent_of_H} in the remark following \Cref{Discrete_Spectrum_H}.

\medskip
The representation of $D(H)$ given in the following proposition is inspired by the work on so called TMS Hamiltonians \cite{Figari, Five_Italians, MichOtto2017}.  Recall that $D(\phi)$ denotes the domain of $\phi(z)$, which is independent of $z \in \rho(H_0)$.

\begin{prop}\label{Explicit_Characterization_of_H}
A vector $\ph\in \HH$ belongs to $D(H)$ if and only if there exists a vector $w_\ph\in D(\phi)$ such that for some (and hence all) $z\in \rho(H_0)$, 
\begin{equation}\label{D-conditions}
     \ph - B_{\bar{z}}^* w_\ph \in D(H_0)\quad \text{and}\quad A(\ph - B_{\bar{z}}^*  w_\ph) = \phi(z) w_\ph.
\end{equation}
If this is the case, then 
\begin{equation}\label{Action_of_H} 
   (H - z) \ph = (H_0 - z)(\ph - B_{\bar{z}}^*  w_\ph). 
\end{equation}
\end{prop}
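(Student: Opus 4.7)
My plan is to reduce the characterization to the single parameter value $z = \Estar$, where \Cref{resolvent-on-R} provides an explicit formula for $(H-\Estar)^{-1}$, and then to propagate the statement to arbitrary $z \in \rho(H_0)$ using the resolvent-type identities from Remarks 2 and 3 after \Cref{Convergence_Theorem}.

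First, I would verify that the two conditions in \eqref{D-conditions} are independent of $z \in \rho(H_0)$ with the \emph{same} vector $w_\ph$. Taking adjoints of the identity $B_{\bar z} = B_\Estar + (\bar z - \Estar) B_\Estar R_0(\bar z)$ from Remark 2 yields $B_{\bar z}^* = B_\Estar^* + (z-\Estar) R_0(z) B_\Estar^*$, whence
\begin{equation*}
   \ph - B_{\bar z}^* w_\ph = (\ph - B_\Estar^* w_\ph) - (z-\Estar) R_0(z) B_\Estar^* w_\ph.
\end{equation*}
Since $\ran R_0(z) = D(H_0)$, the left-hand side lies in $D(H_0)$ iff the first summand on the right does. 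Applying $A$ and using $A R_0(z) = B_z$ together with $\phi(z) = \phi(\Estar) + (\Estar - z) B_z B_\Estar^*$ from Remark 3 gives $A(\ph - B_{\bar z}^* w_\ph) = \phi(z) w_\ph$ iff $A(\ph - B_\Estar^* w_\ph) = \phi(\Estar) w_\ph$. Hence it suffices to establish the characterization at $z = \Estar$.

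For the forward direction, let $\ph \in D(H)$, set $\psi := (H-\Estar)\ph \in \HH$, and define $w_\ph := \phi(\Estar)^{-1} B_\Estar \psi$; this lies in $D(\phi)$ because $\phi(\Estar) \geq c > 0$ is a positive self-adjoint operator on $D(\phi)$. \Cref{resolvent-on-R} then reads
\begin{equation*}
   \ph = R_0(\Estar) \psi + B_\Estar^* w_\ph,
\end{equation*}
so $\ph - B_\Estar^* w_\ph = R_0(\Estar)\psi \in D(H_0)$ and $A(\ph - B_\Estar^* w_\ph) = A R_0(\Estar)\psi = B_\Estar \psi = \phi(\Estar) w_\ph$, establishing \eqref{D-conditions} at $z = \Estar$. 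Moreover $(H_0-\Estar)(\ph - B_\Estar^* w_\ph) = \psi = (H-\Estar)\ph$, and combining this with the decomposition from the previous paragraph together with the cancellation $-(z-\Estar)(H_0-z) R_0(z) B_\Estar^* w_\ph = -(z-\Estar) B_\Estar^* w_\ph$ yields $(H_0-z)(\ph - B_{\bar z}^* w_\ph) = (H-z)\ph$, i.e.~\eqref{Action_of_H}.

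For the converse, suppose \eqref{D-conditions} holds at $z = \Estar$, and set $\psi := (H_0-\Estar)(\ph - B_\Estar^* w_\ph) \in \HH$. Since $B_\Estar(H_0-\Estar)\chi = A\chi$ for $\chi \in D(H_0)$, we get $B_\Estar \psi = A(\ph - B_\Estar^* w_\ph) = \phi(\Estar) w_\ph$, and \Cref{resolvent-on-R} gives
\begin{equation*}
   (H-\Estar)^{-1}\psi = (\ph - B_\Estar^* w_\ph) + B_\Estar^* \phi(\Estar)^{-1} \phi(\Estar) w_\ph = \ph,
\end{equation*}
proving $\ph \in D(H)$. The main obstacle is domain bookkeeping: $A$ is generally unbounded (Remark 2 after \Cref{Convergence_Theorem}), $B_z$ is bounded, and $\phi(\Estar)$ is an unbounded positive self-adjoint operator on $D(\phi)$; these are reconciled by the consistency relation $B_\Estar(H_0-\Estar) = A$ on $D(H_0)$ and by the explicit Krein formula \eqref{eq:Resolvent_of_H}.
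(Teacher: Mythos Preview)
Your proof is correct and follows essentially the same approach as the paper: both establish the characterization at a fixed real parameter ($\Estar$, resp.\ $\tau\leq\Estar$) via the Krein-type resolvent formula of \Cref{resolvent-on-R}, and both transfer between different values of $z$ using the identities $B_{\bar z}^* = B_\Estar^* + (z-\Estar)R_0(z)B_\Estar^*$ and $\phi(z)=\phi(\Estar)+(\Estar-z)B_zB_\Estar^*$. The only cosmetic difference is that you isolate the $z$-independence argument first and then work at $z=\Estar$, whereas the paper interleaves this with the forward direction; the computations (including the one for $(H-z)\ph$) are the same.
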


\noindent
\textbf{Remarks.} 
\begin{enumerate}
\item While \eqref{Action_of_H}  appears to give an explicit expression for $H\ph$, it does not because it depends on the vector $w_\ph$ whose dependence on $\ph$ is not explicit. 
\item Equation \eqref{D-conditions} can be seen as an abstract, operator theoretic version of the so called TMS boundary condition \cite{Figari, Five_Italians, MichOtto2017}. In Section \ref{sec:all-space} we show how this condition reduces to the usual TMS condition in the case of the Fermi-polaron in $\R^2$.
\item In the application to the Fermi polaron we know that $B_{\overline{z}}^* w\not\in D(H_0)$ unless $w=0$ (see \Cref{Notin_Lemma}). This and $ \ph - B_{\overline{z}}^* w_\ph \in D(H_0)$ will imply the uniqueness of $w_\ph$.
\end{enumerate}

\begin{proof}
Pick $\tau \leq \mu$ with $\mu$ given by \Cref{Convergence_Theorem}. We will have occasion to use the identity 
\begin{equation}\label{eq:Rtau*}
    B_{\tau}^{*} = B_{\bar z}^{*} + (\tau-z) R_0(z)B_{\tau}^{*}, 
\end{equation}
which follows from \eqref{RE_Rz} and which holds for all $z\in \rho(H_0)$.

Assume that $\ph \in D(H)$, define $v_\ph:= (H-\tau)\ph \in \HH$ and  $w_\ph := \phi(\tau)^{-1} B_\tau v_\ph\in D(\phi)$. Then $\ph=(H-\tau)^{-1}v_\ph$ and, by rearranging \eqref{eq:Resolvent_of_H}, we find
\begin{equation} \label{Relation_v_phi}
   \ph - B_{\tau}^* w_\ph = R_0(\tau) v_\ph  \in D(H_0).
\end{equation}
This implies that 
\begin{equation}\label{D-for-tau}
   A(\ph - B_\tau^* w_\ph) =  A R_0(\tau) v_\ph  =  B_\tau v_\ph = \phi(\tau) w_\ph
\end{equation}
and hence \eqref{D-conditions} is proved for $z=\tau$. From \eqref{eq:Rtau*} and \eqref{Relation_v_phi} it is obvious that $ \ph - B_{\bar z}^* w_\ph \in D(H_0)$ for all $z\in \rho(H_0)$ and we claim that \eqref{D-for-tau} extends to all $z\in \rho(H_0)$ as well.  Indeed, using \eqref{PhiE_Phiz_kurz} and \eqref{eq:Rtau*}  we see that
\begin{equation} \label{z-to-tau}
    A(\ph - B_{\bar{z}}^*w_\ph) - \phi(z) w_\ph  =  A(\ph - B_{\tau}^* w_\ph) -  \phi(\tau) w_\ph,
\end{equation}
which completes the proof of \eqref{D-conditions}.

Now let $\ph \in \HH$ and assume there exists a vector $w_\ph \in D(\phi)$ such that \eqref{D-conditions} holds for some $z\in \rho(H_0)$. Then  \eqref{D-conditions} holds for all
$z\in \rho(H_0)$ by \eqref{z-to-tau} and by the arguments preceding it.  Define $v_\ph := (H_0 - \tau)(\ph - B_{\tau}^* w_\ph)$. Then, by \eqref{D-conditions} for $z=\tau$,
\begin{align*}
   R_0(\tau) v_\ph &= \ph - B_{\tau}^* w_\ph\\ 
   &= \ph - B_{\tau}^* \phi(\tau)^{-1} \phi(\tau) w_\ph\\ 
   &= \ph - B_{\tau}^* \phi(\tau)^{-1} A(\ph - B_\tau^* w_\ph) \\
   &= \ph - B_{\tau}^* \phi(\tau)^{-1} A R_0(\tau) v_\ph = \ph - B_\tau^* \phi(\tau)^{-1} B_\tau v_\ph.
\end{align*}
By \eqref{eq:Resolvent_of_H}, this implies $\ph = (H-\tau)^{-1} v_\ph \in D(H)$ and $(H - \tau) \ph = v_\ph = (H_0 - \tau)(\ph - B_\tau^* w_\ph)$. Using this last equation and \eqref{eq:Rtau*} we conclude that
\begin{align*}
     (H - z)\ph & = (H - \tau)\ph + (\tau-z)\ph\\
      &= (H_0 - \tau)(\ph - B_{\tau}^* w_\ph) + (\tau-z)\ph\\
      &= (H_0 - z)(\ph - B_{\bar{z}}^* w_\ph)
\end{align*}
which completes the proof.
\end{proof}

\begin{cor}\label{lower-bound} 
For all $E<0$,
$$
       \phi(E) \geq 0\quad\Rightarrow\quad H\geq E.
$$
\end{cor}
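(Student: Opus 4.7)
The plan is to compute $\langle \psi, (H-E)\psi \rangle$ directly for $\psi \in D(H)$, exploiting the explicit description of $D(H)$ given in \Cref{Explicit_Characterization_of_H}. Since $H_0 \geq 0$ and $E < 0$, we have $E \in \rho(H_0)$, and for any $\psi \in D(H)$ there exists $w_\psi \in D(\phi)$ such that $\psi_0 := \psi - B_E^* w_\psi \in D(H_0)$, with $A\psi_0 = \phi(E) w_\psi$ and $(H-E)\psi = (H_0-E)\psi_0$ by \eqref{D-conditions} and \eqref{Action_of_H}.

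The key identity I would establish is
$$\langle \psi, (H-E)\psi \rangle \;=\; \langle \psi_0, (H_0-E)\psi_0 \rangle + \langle w_\psi, \phi(E)\, w_\psi \rangle.$$
Once this is in hand, the desired bound follows at once: the first summand is nonnegative because $H_0 - E \geq -E > 0$, while the second is nonnegative by the hypothesis $\phi(E) \geq 0$ combined with $w_\psi \in D(\phi)$. This shows $\langle \psi, (H-E)\psi\rangle \geq 0$ for every $\psi \in D(H)$, which is exactly $H \geq E$.

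To derive the identity, I would substitute $\psi = \psi_0 + B_E^* w_\psi$ on the left-hand side of $\langle \psi, (H-E)\psi\rangle = \langle \psi, (H_0-E)\psi_0\rangle$. The diagonal contribution is $\langle \psi_0, (H_0-E)\psi_0\rangle$, and the cross term rearranges to $\langle w_\psi, B_E (H_0-E)\psi_0\rangle$. Using the factorization $B_E\xi = \lim_n A_n R_0(E)\xi$ from the second remark following \Cref{Convergence_Theorem}, together with $R_0(E)(H_0-E)\psi_0 = \psi_0$ (valid because $\psi_0 \in D(H_0)$), one gets $B_E(H_0-E)\psi_0 = A\psi_0$, which equals $\phi(E) w_\psi$ by \eqref{D-conditions}. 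This closes the computation.

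The only delicate point is that $A$ need not be bounded, so the composition $AR_0(E)$ must be read on $D(H_0)$, where $A$ is defined as the pointwise limit $A\xi := \lim_n A_n \xi$. Since $R_0(E)(H_0-E)\psi_0 = \psi_0 \in D(H_0)$, this interpretation is unproblematic, and the whole argument reduces to a direct unfolding of the decomposition furnished by \Cref{Explicit_Characterization_of_H}. No spectral analysis of $H$ or $\phi$, and no extension of the resolvent identity \eqref{eq:Resolvent_of_H} beyond $\tau\leq\Estar$, is required.
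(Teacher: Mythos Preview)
Your proof is correct and is essentially identical to the paper's own argument: the paper also applies \Cref{Explicit_Characterization_of_H} at $z=E$ and records the same identity
\[
\sprod{\ph}{(H-E)\ph} = \sprod{(\ph - B_{E}^* w_\ph)}{(H_0-E)(\ph - B_{E}^* w_\ph)} + \sprod{w_\ph}{\phi(E)w_\ph},
\]
invoking $B_E(H_0-E)=A$ on $D(H_0)$ exactly as you do. Your write-up simply spells out the cross-term computation that the paper leaves implicit.
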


\begin{proof}
For all $E<0$ and all $\ph\in D(H)$, by \eqref{Action_of_H}, \eqref{D-conditions}, and $R_{E}(H_0-E) = A$ on $D(H_0)$,
\begin{align*}
    \sprod{\ph}{(H-E)\ph} = \sprod{(\ph - B_{E}^* w_\ph)}{(H_0-E)(\ph - B_{E}^* w_\ph)}  + \sprod{w_\ph}{\phi(E)w_\ph}.
\end{align*}
Since $H_0-E\geq 0$ this equation proves the corollary.
\end{proof}


\begin{cor}\label{Kernel_and_Eigenvalues} 
With the notations and assumptions of \Cref{Convergence_Theorem} for all $z\in \rho(H_0)$, 
\item[(a)] $B_{\bar{z}}^{*}\ker\phi(z) = \ker(H-z)$.
\item[(b)] If $\ker B_{\bar z}^* = \{0\}$, then $z$ is an eigenvalue of $H$ if an only if $0$ is an eigenvalue of $\phi(z)$.
\end{cor}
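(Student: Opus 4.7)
The plan is to read both parts of the corollary directly off the characterization of $D(H)$ given in \Cref{Explicit_Characterization_of_H}, which rephrases the eigenvalue equation $(H-z)\ph=0$ via an auxiliary vector $w_\ph\in D(\phi)$. Since $H_0$ is self-adjoint, $z\in\rho(H_0)$ implies $\bar z\in\rho(H_0)$, so $B_{\bar z}^{*}\in\LL(\tilde\HH,\HH)$ is a bounded operator defined on all of $\tilde\HH$; in particular $B_{\bar z}^{*}w$ makes sense for every $w\in\tilde\HH$.

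For the inclusion $B_{\bar z}^{*}\ker\phi(z)\subseteq\ker(H-z)$ in (a), I would pick $w\in\ker\phi(z)$, set $\ph:=B_{\bar z}^{*}w$, and verify the hypotheses of \Cref{Explicit_Characterization_of_H} with the choice $w_\ph:=w$. Then $\ph-B_{\bar z}^{*}w_\ph=0\in D(H_0)$ and $A\cdot 0=0=\phi(z)w$, so \eqref{D-conditions} holds; \eqref{Action_of_H} then gives $(H-z)\ph=(H_0-z)\cdot 0=0$, hence $\ph\in\ker(H-z)$.

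For the converse inclusion, take $\ph\in\ker(H-z)$. \Cref{Explicit_Characterization_of_H} furnishes some $w_\ph\in D(\phi)$ with $\ph-B_{\bar z}^{*}w_\ph\in D(H_0)$, $A(\ph-B_{\bar z}^{*}w_\ph)=\phi(z)w_\ph$, and
$$ 0=(H-z)\ph=(H_0-z)(\ph-B_{\bar z}^{*}w_\ph). $$
Because $z\in\rho(H_0)$, the operator $H_0-z$ is injective, so $\ph=B_{\bar z}^{*}w_\ph$. Substituting this back into the second condition of \eqref{D-conditions} gives $\phi(z)w_\ph=0$, i.e.\ $w_\ph\in\ker\phi(z)$, and hence $\ph\in B_{\bar z}^{*}\ker\phi(z)$. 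This completes (a).

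Part (b) is then a direct consequence of (a). If $\ker\phi(z)=\{0\}$, then $\ker(H-z)=B_{\bar z}^{*}(\{0\})=\{0\}$. Conversely, if there is $0\neq w\in\ker\phi(z)$, the injectivity hypothesis $\ker B_{\bar z}^{*}=\{0\}$ forces $B_{\bar z}^{*}w\neq 0$, so $\ker(H-z)\neq\{0\}$ and $z$ is an eigenvalue. There is no real obstacle here; the only point worth flagging is the role of the extra hypothesis in (b): part (a) freely yields a surjection $B_{\bar z}^{*}\colon\ker\phi(z)\to\ker(H-z)$, whereas the nontrivial direction of (b) needs this map to be injective as well, which is exactly what $\ker B_{\bar z}^{*}=\{0\}$ provides.
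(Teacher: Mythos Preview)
Your proof is correct and essentially identical to the paper's own argument: both inclusions in (a) are read off directly from \Cref{Explicit_Characterization_of_H} exactly as you do, and (b) is then immediate. The paper is slightly terser (it just says ``Part (b) follows from (a)''), but your expanded explanation of why the hypothesis $\ker B_{\bar z}^{*}=\{0\}$ is needed for one direction of (b) is a welcome clarification.
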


In \Cref{Discrete_Spectrum_H} below this corollary will be strengthened in the case where $H_0$ has a compact resolvent. 

\begin{proof}
Part (b) follows from (a). To prove (a) let $w\in \ker\phi(z)$ and  define $\ph := B_{\bar z}^{*}w $. Then $\ph - B_{\bar{z}}^* w = 0\in D(H_0)$ and $A(\ph - R_z^* w) =0=\phi(z)w$. Hence, by \Cref{Explicit_Characterization_of_H}, $\ph\in D(H)$ and 
$$
   (H - z)\ph = (H_0 - z)(\ph - B_{\bar{z}}^{*}w) = 0.
$$
This proves that $B_{\bar{z}}^{*}\ker\phi(z)\subset \ker(H-z)$. Now let $\ph\in \ker(H-z)$. Then,  by \Cref{Explicit_Characterization_of_H},
\begin{equation}\label{Action_of_H_on_Kernel}
   \ph - B_{\bar z}^* w_\ph = (H_0 - z)^{-1}(H-z)\ph =0 
\end{equation}
for some $w_\ph\in D(\phi)$, and, using \eqref{D-conditions},
$$
   \phi(z) w_\ph = A(\ph - B_{\bar z}^* w_\ph) = 0.
$$
This proves that $B_{\bar{z}}^{*}\ker\phi(z)\supset \ker(H-z)$. 
\end{proof}

We now illustrate the results of this section in the easy case of a single quantum particle subject to a $\delta$-potential sitting at the origin and confined to a two-dimensional box $\Omega = [0,L]^2$ with periodic boundary conditions. To this end we consider the sequence of operators
$$
   H_n := -\Delta - g_n \ket{\eta_n} \bra{\eta_n}
$$
with $\eta_n = \sum_{k^2 \leq n} \ph_k$, $\ph_k$ as in \Cref{Sect:Fermi_Polaron_Box}, and $g_n$ defined by 
$$
   g_n^{-1} = \sum_{k^2 \leq n} \frac{1}{k^2 - E_B}.
$$
Here $E_B<0$ is a free parameter of the system, which, as the following shows, agrees with the ground state energy of $H_n$. Notice that $\sprod{\eta_n}{\ph}\to L\cdot\ph(0)$ as $n\to\infty$ for smooth $\ph\in L^2(\Omega)$ satisfying periodic boundary conditions. 

The operators $H_n$ are of the type considered in the example of \Cref{sec:BS-operator}. That is, $H_n = -\Delta - g_n A_n^* A_n$ where $A_n \in \LL(L^2(\Omega), \C)$ is given by $A_n \psi = \sprod{\eta_n}{\psi}$, $A_n^* \in \LL(\C, L^2(\Omega))$ acts as $A_n^* c = c \cdot \eta_n$, and
$$
   \phi_n(z) = g_n^{-1} - \sprod{\eta_n}{(-\Delta - z)^{-1} \eta_n} = \sum_{k^2 \leq n} \left( \frac{1}{k^2 - E_B} - \frac{1}{k^2 - z} \right).
$$
It is easy to see that Conditions (a), (b) and (c) of \Cref{Convergence_Theorem} are met. In particular, $\eta(z) := \lim_{n \to \infty} (-\Delta - z)^{-1} \eta_n  = \sum_k(k^2-z)^{-1}\ph_k$ and $\phi(z)=\lim_{n\to\infty}\phi_n(z)$ exist. Notice that $\phi_n(E_B)=0$ and that, for any $\mu<E_B$, $\phi_n(\mu)\geq c>0$ for all $n$. By \Cref{Convergence_Theorem}, we conclude that $H_n \to H$ as $n \to \infty$ in the strong resolvent sense (actually in the norm resolvent sense) for a self-adjoint operator $H \geq E_B$. Moreover, by \Cref{Discrete_Spectrum_H} below,
$$
   (H-z)^{-1} = (-\Delta - z)^{-1} + \phi(z)^{-1} |\eta(z)\rangle\langle\eta(\bar{z})|
$$
for all $z\in \rho(H_0)\cap \rho(H)$ and $E_B$ is an eigenvalue of $H$ with eigenvector $\eta(E_B)$.

A similar discussion of external $\delta$-potentials as limits of rank one perturbations can be found in \cite{SingularPerturbations}. For a comprehensive discussion of $\delta$-potentials the reader is referred to \cite{SolvableModels}.

\section{The variational principle}

\label{Sect:Variational_Principle}

Given $E<\inf\sigma(H_0)$ let $\mu_\ell(H)$ denote the $\ell$-th eigenvalue from below, counting multiplicities, of the semi-bounded, self-adjoint operator $H$. Recall that 
\begin{equation}\label{min-max}
     \mu_\ell(H) := \min_{\substack{M\subset D(\phi)\\ \dim(M)=\ell}} \left(\max_{\substack{u \in M, \norm{u}=1}} \sprod{u}{H u}\right).
\end{equation}
In this section we prove that $\mu_\ell(H)$, if it is below $\inf\sigma(H_0)$, is the unique zero of the function $\tau\mapsto \mu_\ell(\phi(\tau))$ on $(-\infty,\min\sigma(H_0))$. To this end we assume:

\begin{itemize}
\item[(H1)] The resolvent of $H_0$ is compact
\item[(H2)] $\ker B_z^* =\{0\}$ for all $z \in\rho(H_0)$
\end{itemize}

\begin{prop}\label{Discrete_Spectrum_Phi}\label{Discrete_Spectrum_H}
In addition to the hypotheses of \Cref{Convergence_Theorem}, assume $(H1)$ and $(H2)$. Let $c\in \R$ be defined by Hypothesis (c) of \Cref{Convergence_Theorem}. Then 
\begin{itemize}
\item[(a)] The operator $H$ has purely discrete spectrum. The operator $\phi(z)$, for $z \in \rho(H_0)$, has purely discrete spectrum in $\C\backslash [c,\infty)$.
\item[(b)] A number $z\in \rho(H_0)$ is an eigenvalue of $H$ if and only if $0\in \sigma(\phi(z))$. On $\rho(H)\cap\rho(H_0)$ the map $z\mapsto \phi(z)^{-1}$ is analytic and 
\begin{equation}\label{res-H}
      (H - z)^{-1} = (H_0 - z)^{-1} + B_{\bar z}^{*} \phi(z)^{-1} B_{z}.
\end{equation}
\end{itemize}
\end{prop}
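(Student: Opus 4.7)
The two parts rest on a single observation that follows at once from hypothesis (H1): the operator $B_\Estar$ is compact. Indeed, by Hypothesis~(a) of \Cref{Convergence_Theorem}, $B_\Estar$ is the norm limit of $A_n R_0(\Estar)$; since $R_0(\Estar)$ is compact by (H1), each $A_n R_0(\Estar)$ is a bounded--compact product, hence compact, and the norm limit of compact operators is compact. Consequently $B_\Estar^{*}$ is compact as well, and via \eqref{RE_Rz} the same holds for $B_z$ and $B_{\bar z}^{*}$ for every $z\in\rho(H_0)$.

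For part (a), I would first invoke \Cref{resolvent-on-R}: $(H-\Estar)^{-1}=R_0(\Estar)+B_\Estar^{*}\phi(\Estar)^{-1}B_\Estar$. The first summand is compact by (H1) and the second is compact because the compact factor $B_\Estar^{*}$ appears on the left, so $H$ has compact resolvent and therefore purely discrete spectrum. For $\phi(z)$ with $z\in\rho(H_0)$, formula \eqref{PhiE_Phiz_kurz} exhibits $\phi(z)=\phi(\Estar)+(\Estar-z)B_z B_\Estar^{*}$ as a compact perturbation of the self-adjoint operator $\phi(\Estar)$, whose spectrum lies in $[c,\infty)$. For $\lambda\in\C\setminus[c,\infty)$ I would use the factorisation
\begin{equation*}
   \phi(z)-\lambda=(\phi(\Estar)-\lambda)\bigl(I+(\phi(\Estar)-\lambda)^{-1}(\Estar-z)B_z B_\Estar^{*}\bigr)
\end{equation*}
and apply the analytic Fredholm theorem in $\lambda$ on the connected open set $\C\setminus[c,\infty)$: the compact operator in the parenthesis has norm controlled by $|\Estar-z|\,\|B_z\|\,\|B_\Estar^{*}\|/\dist(\lambda,[c,\infty))$, so invertibility occurs for $\lambda$ sufficiently far from $[c,\infty)$, the non-invertibility locus is therefore discrete, and each such $\lambda$ is an eigenvalue of $\phi(z)$ of finite algebraic multiplicity.

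For part (b), the equivalence $z\in\sigma(H)\Leftrightarrow 0\in\sigma(\phi(z))$ on $\rho(H_0)$ is obtained by combining \Cref{Kernel_and_Eigenvalues}(b) (applicable by (H2)) with the discreteness just established: $z\in\sigma(H)$ amounts to $z$ being an eigenvalue of $H$ because $H$ has purely discrete spectrum, while $0\in\C\setminus[c,\infty)$ together with part (a) shows that $0\in\sigma(\phi(z))$ is equivalent to $0$ being an eigenvalue of $\phi(z)$; \Cref{Kernel_and_Eigenvalues}(b) bridges the two. Analyticity of $z\mapsto\phi(z)^{-1}$ on $\rho(H)\cap\rho(H_0)$ follows from the analyticity of $z\mapsto\phi(z)$ there (via \eqref{PhiE_Phiz_kurz}, \eqref{RE_Rz} and the analyticity of $R_0$) together with the standard fact that the inverse of an operator-valued analytic family is analytic where it exists. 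For the resolvent formula \eqref{res-H}, both sides are operator-valued analytic on the open set $\rho(H)\cap\rho(H_0)$, which is connected because its complement $\sigma(H)\cup\sigma(H_0)$ is a discrete subset of $\R$, and they agree on the real interval $(-\infty,\Estar]$ by \Cref{resolvent-on-R}; the identity theorem extends the agreement to all of $\rho(H)\cap\rho(H_0)$.

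The main technical point I expect to require care is the analytic Fredholm step for $\phi(z)$: one must work with the generically unbounded self-adjoint operator $\phi(\Estar)$ and use its strictly positive lower bound $c$ to confirm that $(\phi(\Estar)-\lambda)^{-1}$ is bounded on $\C\setminus[c,\infty)$ with norm of order $1/\dist(\lambda,[c,\infty))$. Once this is in hand, everything else reduces to routine bookkeeping with analytic families and the identity theorem.
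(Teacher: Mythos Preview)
Your proposal is correct and follows essentially the same route as the paper: compactness of $B_z$ from (H1), compactness of $(H-\Estar)^{-1}$, an analytic Fredholm argument for $\phi(z)-\lambda$ based on the compact perturbation formula \eqref{PhiE_Phiz_kurz}, and the identity theorem for \eqref{res-H} after invoking \Cref{resolvent-on-R}. The only cosmetic differences are that the paper factors $\phi(z)-\lambda=(1-K(z,\lambda))(\phi(\Estar)-\lambda)$ with the bounded Fredholm factor on the \emph{left} (which avoids having to check that $I+K$ preserves $D(\phi)$), and that the paper obtains analyticity of $z\mapsto\phi(z)^{-1}$ by running analytic Fredholm in the variable $z$ via $\phi(z)=(1-K(z))\phi(\Estar)$ rather than appealing to the general ``inverse of analytic is analytic'' principle; this latter route also immediately identifies the singular set with the eigenvalues of $H$ through \Cref{Kernel_and_Eigenvalues}.
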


\noindent
\textbf{Remark.} The second part of statement (b) is true without the assumptions  $(H1)$ and $(H2)$. This follows from \cite{CFP2018}; see Theorem 2.19, Remark 2.20, and the explicit formula for $\phi(z)^{-1}$ in that paper. Assuming $(H1)$ and $(H2)$, which are satisfied for the system considered in Sections \ref{Sect:Regularization_Schemes} and \ref{Sect:Pol_and_Mol_Equation}, we give a short, independent proof.

\begin{proof}
From the compactness of $R_0(z)$ it follows that $B_z := \lim_{n\to\infty} A_n R_0(z)$ is compact. Hence $ (H - \Estar)^{-1} = R_0(\Estar) + B_{\Estar}^* \phi(\Estar)^{-1} B_{\Estar}$, with $\Estar < 0$ defined by \Cref{Convergence_Theorem}, is compact.
This implies that the spectrum of $H$ is discrete.  

We now prove (b).  From  \eqref{PhiE_Phiz_kurz} we know that 
\begin{align}\label{phi-K}
   \phi(z) &= \phi(\Estar) - (z-\Estar)B_zB_{\Estar}^{*} = (1-K(z))\phi(\Estar)  
\end{align}
where $K(z) = (z-\Estar)B_zB_{\Estar}^{*}\phi(\Estar)^{-1}$ is compact and analytic as a function of $z\in\rho(H_0)$. The analyticity of $B_z$ follows from  \eqref{RE_Rz}. Since $1-K(z)$ is invertible for $z=\Estar$, it follows from the analytic Fredholm theorem, that $(1-K(z))^{-1}$ exists and is analytic for $z\in \rho(H_0)$ except for poles at the points where $1-K(z)$, and hence $\phi(z)$, have a non-trivial kernels. By \Cref{Kernel_and_Eigenvalues} these are exactly the eigenvalues of $H$ in $\rho(H_0)$. This shows that both sides of  \eqref{res-H} are analytic on $\rho(H)\cap\rho(H_0)$. Since both sides agree if $z\in (-\infty,\Estar)$, by \Cref{resolvent-on-R}, it follows that \eqref{res-H} holds for all  $z\in \rho(H)\cap\rho(H_0)$.

The second part of (a) follows from the compactness of $\phi(z)-\phi(\Estar)$ by Theorem XIII.14 in \cite{RS4}. Instead of verifying the hypotheses of that theorem, we prefer to give a direct proof along the lines of the proof of (b). Let $z\in \rho(H_0)$ be fixed. Then, for $\lambda\in \C\backslash [c,\infty)$ with $c \in\R$ defined by \Cref{Convergence_Theorem}, 
$$
   \phi(z) - \lambda = (1 - K(z,\lambda)) (\phi(\Estar) - \lambda)
$$
where $K(z,\lambda)= (z-\Estar)B_zB_{\Estar}^{*} (\phi(\Estar) - \lambda)^{-1}$ is compact and analytic as a function of $\lambda$. From $\|K(z,\lambda)\|\to 0$ as $\lambda \to -\infty$, it is clear that $1- K(z,\lambda)$ is invertible for $\lambda$ negative and large. Hence by the meromorphic Fredholm theorem, Theorem XIII.13 of \cite{RS4}, $[1- K(z,\lambda)]^{-1}$ exists and is analytic as a function of $\lambda\in \C\backslash [c,\infty)$ except for poles, and the coefficients of the singular part of the Laurent series at these poles are finite rank operators. It follows that these poles belong to discrete eigenvalues of $\phi(z)$.
\end{proof}

\begin{prop}\label{Continuity_of_the_Minimum}
In addition to the hypotheses of \Cref{Convergence_Theorem}, assume $(H1)$ and $(H2)$. Then for $\tau,\tau'\in (-\infty,\min\sigma(H_0))$ and all $\ell\in\N$ the following is true:
\begin{itemize}
\item[(a)] $\ker\phi(\tau) \cap \ker\phi(\tau')=\{0\}$ if $\tau\neq\tau'$.
\item[(b)] The map $\tau\mapsto \mu_\ell(\phi(\tau))$ is continuous, and, if $\mu_\ell(\phi(\tau))<c$, it is strictly decreasing.
\end{itemize}
\end{prop}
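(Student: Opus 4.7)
The plan is to reduce both (a) and (b) to one differentiable identity for the quadratic form of $\phi(\sigma)$. Starting from the finite-$n$ relation $\phi_n(\tau)-\phi_n(\tau')=(\tau'-\tau)\,A_nR_0(\tau)R_0(\tau')A_n^{*}$ (which is just the first resolvent identity) and using that $A_nR_0(z)\to B_z$ in operator norm for every $z\in\rho(H_0)$ (Hypothesis~(a) of \Cref{Convergence_Theorem} together with Remark~2 thereafter), passing to the limit yields
\begin{equation*}
\phi(\tau)-\phi(\tau')=(\tau'-\tau)\,B_\tau B_{\tau'}^{*}
\end{equation*}
on $D(\phi)$. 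For real $\tau,\tau'<\min\sigma(H_0)$ the resolvents $R_0(\tau)$ and $R_0(\tau')$ are commuting positive operators, so the factorisation $A_nR_0(\tau)R_0(\tau')A_n^{*}=C_nC_n^{*}$ with $C_n:=A_n[R_0(\tau)R_0(\tau')]^{1/2}$ shows that $B_\tau B_{\tau'}^{*}=B_{\tau'}B_\tau^{*}$ is bounded, self-adjoint and positive. Specialising to $\tau'=\sigma+h$ and letting $h\to 0$ one reads off the norm-derivative $\dot\phi(\sigma)=-B_\sigma B_\sigma^{*}$.

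Fix $u\in D(\phi)$. The scalar $f_u(\sigma):=\sprod{u}{\phi(\sigma)u}$ is then $C^{1}$ on $(-\infty,\min\sigma(H_0))$ with
\begin{equation*}
\dot f_u(\sigma)=-\norm{B_\sigma^{*}u}^{2}.
\end{equation*}
Hypothesis~(H2) gives $\ker B_\sigma^{*}=\{0\}$, so $\dot f_u(\sigma)<0$ whenever $u\neq 0$, and $f_u$ is strictly decreasing. Part~(a) follows immediately: if $w\in\ker\phi(\tau)\cap\ker\phi(\tau')$, then $f_w(\tau)=f_w(\tau')=0$, and strict monotonicity of $f_w$ forces $w=0$. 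The continuity assertion in part~(b) follows from the standard min-max estimate $|\mu_\ell(\phi(\tau))-\mu_\ell(\phi(\tau'))|\le\norm{\phi(\tau)-\phi(\tau')}\le|\tau-\tau'|\,\norm{B_\tau B_{\tau'}^{*}}$ (applicable since $D(\phi)$ is $\sigma$-independent and the difference of the two operators is bounded) together with norm-continuity of $\sigma\mapsto B_\sigma$.

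For the strict monotonicity in part~(b), assume $\tau<\tau'$ and $\mu_\ell(\phi(\tau))<c$. By \Cref{Discrete_Spectrum_H} the eigenvalues $\mu_1(\phi(\tau)),\dots,\mu_\ell(\phi(\tau))$ lie in the discrete spectrum; let $V_\tau\subset D(\phi)$ be the $\ell$-dimensional span of associated eigenvectors. Every unit vector $u\in V_\tau$ obeys $\sprod{u}{\phi(\tau)u}\le\mu_\ell(\phi(\tau))$ and, by the previous paragraph, $f_u(\tau')<f_u(\tau)$; since the unit sphere of $V_\tau$ is compact the map $u\mapsto f_u(\tau')$ attains its maximum there, and hence
\begin{equation*}
\mu_\ell(\phi(\tau'))\le\max_{u\in V_\tau,\,\norm{u}=1}f_u(\tau')<\mu_\ell(\phi(\tau)),
\end{equation*}
the first inequality being the min-max bound with $V_\tau$ as test subspace. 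The one point requiring care, and the main potential obstacle, is justifying the norm-convergence $A_nR_0(\tau)R_0(\tau')A_n^{*}\to B_\tau B_{\tau'}^{*}$ and the $C^{1}$-dependence of $\sigma\mapsto\phi(\sigma)$; both reduce to the norm-convergence $A_nR_0(z)\to B_z$ and the resulting analytic (hence $C^{1}$) dependence of $z\mapsto B_z$ on $z\in\rho(H_0)$.
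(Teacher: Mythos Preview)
Your proof is correct and proceeds along essentially the same lines as the paper's, but with two worthwhile differences in execution.

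First, while the paper expands $\phi$ around the fixed point $\Estar$ via the second-order identity
\[
\phi(\tau)=\phi(\Estar)-(\tau-\Estar)B_{\Estar}B_{\Estar}^{*}-(\tau-\Estar)^{2}B_{\Estar}R_0(\tau)B_{\Estar}^{*},
\]
you use the symmetric first-order identity $\phi(\tau)-\phi(\tau')=(\tau'-\tau)B_{\tau}B_{\tau'}^{*}$ and pass to the norm derivative $\dot\phi(\sigma)=-B_{\sigma}B_{\sigma}^{*}$. Both yield the same key inequality $\sprod{u}{\phi(\tau')u}<\sprod{u}{\phi(\tau)u}$ for $\tau<\tau'$ and $u\neq 0$ (via (H2)); your formulation is slightly cleaner because it makes the strict monotonicity of every scalar function $f_u$ explicit, whereas the paper's quadratic expansion requires one to observe that both correction terms are negative definite when combined with $\ker B_{\Estar}^{*}=\{0\}$.

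Second, and more notably, your min-max argument for strict monotonicity of $\mu_\ell$ is simpler than the paper's. The paper combines an $(\ell-1)$-dimensional eigenspace $M_{\ell-1}(\tau')$ at the larger parameter with an $\ell$-dimensional optimal subspace $M$ at the smaller one, and extracts a unit vector in $M\cap M_{\ell-1}(\tau')^{\perp}$. You instead plug the single $\ell$-dimensional eigenspace $V_\tau$ at the smaller parameter into the min-max for $\phi(\tau')$ and use compactness of its unit sphere to pass from pointwise to uniform strict inequality. This avoids the intersection argument entirely and only requires the existence of eigenvectors at the point where $\mu_\ell(\phi(\tau))<c$ is assumed, which is exactly what \Cref{Discrete_Spectrum_H} provides. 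The caveat you flag at the end---that everything rests on norm convergence $A_nR_0(z)\to B_z$---is correctly identified and is indeed implicit in Hypothesis~(a) of \Cref{Convergence_Theorem} together with the resolvent formula for $B_z$ in Remark~2.
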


\begin{proof}
By \eqref{PhiE_Phiz_kurz} with expression \eqref{RE_Rz} for $B_z$,  
\begin{equation}\label{d-phi}
    \phi(\tau) = \phi(\Estar) - (\tau- \Estar) B_{\Estar} B_{\Estar}^{*} - (\tau- \Estar)^2 B_{\Estar} R_0(\tau)B_{\Estar}^{*}.  
\end{equation}
This identity shows that $\tau\mapsto \phi(\tau) -\phi(\Estar)$ is continuous in operator norm, which, in view of \eqref{min-max} implies that $\tau\mapsto  \mu_\ell(\phi(\tau))$ is continuous. From \eqref{d-phi} and (H2) we see, moreover, that 
\begin{equation}\label{phi-monotone}
     \sprod{u}{\phi(\tau)u} < \sprod{u}{\phi(\Estar)u}
\end{equation}
for $\Estar<\tau<\inf\sigma(H_0)$ and all $u\in D(\phi)$ with $u\neq 0$. This proves (a) and it implies that  $\tau\mapsto  \mu_\ell(\phi(\tau))$ is strictly decreasing below $\inf\sigma(H_0)$. Indeed, let $M_{\ell-1}(\tau)\subset D(\phi)$ be an $(\ell-1)$-dimensional space  spanned by $\ell-1$ eigenvectors of $\phi(\tau)$ associated with the lowest $\ell-1$ eigenvalues, such that $\phi(\tau)\geq\mu_{\ell}(\phi(\tau))$ on $M_{\ell-1}(\tau)^{\perp}$. Let $M\subset D(\phi)$ be an $\ell$-dimensional space for which
$$
\mu_\ell(\phi(\Estar))  = \max_{\substack{u \in M, \norm{u}=1}} \sprod{u}{\phi(\Estar)u}.
$$
Then there exists a normalized vector $u\in M\cap M_{\ell-1}(\tau)^{\perp}$ and hence, by \eqref{phi-monotone}, 
$$
     \mu_\ell(\phi(\Estar)) \geq \sprod{u}{\phi(\Estar)u} >  \sprod{u}{\phi(\tau)u}\geq  \mu_{\ell}(\phi(\tau)).
$$
\end{proof}

\begin{thm}\label{Minimum_Expectation_Value}
For a real number $E < \min \sigma(H_0)$,
\begin{equation}
   \mu_\ell(H) = E \quad \Leftrightarrow \quad \mu_\ell(\phi(E)) = 0. \label{Variational_Principle_1}
\end{equation}
\end{thm}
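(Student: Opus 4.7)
The plan is to reduce the equivalence to an eigenvalue counting identity. For $\tau\in(-\infty,\min\sigma(H_0))$ set
$$
   N(\tau):=\#\{k:\mu_k(H)\le \tau\},\qquad M(\tau):=\#\{k:\mu_k(\phi(\tau))\le 0\},
$$
both finite by \Cref{Discrete_Spectrum_H}(a), and let $N^<(\tau)$, $M^<(\tau)$ be the analogous counts with strict inequalities. Once $N\equiv M$ and $N^<\equiv M^<$ are established on $(-\infty,\min\sigma(H_0))$, the theorem follows from the combinatorial observation that $\mu_\ell(H)=E$ iff $N(E)\ge\ell$ and $N^<(E)\le\ell-1$, together with the matching characterization $\mu_\ell(\phi(E))=0$ iff $M(E)\ge\ell$ and $M^<(E)\le\ell-1$.

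To prove $N\equiv M$ I would track jumps as $\tau$ grows from $-\infty$. At $\tau=\Estar$ both counters vanish: by \Cref{Convergence_Theorem} we have $H>\Estar$, so $N(\Estar)=0$; and Hypothesis~(c) of \Cref{Convergence_Theorem} together with Remark~4 following it give $\phi(\tau)\ge c>0$ for all $\tau\le\Estar$, whence $M(\Estar)=0$. On $(\Estar,\min\sigma(H_0))$ each function $f_k(\tau):=\mu_k(\phi(\tau))$ is continuous, and strictly decreasing wherever $f_k(\tau)<c$, by \Cref{Continuity_of_the_Minimum}(b). Hence $M$ is locally constant off the common zero set $Z:=\{\tau:0\in\sigma(\phi(\tau))\}$, and by \Cref{Kernel_and_Eigenvalues} combined with Hypothesis~(H2), $Z$ coincides with the set of eigenvalues of $H$ in $(\Estar,\min\sigma(H_0))$. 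At such a $\tau_0\in Z$, strict monotonicity forces every $k$ with $f_k(\tau_0)=0$ to satisfy $f_k(\tau_0-\eps)>0$ and $f_k(\tau_0+\eps)<0$ for small $\eps>0$, so $M$ jumps at $\tau_0$ by exactly $\#\{k:f_k(\tau_0)=0\}=\dim\ker\phi(\tau_0)=\dim\ker(H-\tau_0)$—precisely the jump of $N$ at $\tau_0$. This yields $N\equiv M$, and $N^<\equiv M^<$ follows by subtraction, since $N^<(\tau)=N(\tau)-\dim\ker(H-\tau)$ and $M^<(\tau)=M(\tau)-\dim\ker\phi(\tau)$ agree term by term by \Cref{Kernel_and_Eigenvalues}.

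The main obstacle is justifying the jump analysis, i.e.\ showing that each eigenvalue of $\phi(\tau)$ crossing $0$ does so at a single point with the correct multiplicity, without lingering at $0$ or accumulating from the right. This is secured by the strict monotonicity of $f_k$ in the regime $f_k<c$ from \Cref{Continuity_of_the_Minimum}(b), together with the pure discreteness of $\sigma(\phi(\tau))\cap(-\infty,c)$ from \Cref{Discrete_Spectrum_H}(a); the former rules out any eigenvalue being stuck at $0$ over an interval and the latter allows the eigenvalues of $\phi(\tau)$ below the threshold $c$ to be unambiguously enumerated and tracked through $\tau_0$ on either side.
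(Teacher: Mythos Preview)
Your argument is correct and uses the same ingredients as the paper---the kernel isomorphism of \Cref{Kernel_and_Eigenvalues} under (H2), the continuity and strict monotonicity of $\tau\mapsto\mu_k(\phi(\tau))$ from \Cref{Continuity_of_the_Minimum}, and the positivity $\phi(\tau)\ge c$ for $\tau\le\Estar$---but it organizes them differently. The paper fixes $E$, observes that the index sets $\{k:\mu_k(H)=E\}$ and $\{k:\mu_k(\phi(E))=0\}$ are intervals of equal length (by the kernel isomorphism), and then shows by contradiction that their starting points coincide: a mismatch would force, via monotonicity and the intermediate value theorem, extra eigenvalues of $H$ below $E$ (or of $\phi(E)$ below $0$). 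Your route is the more global Birman--Schwinger counting argument: you prove directly that the two non-decreasing counting functions $N$ and $M$ agree on all of $(-\infty,\min\sigma(H_0))$ by matching their jumps, which are localized at the discrete set $Z$ and have equal size $\dim\ker(H-\tau_0)=\dim\ker\phi(\tau_0)$. This is somewhat cleaner conceptually and has the bonus that the equivalence of \Cref{Upper_Bound} (i.e.\ $N(E)\ge\ell\Leftrightarrow M(E)\ge\ell$) drops out immediately rather than requiring a separate argument. The paper's approach, on the other hand, avoids having to argue that $Z$ is discrete and that the jump of $M$ at each $\tau_0$ is exactly $\dim\ker\phi(\tau_0)$ (your final paragraph), since it never tracks $M$ as a function of $\tau$.
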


\begin{proof}
Fix an arbitrary $E <  \min \sigma(H_0)$ such that at least one of the sets
\begin{align*}
 M &:= \{ k \in \mathbb{N} \: : \: \mu_k(H) = E \}, \\
 N &:= \{ k \in \mathbb{N} \: : \: \mu_k(\phi(E)) = 0 \}
\end{align*}
is non-empty. By \Cref{Kernel_and_Eigenvalues}, $|M| = |N|$. Hence, there are integers $a,b \geq 1$ and $r \geq 0$ such that
$$
 M = \{ a, ..., a+r \} \quad \text{and} \quad
 N = \{ b, ..., b+r \}.
$$
To prove the theorem it suffices to show that $a = b$.

Suppose $a < b$. Then, $\mu_{a}(\phi(E)) < 0$. From $\phi(\mu)\geq c>0$, see \Cref{Convergence_Theorem}, and  \Cref{Continuity_of_the_Minimum} it follows that there exist $E_1,\ldots,E_a$ such that
$$
   \mu < E_1 \leq E_2 \leq \ldots \leq E_a < E
$$
and $\mu_i(\phi(E_i)) = 0$ for $i \in \{ 1,\ldots,a \}$. By \Cref{Kernel_and_Eigenvalues}, each $E_i$ is an eigenvalue of $H$, and if there is a group of $m \geq 1$ which agree, that is $E_s = E_{s+1}=\ldots=E_{s+m-1}$ for some $s$, then $\dim \ker \phi(E_s) \geq m$. Hence $E_s$ is an eigenvalue of $H$ of multiplicity $m$ or higher. These arguments show that $H$ has at least $a$ eigenvalues below $E$, which is in contradiction to $a \in M$.

Suppose $b < a$. Then, $\mu_b(H) < E$ and hence
$$
   b \leq \sum_{t < E} \dim \ker(H-t).
$$
Of course, this sum has only finitely many non-zero summands. By \Cref{Kernel_and_Eigenvalues},
\begin{align*}
   \dim\ker(H-t) = \dim\ker(\phi(t)) = \left| N(t) \right|,
\end{align*}
where $N(t) := \left\{ j \in \N \:|\: \mu_j(\phi(t)) = 0 \right\}$. By \Cref{Continuity_of_the_Minimum} , $N(s) \cap N(t) = \emptyset$ if $s \neq t$. This implies
\begin{align*}
   \sum_{t < E} \left| N(t) \right| = \left| \bigcup_{t < E} N(t) \right|.
\end{align*}
We conclude that
$$
   b \leq \left| \bigcup_{t < E} N(t) \right|.
$$
Hence there exists $j\in N(t)$ for some $t < E$ with $j\geq b$. By definition of $N(t)$, $\mu_j(\phi(t)) = 0$, and by \Cref{Continuity_of_the_Minimum}, $\mu_b(\phi(E))\leq \mu_j(\phi(E)) < 0$. This is in contradiction to $b \in N$.
\end{proof}

From the proof of \Cref{Minimum_Expectation_Value} or from the statement of this theorem in combination with  \Cref{Continuity_of_the_Minimum} (b) we obtain:

\begin{cor}\label{Upper_Bound}
If $E < \min \sigma(H_0)$, then
$$
   \mu_\ell(H) \leq E \quad \Leftrightarrow \quad \mu_\ell(\phi(E)) \leq 0.
$$
In particular, if there is non-zero $w \in D(\phi)$ such that  
$$
   \sprod{w}{\phi(E) w} \leq 0,
$$
then $\min \sigma(H)\leq E$.
\end{cor}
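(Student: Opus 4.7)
My plan is to derive the corollary by combining \Cref{Minimum_Expectation_Value}, which identifies the set of $E$ with $\mu_\ell(\phi(E)) = 0$ with the eigenvalues of $H$ of rank $\ell$, with the continuity and strict monotonicity of $\tau \mapsto \mu_\ell(\phi(\tau))$ supplied by \Cref{Continuity_of_the_Minimum}(b). Once the main equivalence is proved, the ``in particular'' claim will follow in one line from the variational characterisation of $\mu_1$.

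For the direction ``$\Rightarrow$'', I would set $E' := \mu_\ell(H)$, observing $E' \leq E < \min \sigma(H_0)$. \Cref{Minimum_Expectation_Value} then gives $\mu_\ell(\phi(E')) = 0$. Since the constant $c > 0$ from Hypothesis (c) of \Cref{Convergence_Theorem} is strictly positive, the value $\mu_\ell(\phi(E'))$ lies strictly below $c$, placing us in the strictly-decreasing regime of \Cref{Continuity_of_the_Minimum}(b); as $\tau$ increases from $E'$ to $E$ the value can only decrease further, so it stays below $c$ throughout and yields $\mu_\ell(\phi(E)) \leq 0$. For the direction ``$\Leftarrow$'', starting from $\phi(\Estar) \geq c$ (whence $\mu_\ell(\phi(\Estar)) \geq c > 0$) together with the hypothesis $\mu_\ell(\phi(E)) \leq 0$, the intermediate value theorem applied to the continuous function $\tau \mapsto \mu_\ell(\phi(\tau))$ on $[\Estar, E]$ produces an $E' \in [\Estar, E]$ with $\mu_\ell(\phi(E')) = 0$, and \Cref{Minimum_Expectation_Value} then gives $\mu_\ell(H) = E' \leq E$.

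The ``in particular'' statement follows by applying the proved equivalence with $\ell = 1$: if $w \in D(\phi) \setminus \{0\}$ satisfies $\sprod{w}{\phi(E) w} \leq 0$, the variational principle yields
\[
   \mu_1(\phi(E)) \leq \frac{\sprod{w}{\phi(E) w}}{\norm{w}^2} \leq 0,
\]
so that $\min \sigma(H) = \mu_1(H) \leq E$. I anticipate no real obstacle here; the only point demanding some attention is verifying that the strict-decrease clause ``$\mu_\ell(\phi(\tau)) < c$'' of \Cref{Continuity_of_the_Minimum}(b) is in force on the entire interval $[E', E]$ in the forward direction, but this is automatic because once the value reaches $0 < c$ the monotonicity just established prevents it from climbing back up to $c$.
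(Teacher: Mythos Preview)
Your proof is correct and follows precisely the route the paper itself indicates: deduce the equivalence from \Cref{Minimum_Expectation_Value} together with the continuity and strict monotonicity supplied by \Cref{Continuity_of_the_Minimum}(b), and then read off the $\ell=1$ case via the min-max characterisation. The only implicit point worth noting is that in the backward direction your interval $[\Estar,E]$ is nonempty because $\phi(\tau)\geq c$ for all $\tau\leq\Estar$ forces $E>\Estar$ under the hypothesis $\mu_\ell(\phi(E))\leq 0$; this is harmless.
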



\section{Regularization and strong resolvent convergence}
\label{Sect:Regularization_Schemes}

This section is devoted to the construction of the Hamiltonian $H:D(H)\subset\HH_N\to\HH_N$ of the Fermi polaron confined to a two-dimensional box with periodic boundary conditions. This operator is a realization of the formal expression \eqref{Formal_Expression}. We obtain $H$ as the limit in the strong resolvent sense of a sequence of regularized operators of the form \eqref{H_alpha_beta} with the help of \Cref{Convergence_Theorem}.  For three-dimensional systems the verification of Hypotheses (b) and (c) of \Cref{Convergence_Theorem} is much harder and requires a lower bound on $M$ \cite{Five_Italians}. On a technical level, in three dimensions the off-diagonal part of $\phi(z)$, see Equation~\eqref{explicit-phi} below, becomes unbounded, which is in contrast to \eqref{Phi_I_Bound}.

We recall from Section \ref{Sect:Fermi_Polaron_Box} that 
$$
C(\alpha,\beta) = \sup_{q \in \kappa \Z^2} \sum_k |\alpha(k) \beta(q-k)|^2.
$$

\begin{thm}\label{Regularization_Theorem}
Let $\alpha_n,\beta_n:\kappa \Z^2\to [0,1]$, $n\in\N$, be two sequences of functions on $\kappa \Z^2$ with $C(\alpha_n, \beta_n) < \infty$ for all $n \in \mathbb{N}$. Suppose that
\begin{enumerate}[label=(\Alph*), font=\normalfont]
 \item $\alpha_n(k), \beta_n(k) \to 1$ as $n \to \infty$ for all $k \in \kappa \Z^2$,
 \item $\gamma_n(q) \to 0$ as $n \to \infty$ for all $q \in \kappa \Z^2$ and $\sup_{n, q} |\gamma_n(q)| < \infty$, where
 \begin{equation} \label{gamma_Alternative_1}
  \gamma_n(q) = \sum_k |\alpha_n(k)| \cdot \frac{|\beta_n(-k) - \beta_n(q-k)|}{k^2 + q^2 + 1}
\end{equation}
or
\begin{equation} \label{gamma_Alternative_2}
  \gamma_n(q) = \sum_k |\beta_n(k)| \cdot \frac{|\alpha_n(-k) - \alpha_n(q-k)|}{k^2 + q^2 + 1}.
\end{equation}
\end{enumerate}
Let $H_0$ and $W_{\alpha_n, \beta_n}$ be defined as in Section~\ref{Sect:Fermi_Polaron_Box}, and let $g_n > 0$ be given by
\begin{equation}\label{Renormalization_Condition}
   g_n^{-1} = \sum_k \frac{\alpha_n(k)^2 \beta_n(-k)^2}{(1 + \frac{1}{M})k^2 - E_B},
\end{equation}
where $E_B < 0$ can be chosen arbitrarily as parameter of the system. Then, for all $N\in\N$ and $M>0$, the operators
$$
   H_n := H_0 - g_n W_{\alpha_n, \beta_n}
$$
converge to a self-adjoint operator $H$ in the strong resolvent sense as $n \to \infty$. The operator $H$ is bounded from below and does not depend on the choice of the sequences $(\alpha_n)_{n \in \N}$ and $(\beta_n)_{n \in \N}$.
\end{thm}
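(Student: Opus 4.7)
The plan is to apply \Cref{Convergence_Theorem} with the bounded operators $A_n := V_{\alpha_n,\beta_n}\in\LL(\HH_N,\tilde\HH_{N-1})$ supplied by \Cref{Regularized_Interaction}, fixing a reference parameter $\Estar<E_B$ sufficiently negative (how negative will be dictated by step (c) below). The preparatory step is to bring the Birman-Schwinger operator
$$
\phi_n(z) = g_n^{-1} - V_{\alpha_n,\beta_n}R_0(z)V_{\alpha_n,\beta_n}^{*}
$$
into normal form on $\tilde\HH_{N-1}$. Using the pull-through formulas to commute the annihilation operators in $V_{\alpha_n,\beta_n}$ past $R_0(z)$ and the creation operators in $V_{\alpha_n,\beta_n}^{*}$, and then sorting out the Wick contractions, one obtains a decomposition
$$
\phi_n(z) = F_n(z) - K_n(z),
$$
where $F_n(z)$ is a multiplication operator in the angel-momentum whose symbol reproduces, in the $N=1$ sector, the function $f(z,q)$ displayed at the end of Section~\ref{Sect:Fermi_Polaron_Box}, and $K_n(z)$ is a bounded ``exchange'' operator arising from the single off-diagonal Wick contraction that swaps the angel with one of the $N-1$ remaining fermions.

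Hypothesis (a) I would verify by first noting that the matrix elements of $V_{\alpha_n,\beta_n}R_0(\Estar)$ in the momentum basis converge pointwise by condition (A) and are dominated uniformly in $n$ by a square-summable function (using $\alpha_n,\beta_n\in[0,1]$ and the quadratic decay of the resolvent at $\Estar<0$); combined with the Hilbert-Schmidt-type bound of \Cref{Regularized_Interaction}(1), dominated convergence then upgrades this to operator-norm convergence to some $B_{\Estar}$. For hypothesis (b), the renormalisation condition \eqref{Renormalization_Condition} is engineered precisely so that the divergent sum in $g_n^{-1}$ cancels the singular part of the sum defining $F_n(\Estar)$ termwise as $n\to\infty$; on the dense domain $D\subset\tilde\HH_{N-1}$ of vectors with finitely many non-zero momentum components, condition (A) together with dominated convergence yields convergence of $F_n(\Estar)\psi$, while the exchange part $K_n(\Estar)\psi$ converges by (A) combined with the uniform norm bound supplied by (B). The limit $\phi(\Estar)$ differs from a symmetric multiplication operator by a bounded symmetric operator and is therefore essentially self-adjoint on $D$.

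The principal obstacle is hypothesis (c), the uniform strict positivity $\phi_n(\Estar)\geq c>0$. The strategy is to establish, for $\Estar$ chosen sufficiently negative,
$$
F_n(\Estar)\geq 2c \qquad\text{and}\qquad \|K_n(\Estar)\|\leq c
$$
uniformly in $n$. The first inequality follows by comparing $F_n(\Estar)$ with $F_n(E_B)$: monotonicity of the resolvent summand in the spectral parameter together with the identity $f(E_B,0)=0$ derived at the end of Section~\ref{Sect:Fermi_Polaron_Box} shows that the symbol of $F_n(\Estar)$ is at least $c'(|\Estar-E_B|)>0$ uniformly in the angel- and fermion-momenta and in $n$. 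The second inequality is where the specific form of hypothesis (B) enters: a Schur test applied to the kernel of $K_n(\Estar)$, together with Cauchy-Schwarz and one application of $|ab|\leq\tfrac12(a^2+b^2)$, reduces the bound on $\|K_n(\Estar)\|$ to a constant times $\sup_{n,q}\gamma_n(q)$, which is finite by (B). The telescoping differences $|\beta_n(-k)-\beta_n(q-k)|$ (or the $\alpha$-variant) appearing in \eqref{gamma_Alternative_1} and \eqref{gamma_Alternative_2} are precisely the cancellations that emerge between the exchange denominator and the counterterm in $g_n^{-1}$, which is why (B) is the natural hypothesis to impose.

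To finish, independence of $H$ from the particular regularising sequences follows from the resolvent representation \eqref{Resolvent_Limit_Operator}: the dominated-convergence arguments above identify $B_{\Estar}$ and $\phi(\Estar)$ on $D$ by explicit formulas in which the sequences $(\alpha_n)$ and $(\beta_n)$ have disappeared, so two admissible choices produce the same $B_{\Estar}$ and the same $\phi(\Estar)$, hence the same $(H-\Estar)^{-1}$ and the same self-adjoint operator $H$.
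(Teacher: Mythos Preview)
Your overall plan---apply \Cref{Convergence_Theorem} with $A_n=V_{\alpha_n,\beta_n}$ and split $\phi_n(z)$ into a diagonal multiplication part plus a bounded exchange part---is exactly the paper's approach. The gap is that you have misidentified where hypothesis (B) is actually needed, and this causes both parts of your argument for (c) to break.

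The exchange operator $K_n(z)=-\phi^I_n(z)$ has kernel
\[
\frac{\alpha_n(k)\alpha_n(l)\beta_n(q)^2}{H_f+\tfrac{1}{M}q^2+k^2+l^2-z},
\]
which contains no differences of the form $\beta_n(-k)-\beta_n(q-k)$; there is no counterterm $g_n^{-1}$ in this piece, so no telescoping occurs. Its norm is controlled by a plain Schur test (the paper's bound $\sup_l\sum_k (l^2+1)^{1/2}(k^2+1)^{-1/2}(k^2+l^2+1)^{-1}<\infty$), uniformly in $n$ and in $\tau\le-1$, with no reference to $\gamma_n$. So your claim that $\|K_n(\Estar)\|\lesssim\sup_{n,q}\gamma_n(q)$ is wrong, though the conclusion (uniform boundedness) is correct for a different reason.

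Conversely, hypothesis (B) is essential for the \emph{diagonal} part $F_n=\phi_n^0$. Its symbol at angel momentum $q$ and fermion energy $P^2$ is
\[
\mu_{\tau,n}(q,P^2)=g_n^{-1}-\sum_k\frac{\alpha_n(k)^2\beta_n(q-k)^2}{\tfrac{1}{M}(q-k)^2+k^2+P^2-\tau},
\]
and since $g_n^{-1}$ involves $\beta_n(-k)^2$ while the sum involves $\beta_n(q-k)^2$, the difference $\beta_n(-k)^2-\beta_n(q-k)^2$ is what $\gamma_n(q)$ controls. Your monotonicity argument for $F_n(\Estar)\ge 2c$ only gives $\mu_{\Estar,n}(q,P^2)\ge\mu_{E_B,n}(q,P^2)$ at each fixed $(q,P^2)$; the identity $f(E_B,0)=0$ says $\mu_{E_B,n}(0,0)=0$, but it says nothing about $\mu_{E_B,n}(q,0)$ for $q\neq 0$, which can be negative. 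The missing step is a uniform comparison $\mu_{\tau,n}(q,P^2)\ge\mu_{\tau,n}(0,0)-c$ in $q$ and $n$, and this is precisely where (B) enters (together with a separate geometric estimate for the $q$-dependence of the denominator). Once that is in hand, one lets $\tau\to-\infty$ to make $\mu_{\tau,n}(0,0)$ as large as desired, beating both the constant $c$ and the uniform exchange bound. The same $\gamma_n$-control is also needed for the convergence $\phi_n^0(\tau)\psi\to\phi^0(\tau)\psi$ in step (b).
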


\noindent\textbf{Remarks.}
\begin{enumerate}

\item It is possible to choose $\alpha_n(k) \equiv 1$ and $\beta_n \in \ell^2(\kappa \Z^2; \R)$ with $\lim_{n\to\infty}\beta_n(k)=1$ for each $k$, or vice versa. Then $C(\alpha_n,\beta_n)=\|\beta_n\|^2<\infty$ and \eqref{gamma_Alternative_2} vanishes. Hence the Hypotheses of \Cref{Regularization_Theorem} are satisfied. 

\item The conditions of \Cref{Regularization_Theorem} are also satisfied if we choose $\alpha_n = \beta_n = \hat{\eta}_n$, where $\hat{\eta}_n$ denotes the characteristic function of the set $\{k\in\kappa\Z^2:|k|\leq n\}$. In this case (A) is obvious and  
$$
   \gamma_n(q) = \sum_{k: |k|\leq n<|k-q|} \frac{1}{k^2 + q^2 + 1}.
$$
It is not hard to show, see the proof of \Cref{Riemann}, that $\gamma_n(q)$ is uniformly bounded and that $\lim_{n\to\infty}\gamma_n(q)=0$.
\end{enumerate}

In the rest of \Cref{Sect:Regularization_Schemes}, we prove  \Cref{Regularization_Theorem}. Let $(\alpha_n)_{n \in \mathbb{N}}$ and $(\beta_n)_{n \in \mathbb{N}}$ be sequences satisfying the hypotheses of  \Cref{Regularization_Theorem}, and let
\begin{equation} \label{def:V_n}
   V_n := \sum_{k,q} \alpha_n(k) \beta_n(q-k) \: m_q^* \: b_{q-k} \: a_k.
\end{equation}
We recall from  \Cref{Sect:Fermi_Polaron_Box} that $W_n := W_{\alpha_n, \beta_n} = V_n^* V_n$. Therefore, the existence statement of \Cref{Regularization_Theorem} follows from \Cref{Convergence_Theorem} provided that we can verify Conditions (a), (b) and (c) of that theorem for $A_n=V_n$. The problems in veryfing (c) for polarons in three dimensions, which is not even possible for small $M$, are the main reason for restricting ourselves to two-dimensional systems.
The following lemma is devoted to the verification of Condition (a) of \Cref{Convergence_Theorem}. In its proof and in the proof of \Cref{Facts_about_Phi} the pull-through formulas
\begin{align}\label{Pull_Through_Formula}
   (H_0-z)^{-1} a_k^* = a_k^{*}(H_0 + k^2-z)^{-1},\qquad (H_0-z)^{-1} b_k^* = b_k^* \left(H_0+ \tfrac{1}{M} k^2-z\right)^{-1},
\end{align}
valid for $z\in \R_{-}\cup(\C\backslash\R)$, play an essential role.

\begin{lemma} \label{Norm_Convergence_V}
Let Hypothesis (A) of \Cref{Regularization_Theorem} be satisfied. Then, for all $z\in\rho(H_0)$, the sequence $V_n (H_0 - z)^{-1}\in\LL(\HH_N,\tilde{\HH}_{N-1})$ in the limit $n\to\infty$ converges in norm to a compact operator $B_z \in \LL(\HH_N,\tilde{\HH}_{N-1})$. Consequently, $(H_0 - \overline{z})^{-1} V_n^* \to B_z^*$, $V \psi := \lim_{n \to \infty} V_n \psi$ exists for all $\psi \in D(H_0)$, and $B_z = V (H_0 - z)^{-1}$. The operator $B_z$ is independent of the choice of the sequences $(\alpha_n)$ and $(\beta_n)$.
\end{lemma}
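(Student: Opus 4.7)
The plan is to establish operator-norm convergence of $V_n(H_0-\tau)^{-1}$ at a single point $\tau<0$, extend to all $z\in\rho(H_0)$ by the first resolvent identity, and read off the remaining assertions as consequences. Only Hypothesis~(A) is needed here; Hypothesis~(B) will enter later to verify Condition~(b) of \Cref{Convergence_Theorem}.

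First, using the pull-through formulas \eqref{Pull_Through_Formula}, I rewrite \eqref{def:V_n} as
\begin{equation*}
V_n(H_0-\tau)^{-1} = \sum_{k,q} \alpha_n(k)\beta_n(q-k)\, m_q^{*}\bigl(H_0 + E_k(q)-\tau\bigr)^{-1} b_{q-k}a_k,
\end{equation*}
where $E_k(q) := k^2 + \frac{1}{M}(q-k)^2$. Then, estimating the sesquilinear form of $(V_n-V_m)(H_0-\tau)^{-1}$ exactly as in the proof of \Cref{Regularized_Interaction}, but using $H_0\geq 0$ to replace the inner resolvent norm by $(E_k(q)+|\tau|)^{-1}$, and applying Cauchy--Schwarz first in $k$ and then in $q$, I obtain
\begin{equation*}
\norm{V_n(H_0-\tau)^{-1}-V_m(H_0-\tau)^{-1}} \leq \sqrt{N}\, \sup_{q}\left(\sum_{k} \frac{|\Delta f_{nm}(k,q)|^{2}}{(E_k(q)+|\tau|)^2}\right)^{\!1/2},
\end{equation*}
where $\Delta f_{nm}(k,q) := \alpha_n(k)\beta_n(q-k) - \alpha_m(k)\beta_m(q-k)$.

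The crux is to show that this supremum tends to zero. Since $\alpha_n,\beta_n$ take values in $[0,1]$, the numerator is bounded by $4$ and the summand is dominated by $4(E_k(q)+|\tau|)^{-2}$. Completing the square yields $E_k(q)=(1+\tfrac{1}{M})(k-\tfrac{q}{M+1})^{2}+\tfrac{q^2}{M+1}$, so comparison with a two-dimensional integral gives $\sum_{k}(E_k(q)+|\tau|)^{-2}\leq C\bigl(q^2/(M+1)+|\tau|\bigr)^{-1}$. Hence the dominating sum is uniformly bounded in $q$ and tends to zero as $|q|\to\infty$, which is the only point where the two-dimensional setting is essential. Splitting $\sup_q$ into $|q|>Q$, where the majorant is already small, and the finite set $|q|\leq Q$, where dominated convergence applies by virtue of the pointwise convergence $\alpha_n,\beta_n\to 1$ from Hypothesis~(A), finishes the argument.

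Norm convergence at $\tau$ in hand, the limit $B_\tau$ is compact because $(H_0-\tau)^{-1}$ is compact on each $\HH_N$ (the free Hamiltonian on the box has purely discrete spectrum with finite multiplicities on finite-particle sectors). To extend to arbitrary $z\in\rho(H_0)$ I use the first resolvent identity $V_n(H_0-z)^{-1}=V_n(H_0-\tau)^{-1}\bigl(1+(z-\tau)(H_0-z)^{-1}\bigr)$ and pass to the norm limit, defining $B_z$ accordingly. Taking adjoints gives $(H_0-\bar z)^{-1}V_n^{*}\to B_z^{*}$. For $\psi\in D(H_0)$, the identity $V_n\psi=\bigl(V_n(H_0-z)^{-1}\bigr)(H_0-z)\psi$ shows that $V\psi:=\lim V_n\psi$ exists and equals $B_z(H_0-z)\psi$, whence $B_z=V(H_0-z)^{-1}$. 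Independence of the sequence follows by interleaving: given two admissible sequences, their interleaving is again admissible, and its unique norm limit must coincide with both individual limits. The main obstacle throughout is precisely the uniform-in-$q$ kernel bound in the third paragraph, which is also why three dimensions requires substantially more effort.
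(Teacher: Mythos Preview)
Your proof is correct and follows essentially the same route as the paper's: pull-through, reduction to a $\sup_q\sum_k$ kernel estimate, splitting into large $|q|$ (controlled by a $q$-dependent majorant) and finitely many small $|q|$ (dominated convergence via Hypothesis~(A)), compactness from compactness of $(H_0-\tau)^{-1}$, and extension to general $z$ by the first resolvent identity. The only cosmetic differences are that the paper defines $B_E$ explicitly via the limiting sesquilinear form (with $\alpha_n,\beta_n$ replaced by $1$) and then proves $V_n(H_0-E)^{-1}\to B_E$ directly---which makes independence of the sequences immediate---whereas you argue Cauchy-ness and invoke an interleaving trick; and the paper uses the simpler majorant $(k^2-E)^{-2}$ for the large-$|q|$ part rather than your completed-square bound.
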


\begin{proof}
It suffices to prove the assertion for $z = E < 0$. Then, it will follow for all $z \in \rho(H_0)$ by the argument in Remark 2 following \Cref{Convergence_Theorem}. We will define $B_E$ by the sesquilinear form $b_E: \tilde{\HH} \times \HH \to \C$ given by
$$
   b_E(w, \psi) = \sum_{k,q} \sprod{m_q w}{b_{q-k} a_k (H_0-E)^{-1} \psi}.
$$
The following estimates show that this series is absolutely convergent and that its sum defines a bounded sesquilinear form.
For all $w\in  \tilde{\HH}_{N-1}$ and $\psi\in \HH_N$, by the pull-through formulas \eqref{Pull_Through_Formula} and the Cauchy-Schwarz-inequality for the $(k,q)$-sum,
\begin{align*}
   \lefteqn{\sum_{k,q} |\sprod{m_q w}{b_{q-k} a_k (H_0-E)^{-1} \psi}| }\\*
   &\leq \sum_{k,q} \norm{m_q w} \cdot \norm{(H_0 + \frac{1}{M} (q-k)^2 + k^2 - E)^{-1} b_{q-k} a_k \psi} \\
   &\leq \sum_{k,q} \norm{m_q w}(k^2 - E)^{-1} \cdot \|b_{q-k} a_k \psi\| \\
   &\leq \Bigg(\sum_{k,q} \|m_q w \|^2(k^2 - E)^{-2} \Bigg)^{1/2}  \Bigg( \sum_{k,q} \norm{b_{q-k} a_k \psi}^2 \Bigg)^{1/2} \\
   & = \sqrt{N}  \Bigg( \sum_k (k^2 - E)^{-2} \Bigg)^{1/2}  \norm{w} \cdot \norm{\psi},
\end{align*}
where $\sum_q b_q^{*}b_q=\sum_q m_q^{*}m_q=1$ and $\sum_q a_k^{*}a_k=N$ on $\HH_N$ was used in the last identity.
Hence, there is a bounded operator $R_E \in \LL(\HH_N, \tilde{\HH}_{N-1})$ such that
$$
   b_E(w,\psi) = \sprod{w}{B_E \psi}.
$$
We now show that $V_n(H_0 - E)^{-1} \to B_E$ as $n \to \infty$ by estimates similar to those above. By definition of $V_n$ and $B_E$, for $w \in \tilde{\HH}_{N-1}$ and $\psi \in \HH_N$,
\begin{align}
   &|\sprod{w}{(V_n(H_0 - E)^{-1} - B_E) \psi}|\nonumber \\*
   &\leq \sum_{k,q} \frac{|\alpha_n(k) \beta_n(q-k) - 1|}{\frac{1}{M}(q-k)^2 + k^2 - E}\cdot \|m_q w\| \cdot \|b_{q-k} a_k \psi\| \nonumber \\
   &\leq \Bigg( \sum_{k,q} \norm{b_{q-k} a_k \psi}^2 \Bigg)^{1/2}\left( \sum_q \norm{m_q w}^2 \sum_k \frac{|\alpha_n(k) \beta_n(q-k) - 1|^2}{(\frac{1}{M}(q-k)^2 + k^2 - E)^2} \right)^{1/2} \nonumber \\
   &\leq \sqrt{N} \cdot \norm{w} \cdot \norm{\psi} \cdot \left( \sup_q \sum_k \frac{|\alpha_n(k) \beta_n(q-k) - 1|^2}{(\frac{1}{M}(q-k)^2 + k^2 - E)^2} \right)^{1/2}. \label{Sup_Int_Constant}
\end{align}
It remains to show that \eqref{Sup_Int_Constant} vanishes in the limit $n\to\infty$. This will follow from that fact that, by (A) and dominated convergence, 
$$
   \sum_k \frac{|\alpha_n(k) \beta_n(q-k) - 1|^2}{(\frac{1}{M}(q-k)^2 + k^2 - E)^2} \to 0
$$
if either $|q|\to\infty$ or $n\to\infty$, where then the convergence is uniform in $n\in\N$ if $|q|\to\infty$. Hence, given $\eps>0$ there exists $C_\eps$ such that for $|q|>C_\eps$ and all $n\in \N$,
\begin{equation}\label{sup-eps}
   \sum_k \frac{|\alpha_n(k) \beta_n(q-k) - 1|^2}{(\frac{1}{M}(q-k)^2 + k^2 - E)^2} < \eps,
\end{equation}
while for the finitely many values of $q \in \kappa \Z^2$ with $|q| \leq C_\eps$, there exists $N_\eps\in \N$ such that \eqref{sup-eps} holds for all $n\geq N_\eps$. In view of  \eqref{Sup_Int_Constant}, this proves that
$$
   |\sprod{w}{(V_n(H_0 - E)^{-1} - B_E) \psi}| < \sqrt{\eps} \cdot \sqrt{N} \cdot \norm{w} \cdot \norm{\psi}
$$
for all $n \geq N_\eps$. Hence $V_n(H_0 - E)^{-1} \to B_E$ as $n \to \infty$. The compactness of $B_E$ is a consequence of the compactness of $(H_0 - E)^{-1}$.
\end{proof}

Next, we verify that Conditions (b) and (c) of \Cref{Convergence_Theorem} are satisfied in the situation of \Cref{Regularization_Theorem}. To this end we write
 $$
     \phi_n(z) := g_n^{-1} - V_n (H_0 - z)^{-1} V_n^* \in \LL(\tilde{\HH}_{N-1})
$$ 
in normal ordered form. Making use of the pull-through formulas \eqref{Pull_Through_Formula}, the definition of $g_n$ given in \eqref{Renormalization_Condition}, and the identity $b_p b_q^* = \delta_{p,q}$ on the vacuum sector, we find that for $z\in\R_{-}\cup(\C\backslash\R)$,
$$
    \phi_n(z) = \phi^0_n(z) + \phi^I_n(z),
$$
where 
\begin{align}
   \phi^0_n(z) &:= \sum_q m_q^* \sum_k \left( \frac{\alpha_n(k)^2 \cdot \beta_n(-k)^2}{(1 + \frac{1}{M}) k^2 - E_B} - \frac{\alpha_n(k)^2 \cdot \beta_n(q-k)^2}{H_f + \frac{1}{M}(q-k)^2 + k^2 - z} \right) m_q \label{Explicit_Formula_Phi} \\
    \phi^I_n(z) &:= \sum_{k,l,q} m_{q+k}^* \: a_l^* \: \frac{\alpha_n(k) \cdot \alpha_n(l) \cdot \beta_n(q)^2}{H_f + \frac{1}{M} q^2 + k^2 + l^2 - z} \: a_k \: m_{q+l} \nonumber
\end{align}
and $H_f := \sum_k k^2 \: a_k^* a_k$.
Let $D \subset \tilde{\HH}$ be the dense subspace of all finite linear combinations of vectors of the form $\ph_q \otimes \ph_{p_1} \wedge\ldots \wedge \ph_{p_{N-1}}$. Recall that $\ph_p \in L^2(\Omega)$ denotes a plane wave with momentum $p \in \kappa \Z^2$ defined by $\ph_p(x) := L^{-1} \cdot \exp(ipx)$.

\begin{lemma} \label{Facts_about_Phi}
Suppose Hypotheses (A) and (B) of \Cref{Regularization_Theorem} are satisfied. Then, with $D$ and $\phi_n(z)$ as defined above:
\begin{enumerate}[label=(\roman*), font=\normalfont]
 \item There is an operator $\phi(z): D \to \tilde{\HH}_{N-1}$ such that $\phi_n(z) w \to \phi(z) w$ as $n \to \infty$ for all $w \in D$ and $z\in \rho(H_0)$. The operator $\phi(z)$ is essentially self-adjoint for $z \in \R \cap \rho(H_0)$, and it is independent of the choice of the sequences $(\alpha_n)$ and $(\beta_n)$. Explicitely, for $z\in \R_{-}\cup(\C\backslash\R)$,
\begin{align}
   \phi(z) =\ &\sum_q m_q^*  \sum_k \left( \frac{1}{(1 + \frac{1}{M}) k^2 - E_B} - \frac{1}{H_f + \frac{1}{M}(q-k)^2 + k^2 - z} \right)m_q\nonumber \\
        &+ \sum_{k,l,q} m_{q+k}^* \: a_l^* \: \frac{1}{H_f + \frac{1}{M} q^2 + k^2 + l^2 - z} \: a_k \: m_{q+l}. \label{explicit-phi}
\end{align}
 \item For every $c > 0$, there is a $\tau_c < 0$ such that $\phi_n(\tau_c) \geq c$ for $n$ sufficiently large.
\end{enumerate}
\end{lemma}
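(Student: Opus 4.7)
My plan is to treat parts (i) and (ii) separately, using the explicit normal-ordered expressions for $\phi_n^0(z)$ and $\phi_n^I(z)$ displayed just above the statement, combined with the limit hypotheses (A) and (B) respectively.

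For part (i), I would fix a basis vector $w = m_q^* a_{p_1}^* \cdots a_{p_{N-1}}^* \vac$ and compute $\phi_n(z) w$ by applying the explicit expressions term by term. After the contractions with $w$ only finitely many momenta remain free, so one is left with an absolutely convergent series indexed by $k$ in the diagonal part, and by $(k,l)$ in the off-diagonal part, of rational functions of momentum multiplied by factors of $\alpha_n\beta_n$. Hypothesis (A) provides pointwise convergence of each summand to the corresponding summand of \eqref{explicit-phi}. A dominated-convergence argument then yields $\phi_n(z) w \to \phi(z) w$: in $\phi_n^0(z)$ the two fractions combine into a summand of order $|k|^{-4}$ at infinity, which is summable in two dimensions, while for $\phi_n^I(z)$ Cauchy-Schwarz on $(k,l)$ together with Hypothesis (B) yields a summable bound uniform in $n$. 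Essential self-adjointness of $\phi(z)$ on $D$ for real $z \in \rho(H_0)$ follows by observing that $\phi^0(z)$ is diagonal in the momentum basis of $D$ and hence essentially self-adjoint there, whereas $\phi^I(z)$ extends to a bounded symmetric operator via the same Cauchy-Schwarz bound. Independence from the sequences is evident from the explicit formula \eqref{explicit-phi}.

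For part (ii), I would again split $\phi_n(\tau) = \phi_n^0(\tau) + \phi_n^I(\tau)$. For the diagonal part, positivity of $H_f$ and $\alpha_n,\beta_n \leq 1$ imply the operator inequality
\[
\phi_n^0(\tau) \geq \sum_q m_q^*\, G_n(q,\tau)\, m_q, \qquad G_n(q,\tau) := g_n^{-1} - \sum_k \frac{\alpha_n(k)^2 \beta_n(q-k)^2}{(q-k)^2/M + k^2 - \tau}.
\]
Substituting the definition \eqref{Renormalization_Condition} of $g_n^{-1}$ writes $G_n(q,\tau)$ as a single sum of a difference of two fractions whose large-$|k|$ tails cancel; I would then split the sum into small-$|k|$ and large-$|k|$ regions and argue that $\inf_{n,q} G_n(q,\tau) \to +\infty$ as $\tau \to -\infty$. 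For $\phi_n^I(\tau)$, applying Cauchy-Schwarz on $(k,l)$ to the matrix coefficients, together with the pull-through bound $(H_f + q^2/M + k^2 + l^2 - \tau)^{-1} \leq (k^2 + l^2 - \tau)^{-1}$, produces an estimate of the form
\[
\|\phi_n^I(\tau)\| \leq C\, \sup_q |\gamma_n(q)|\, h(\tau)
\]
with $h(\tau) \to 0$ as $\tau \to -\infty$, and Hypothesis (B) ensures $\sup_n \sup_q |\gamma_n(q)| < \infty$. Combining the two bounds gives $\phi_n(\tau) \geq c$ for all sufficiently negative $\tau$ and large $n$.

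The main obstacle I anticipate is the uniform-in-$(n,q)$ lower bound on $G_n(q,\tau)$. Because each of the two sums that define $G_n$ diverges individually when the cutoff is removed, the argument must exploit their structural similarity rather than bound them in isolation; the key manipulation is the change of variables $k \mapsto q-k$ in the second sum, after which the difference can be estimated via an integral comparison that is independent of the cutoff parameters. A secondary delicate point is that the estimate on $\phi_n^I(\tau)$ must genuinely vanish as $\tau \to -\infty$ rather than just remain bounded, so care is needed to extract the decaying factor $h(\tau)$ when performing the Cauchy-Schwarz step on the $(k,l,q)$ sum.
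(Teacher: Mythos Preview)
Your plan has the roles of Hypotheses (A) and (B) inverted, and this is a genuine gap. In the diagonal part $\phi_n^0(z)$ acting on a basis vector you obtain the sum
\[
\sum_k\left(\frac{\alpha_n(k)^2\beta_n(-k)^2}{(1+\tfrac1M)k^2-E_B}-\frac{\alpha_n(k)^2\beta_n(q-k)^2}{\tfrac1M(q-k)^2+k^2+P^2-\tau}\right),
\]
and the two fractions do \emph{not} combine into a single $O(|k|^{-4})$ summand, because the cutoff factors $\beta_n(-k)^2$ and $\beta_n(q-k)^2$ differ. After splitting off the common factor $\alpha_n(k)^2\beta_n(-k)^2$, the bracket is $O(|k|^{-3})$ (still summable in 2D), but the residual term carries $\alpha_n(k)^2\big(\beta_n(-k)^2-\beta_n(q-k)^2\big)$ divided by a quadratic in $k$; this is precisely (a constant times) $\gamma_n(q)$, and Hypothesis~(B) is what makes it vanish and stay bounded. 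Without (B) your dominated-convergence argument for $\phi_n^0$ fails. Conversely, $\phi_n^I(z)$ needs no appeal to $\gamma_n$ at all: its coefficients are $\alpha_n(k)\alpha_n(l)\beta_n(q)^2\le 1$, and a Cauchy--Schwarz argument on the $(k,l,q)$ sum gives a bound $\|\phi_n^I(\tau)\|\le C(N-1)$ uniform in $n$ and in $\tau\le -1$. So (B) enters in the diagonal piece, not the off-diagonal one.

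For part (ii) the same reversal causes trouble. Your proposed estimate $\|\phi_n^I(\tau)\|\le C\sup_q|\gamma_n(q)|\,h(\tau)$ is unmotivated, since no $\alpha_n$- or $\beta_n$-differences occur in $\phi_n^I$; the paper simply uses the uniform bound on $\phi_n^I$ and lets the \emph{diagonal} part supply the growth. For the diagonal lower bound, the substitution $k\mapsto q-k$ you propose swaps the mass factors and gives $(q-k)^2+\tfrac1M k^2$ rather than $\tfrac1M(q-k)^2+k^2$, so it does not align the two denominators unless $M=1$. The paper instead compares $\mu_{\tau,n}(q,0)$ to $\mu_{\tau,n}(0,0)$ by splitting the $k$-sum at $|k|=|q|/2$ (using $q^2-2k\cdot q\ge 0$ on the inner region) and again invoking (B) to control the $\beta_n(-k)^2-\beta_n(q-k)^2$ remainder; the growth then comes from $\mu_{\tau,n}(0,0)\to\mu_\tau(0,0)\to+\infty$ as $\tau\to-\infty$.
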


\noindent
\textbf{Remark.} In the following, we use the notation $\phi(z)$ for the closure of $\phi(z)\rst D$.

\begin{proof}
It suffices to prove (i) for $z = \tau\leq -1$. Then, the statement will follow all $z\in\rho(H_0)$ from 
\begin{equation}\label{resolvent-argument}
          \phi_n(z)-\phi_n(\tau) = (\tau - z)\cdot V_n(H_0-z)^{-1}(H_0-\tau)^{-1} V_n^{*}
\end{equation}
and from \Cref{Norm_Convergence_V}. Throughout this proof, we assume that $\gamma_n$ is given by \eqref{gamma_Alternative_1}. For $\gamma_n$ given by \eqref{gamma_Alternative_2}, the proof proceeds along the same lines after the roles of $\alpha_n$ and $\beta_n$ have been interchanged in $\phi_n^0(z)$ by the substitutions  $k \to -k$ and $k \to q-k$ of the summation indices in \eqref{Explicit_Formula_Phi}. 

To prove (i) we set $v := \ph_q \otimes \ph_{p_1} \wedge \ldots\wedge \ph_{p_{N-1}} \in D$ and write $P^2 := p_1^2 + \ldots + p_{N-1}^2$. Then, $\phi^0_n(\tau) v = \mu_{\tau, n}(q, P^2) v$ with
\begin{equation}\label{Definition_Mu_Tau_n}
   \mu_{\tau, n}(q,P^2) = \sum_k \left( \frac{\alpha_n(k)^2 \cdot \beta_n(-k)^2}{(1 + \frac{1}{M}) k^2 - E_B} - \frac{\alpha_n(k)^2 \cdot \beta_n(q-k)^2}{\frac{1}{M}(q-k)^2 + k^2 + P^2 -\tau} \right).
\end{equation}
The eigenvalue $\mu_{\tau,n}(q,P^2)$ is convergent as $n \to \infty$ with limit
\begin{equation} \label{Definition_Mu_Tau}
   \mu_{\tau}(q,P^2) = \sum_k \left( \frac{1}{(1+\frac{1}{M})k^2 - E_B} - \frac{1}{\frac{1}{M}(q-k)^2 + k^2 + P^2  -\tau} \right).
\end{equation}
This can be seen by writing
\begin{align}
   \mu_{\tau,n}(q,P^2) &= \sum_k \alpha_n(k)^2 \beta_n(-k)^2 \left( \frac{1}{(1 + \frac{1}{M})k^2 - E_B} - \frac{1}{\frac{1}{M}(q-k)^2 + k^2 + P^2  -\tau} \right)\nonumber \\
   &\qquad + \sum_k \alpha_n(k)^2 \cdot \frac{\beta_n(-k)^2 - \beta_n(q-k)^2}{\frac{1}{M}(q-k)^2 + k^2 + P^2  -\tau}. \label{phi-ev-2}
\end{align}
The first sum converges to $\mu_\tau(q, P^2)$ by (A) and by dominated convergence, because the term in brackets is $O(|k|^{-3})$ for $|k| \to \infty$. The second sum can be estimated from above in absolute value by a constant times $\gamma_n(q)$ because for  $\tau\leq -1$
\begin{equation}\label{Denominator_Estimate}
   \frac{1}{M}(q-k)^2 + k^2 + P^2  -\tau \geq c (k^2 + q^2 + 1),
\end{equation}
with some $c>0$.  Inequality \eqref{Denominator_Estimate} follows from the fact that $M^{-1}(q-k)^2 + k^2$ is homogeneous of degree two and strictly positive on the compact set $\{(k,q)\mid k^2+ q^2=1\}$. Thus, by (B), the second term of \eqref{phi-ev-2}, like $\gamma_n(q)$, vanishes in the limit $n\to\infty$.

We conclude that $\phi^0_n(\tau) w$ converges for all $w \in D$, and we denote the limit by $\phi^0(\tau) w$. The operator $\phi^0(\tau)$ is essentially self-adjoint on $D$, since the vectors of the form $\ph_q \otimes \ph_{p_1} \wedge\ldots \wedge \ph_{p_{N-1}}$ form a basis of eigenvectors associated with real eigenvalues.

By definition of $\phi^I_n(\tau)$ and $v$,
\begin{multline}
 \phi^I_n(\tau) v = \sum_l \sum_{j=1}^{N-1} (-1)^{j-1} \Bigg(\frac{\alpha_n(p_j) \alpha_n(l) \beta_n(q - l)^2}{\frac{1}{M}(q-l)^2 + l^2 + P^2  -\tau} \\ 
 \times\ph_{q+p_j-l} \otimes \ph_l \wedge \ph_{p_1} \wedge \ldots \wedge \widehat{\ph_{p_j}} \wedge \ldots \wedge \ph_{p_{N-1}}\Bigg), \label{Phi^I_n_v}
\end{multline}
where the notation $\widehat{\ph_{p_j}}$ means that the vector $\ph_{p_j}$ is omitted.
This is an expansion in an orthonormal system, where, by (A), the coefficients have a limit as $n\to\infty$, and by the uniform boundedness of the numerators, they have a square summable majorant. If follows that $\phi^I_n(\tau)v$, in the limit $n\to\infty$, converges to
\begin{equation}
 \sum_l \sum_{j=1}^{N-1} \frac{(-1)^{j-1}}{\frac{1}{M}(q-l)^2 + l^2 + P^2  -\tau}\\
 \ph_{q+p_j-l} \otimes \ph_l \wedge \ph_{p_1} \wedge \ldots \wedge \widehat{\ph_{p_j}} \wedge \ldots \wedge \ph_{p_{N-1}}.\label{PhiI_Product_State}
\end{equation}

Next, we show that the operators $\phi_n^I(\tau)$ are uniformly bounded in $n$ and $\tau\leq -1$. Let $w \in \tilde{\HH}_{N-1}$. By definition of  $\phi_n^I(\tau)$, $\sprod{w}{\phi_n^I(\tau) w}$ is bounded in absolute value by a constant times
\begin{align}   
   &\sum_{k,l,q} \norm{a_l m_{q+k} w} \cdot \norm{(H_f + \frac{1}{M} q^2 + k^2 + l^2 -\tau)^{-1} a_k m_{q+l} w} \label{Start_Estimate_Phi_I} \\
   &\leq \sum_{k,l} \frac{1}{k^2 + l^2 +1}\sum_q\norm{a_l m_{q+k} w} \cdot \norm{a_k m_{q+l} w} \nonumber \\
   &\leq \sum_{k,l}  \frac{1}{k^2 + l^2 +1} \left( \sum_q \norm{a_l m_{q+k} w}^2 \right)^{1/2} \left( \sum_q \norm{a_k m_{q+l} w}^2 \right)^{1/2} 
    =\sum_{k,l} \frac{\norm{a_l w} \cdot \norm{a_k w}}{k^2 + l^2 +1} \nonumber \\
   &\leq \left( \sum_{k,l} \norm{a_l w}^2 \frac{(l^2 +1)^{1/2}}{(k^2 +1)^{1/2} (k^2+l^2+1)} \right)^{1/2} \left( \sum_{k,l} \norm{a_k w}^2 \frac{(k^2+1)^{1/2}}{(l^2 +1)^{1/2} (k^2 + l^2 +1)} \right)^{1/2} \nonumber \\
   &\leq (N-1) \cdot \norm{w}^2 \cdot \sup_{l \in \kappa \Z^2} \sum_{k} \frac{(l^2 +1)^{1/2}}{(k^2 +1)^{1/2} (k^2+l^2+1)}. \nonumber
\end{align}
Upon a comparison of this sum with the integral
$$
     \int_{\R^2}\frac{1}{|k|(k^2+l^2+1)}dk =\frac{\pi^2}{\sqrt{l^2+1}},
$$
see \Cref{Riemann}, we conclude that,  
\begin{equation}\label{Phi_I_Bound}
   \sup_{n\in\N}\norm{\phi_n^I(\tau)} \leq \const \cdot (N-1), \qquad \tau\leq -1.
\end{equation}
From the convergence of \eqref{Phi^I_n_v}, which  extends to all $v \in D$, and from the uniform bound \eqref{Phi_I_Bound}, it follows that $\phi^I_n(-1)$ is strongly convergent to an operator $\phi^I(-1) \in \LL(\tilde{\HH})$. Since $\phi^0(-1)$ is essentially self-adjoint on $D$ and $\phi^I(-1)$ is bounded and symmetric, we see that $\phi(-1): D \to \tilde{\HH}$ with $\phi(-1) = \phi^0(-1) + \phi^I(-1)$, which is the strong limit of $\phi_n(-1)$ on $D$ as $n\to\infty$, is essentially self-adjoint. In view of \eqref{resolvent-argument} the proof of (i) is complete.

To prove (ii) we show that there is a $c > 0$, independent of $\tau\leq -1$, such that
\begin{equation} \label{First_Step_For_Positivity}
   \mu_{\tau,n}(q, P^2) \geq \mu_{\tau, n}(0,0) - c 
\end{equation}
for all $q \in \kappa \Z^2$, $P^2 \geq 0$ and $n \in \mathbb{N}$. Since $\mu_{\tau, n}(q, P^2) \geq \mu_{\tau, n}(q,0)$, it suffices to verify that $\mu_{\tau, n}(q,0) - \mu_{\tau, n}(0,0)$ is bounded from below uniformly in $n$ and $q$. By the convergence of $\mu_{\tau, n}(q,0)$ in the limit $n \to \infty$, we may omit finitely many values of $q\in \kappa\Z^2$ and assume that $|q| > 4\sqrt{2}\kappa$. We write
\begin{eqnarray}
 \lefteqn{\mu_{\tau,n}(q,0) - \mu_{\tau,n}(0,0)}\nonumber\\
 &&= \sum_k \alpha_n(k)^2 \beta_n(-k)^2 \left( \frac{1}{(1+\frac{1}{M})k^2 - \tau} - \frac{1}{\frac{1}{M}(q-k)^2 + k^2 - \tau} \right) \nonumber\\
 &&\qquad + \sum_k \alpha_n(k)^2 \cdot \frac{\beta_n(-k)^2 - \beta_n(q-k)^2}{\frac{1}{M}(q-k)^2 + k^2 - \tau}. \label{Decomposition_for_Positivity}
\end{eqnarray}
The difference of the two denominators in the first sum equals $M^{-1}(q^2-2kq)$, which is non-negative for $|k| \leq |q|/2$. Therefore, the first sum in \eqref{Decomposition_for_Positivity} is bounded from below by
\begin{align*}
 &\frac{1}{M} \sum_{|k| > |q|/2} \alpha_n(k)^2 \beta_n(-k)^2 \frac{q^2 - 2 k \cdot q}{((1+\frac{1}{M})k^2 - \tau)(\frac{1}{M}(q-k)^2 + k^2 - \tau)} \\
 &\geq -\frac{2}{M} \sum_{|k| > |q|/2} \alpha_n(k)^2 \beta_n(-k)^2 \frac{|k|\cdot |q|}{((1+\frac{1}{M})k^2 - \tau)(\frac{1}{M}(q-k)^2 + k^2 - \tau)} \\
 &\geq - \frac{2}{M} \sum_{\substack{k \in \kappa \Z^2 \\ |k| > |q|/2}} \frac{|q|}{(k^2 + 1)^{3/2}},
\end{align*}
where $-\tau \geq 1$ was used and some positive terms were dropped in the denominator. The remaining sum can be estimated in terms of integrals over the $|k|$-range $|q|/2-\sqrt{2}\kappa<|k|<\infty$, which  is contained in $|q|/4<|k|<\infty$ because $\sqrt{2}\kappa<|q|/4$ by assumption. See the proof of \Cref{Riemann}. From these integrals we conclude that the first sum in \eqref{Decomposition_for_Positivity} is bounded from below uniformly in $q$, $n$ and $\tau \leq -1$. The same is true for the second sum in \eqref{Decomposition_for_Positivity}, because, by \eqref{Denominator_Estimate}, it can be estimated in terms of $\gamma_n(q)$, which is uniformly bounded by Hypothesis (B). Thus, \eqref{First_Step_For_Positivity} is proved.

As already mentioned, vectors of the form $\ph_q \otimes \ph_{p_1} \wedge\ldots \wedge \ph_{p_{N-1}}$ with $q, p_1,\ldots, p_{N-1} \in \kappa \Z^2$ form a total set of eigenvectors of $\phi_n^0(\tau)$ with eigenvalues $\mu_{\tau, n}(q, P^2)$. Hence, by  \eqref{First_Step_For_Positivity},
$$
   \phi_n^0(\tau) \geq \inf_{q \in \kappa \Z^2, P^2 > 0} \mu_{\tau, n}(q, P^2) \geq \mu_{\tau,n}(0,0) - c.
$$
Statement (ii) now follows from the uniform bound \eqref{Phi_I_Bound} on $\phi_n^I(\tau)$, from the convergence $\mu_{\tau,n}(0,0) \to \mu_\tau(0,0)$ as $n \to \infty$, and from $\mu_\tau(0,0) \to \infty$ as $\tau \to -\infty$.
\end{proof}

\begin{proof}[Proof of \Cref{Regularization_Theorem}]
The existence statement of \Cref{Regularization_Theorem} is a consequence of \Cref{Convergence_Theorem}, \Cref{Norm_Convergence_V} and \Cref{Facts_about_Phi}. 
 Equation~\eqref{Resolvent_Limit_Operator} of \Cref{Regularization_Theorem} expresses the resolvent $(H-\mu)^{-1}$ in terms of the operators $B_\mu$ and $\phi(\mu)$, which are independent of the sequences $(\alpha_n)$ and $(\beta_n)$. This completes the proof of \Cref{Regularization_Theorem}.
\end{proof}

We conclude this section by emphasizing that Assumptions (H1) and (H2) hold for the Hamiltonian $H$ constructed in \Cref{Regularization_Theorem}. The validity of (H1) is obvious and (H2) follows from the following lemma. Consequently, the results of Section 6 apply to this case including the Birman-Schwinger principle of \Cref{Minimum_Expectation_Value}. 

\begin{lemma} \label{Notin_Lemma}
 Let $w \in \tilde{\HH}_{N-1}$ with $w \neq 0$. Then $B_z^* w \notin D(H_0^{1/2})$ for all $z \in \rho(H_0)$. In particular, $\ker B_z^* \cap \tilde{\HH}_{N-1} = \{ 0 \}$.
\end{lemma}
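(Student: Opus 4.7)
The goal is to prove $B_z^* w \notin D(H_0^{1/2})$ for every $w \in \tilde\HH_{N-1} \setminus \{0\}$; the claimed $\ker B_z^* \cap \tilde\HH_{N-1} = \{0\}$ then follows because $0 \in D(H_0^{1/2})$. First, writing $B_z^* = B_\Estar^* + (\bar z - \Estar) R_0(\bar z) B_\Estar^*$ via \eqref{RE_Rz} and noting that $\ran R_0(\bar z) \subset D(H_0) \subset D(H_0^{1/2})$, the problem reduces to $z = \Estar$.

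The plan is to make $B_\Estar^* w$ explicit in momentum space and exhibit a divergent subseries of $\|H_0^{1/2} B_\Estar^* w\|^2$. Expanding $w = \sum_{q, \vec p} c_{q, \vec p}\, m_q^* a_{p_1}^* \cdots a_{p_{N-1}}^* \vac$ over ordered tuples $\vec p$, applying the pull-through formulas \eqref{Pull_Through_Formula} to $(H_0 - \Estar)^{-1} V_n^* w$, and passing to the norm limit via \Cref{Norm_Convergence_V}, one finds that the coefficient of the basis vector $b_l^* a_{r_1}^* \cdots a_{r_N}^* \vac$ of $\HH_N$ in $B_\Estar^* w$ is proportional to
$$
A_{l, \vec r} \;=\; \frac{1}{\tfrac{1}{M} l^2 + |\vec r|^2 - \Estar} \sum_{j=1}^N (-1)^{j-1}\, c_{l + r_j,\, \vec r^{(j)}}, \qquad \vec r^{(j)} := (r_1, \ldots, \widehat{r_j}, \ldots, r_N).
$$
The crucial algebraic point is that all $N$ terms in the numerator share the same denominator, because $r_j^2 + |\vec r^{(j)}|^2 = |\vec r|^2$.

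To conclude $\|H_0^{1/2} B_\Estar^* w\|^2 = \sum_{l, \vec r} (\tfrac{1}{M} l^2 + |\vec r|^2) |A_{l, \vec r}|^2 = \infty$, pick $(Q, \vec p)$ with $c_{Q, \vec p} \neq 0$ and restrict the sum to the subsequence $l = Q - r$, $\vec r = \vec p \cup \{r\}$ indexed by those $r$ that are lexicographically larger than every entry of $\vec p$. Then $\vec r^{(N)} = \vec p$ and the $j = N$ contribution to the numerator equals the nonzero constant $\pm c_{Q, \vec p}$. For each $j < N$ the multi-index $(l + r_j, \vec r^{(j)})$ carries the large variable $r$ in its fermionic part, so the map $r \mapsto (l + r_j, \vec r^{(j)})$ is injective and Bessel's inequality $\sum_r |c_{l + r_j, \vec r^{(j)}}|^2 \leq \|w\|^2$ forces those coefficients to tend to $0$ as $|r| \to \infty$. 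Hence the numerator of $A_{l, \vec r}$ stays bounded below by $\tfrac{1}{2}|c_{Q, \vec p}|$ for $|r|$ large, while $(\tfrac{1}{M} l^2 + |\vec r|^2)/(\tfrac{1}{M} l^2 + |\vec r|^2 - \Estar)^2 \sim (1 + M^{-1})^{-1} r^{-2}$. The restricted subsum is therefore bounded below by a positive multiple of $\sum_{|r| > R} r^{-2}$, which diverges logarithmically on the two-dimensional lattice $\kappa \Z^2$.

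The main technical hurdle is the cross-term cancellation in $A_{l, \vec r}$: a priori the $j < N$ summands might conspire to annihilate the $j = N$ contribution. The injectivity-plus-Bessel argument dispenses with this, crucially using that for $j < N$ the large momentum $r$ sits inside the fermionic multi-index of $c$.
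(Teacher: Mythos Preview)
Your proof is correct and proceeds by a genuinely different route from the paper's. The paper argues by contradiction at the operator level: assuming $B_{-1}^* w \in D(H_0^{1/2})$, it expresses $\|(1+H_0)^{1/2}B_{-1}^*w\|^2$ as $\lim_{\eps\downarrow 0}\lim_{n\to\infty}\sprod{w}{V_n(1+H_0)^{-1}(1+\eps H_0)^{-1}V_n^* w}$, rewrites the inner operator via \eqref{resolvent-argument} as $(\phi_n(-1/\eps)-\phi_n(-1))/(1-\eps)$, and then exploits the already-established spectral analysis of $\phi_n^0$ (the eigenvalues $\mu_{\tau,n}(q,P^2)$) together with the uniform bound \eqref{Phi_I_Bound} on $\phi_n^I$ to see that the limit diverges unless $w=0$. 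This recycles the machinery built in \Cref{Facts_about_Phi} and avoids any explicit momentum-space expansion of $B_\Estar^*w$.

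Your approach, by contrast, writes out $B_\Estar^*w$ explicitly in the Slater basis of $\HH_N$ and isolates a divergent subseries of $\|H_0^{1/2}B_\Estar^*w\|^2$ directly. The key technical ingredient---using injectivity of $r\mapsto(l+r_j,\vec r^{(j)})$ for $j<N$ together with Bessel's inequality to kill the cross terms---is elegant and specific to the fermionic structure. Your argument is more elementary (it does not invoke $\phi_n$ at all) and makes the logarithmic divergence in two dimensions visible at the level of a single $\sum_r r^{-2}$; the paper's argument is shorter given what has already been proved and generalises more readily, since it only needs $\mu_\tau(q,P^2)\to\infty$ as $\tau\to-\infty$ rather than any explicit tail estimate. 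One small point: your phrase ``lexicographically larger'' should be read with respect to a total order on $\kappa\Z^2$ for which $\{r:r>p\}$ contains all but finitely many lattice points (e.g.\ order first by $|r|$), so that the allowed $r$'s exhaust large momenta; with that understood the argument goes through.
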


\begin{proof}
Assume that $B_{-1}^* w \in D(H_0^{1/2})$ for some $w \in \tilde{\HH}$. By \Cref{Norm_Convergence_V},  $B_{-1}^{*}=\lim_{n\to\infty}(1 + H_0)^{-1}V_{n}^*$, where $V_n$ is defined by \eqref{def:V_n} with arbitrary cutoffs $\alpha_n, \beta_n$ satisfying the hypotheses of \Cref{Regularization_Theorem}. Therefore,
\begin{align} 
  \norm{(1 + H_0)^{1/2} B_{-1}^* w}^2 
  &= \lim_{\eps \downarrow 0} \: \norm{(1 + \eps H_0)^{-1/2} (1 + H_0)^{1/2} B_{-1}^* w}^2 \nonumber\\
 &= \lim_{\eps \downarrow 0} \lim_{n \to \infty} \norm{(1 + \eps H_0)^{-1/2} (1 + H_0)^{1/2} (1 + H_0)^{-1} V_{n}^* w}^2 \nonumber\\
  &= \lim_{\eps \downarrow 0} \: \lim_{n \to \infty} \: \sprod{w}{V_n (1 + H_0)^{-1} (1 + \eps H_0)^{-1} V_{n}^* w}. \label{Notin_Proof_1}
\end{align}
By \eqref{resolvent-argument}, for all $\eps\in (0,1)$,
\begin{align*}
   V_n (1 + H_0)^{-1} (1 + \eps H_0)^{-1} V_{n}^* &= (\phi_n(-1/\eps) - \phi_n(-1))\frac{1}{1-\eps}\\
&\geq (\phi_n^{0}(-1/\eps) - \phi_n^{0}(-1))\frac{1}{1-\eps} - \frac{2C}{1-\eps},
\end{align*}
where $C=\sup_{n\in\N,\tau\leq -1} \|\phi_n^{I}(\tau)\|<\infty$ by \eqref{Phi_I_Bound}. Recall from the proof of \Cref{Facts_about_Phi} that every vector $v := \ph_q \otimes \ph_{p_1} \wedge \ldots\wedge \ph_{p_{N-1}}$ is an eigenvector of $\phi_n^{0}(\tau)$ with eigenvalue $\mu_{\tau,n}(q,P^2)$ given by \eqref{Definition_Mu_Tau_n}. The difference $\mu_{-1/\eps,n}(q,P^2)-\mu_{-1,n}(q,P^2)$ is positive for $\eps<1$ and all $n\in\N$. For $n\to\infty$ this difference has the limit $\mu_{-1/\eps}(q,P^2)-\mu_{-1}(q,P^2)$, which, by \eqref{Definition_Mu_Tau}, diverges for $\eps\to 0$. It follows that 
$$
   \norm{(1 + H_0)^{1/2} R_{-1}^* w}^2 \geq  \lim_{\eps \downarrow 0} \frac{1}{1-\eps} \Big( |\sprod{w}{v}|^2\big(\mu_{-1/\eps}(q,P^2)-\mu_{-1}(q,P^2)\big) - 2C \Big)=\infty,
$$
unless $\sprod{w}{v}=0$. Since the eigenvectors $v$ form an ONB of $\tilde{\HH}_{N-1}$, it follows $w=0$. This proves the lemma for $z=-1$. To prove it for general  $z \in \rho(H_0)$ it suffices to note that, by a resolvent identity,
$$
    B_z^* w = B_{-1}^* w + (\overline{z} + 1) (H_0 - \overline{z})^{-1} B_{-1}^* w,
$$
where the second summand belongs to $D(H_0)\subset D(H_0^{1/2})$.
\end{proof}

The following lemma was often used in the present section.

\begin{lemma}\label{Riemann}
Let $f:[0,\infty)\to[0,\infty)$ be monotonically decreasing. Then 
$$
     \sum_{k\in \kappa\Z^2}f(k^2) \leq f(0) + \frac{4}{\kappa}\int_0^{\infty}f(t^2)\,dt + \frac{2\pi}{\kappa^2}\int_0^{\infty} f(t^2)t\, dt.
$$
\end{lemma}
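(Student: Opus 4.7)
The strategy is a standard sum-to-integral comparison exploiting monotonicity of $f$, with careful bookkeeping to separate lattice points on the coordinate axes from off-axis ones. I would begin with the decomposition
$$
\sum_{k \in \kappa\Z^2} f(k^2) \;=\; f(0) \;+\; \sum_{\substack{k \in \kappa\Z^2 \setminus\{0\} \\ k_1 k_2 = 0}} f(k^2) \;+\; \sum_{\substack{k \in \kappa\Z^2 \\ k_1 k_2 \neq 0}} f(k^2),
$$
which disposes of the first term $f(0)$ on the right-hand side immediately.

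For each off-axis lattice point $k = \kappa(n_1,n_2)$ I attach a closed unit cell $Q_k$ of area $\kappa^2$ sitting between $k$ and the coordinate axes. Concretely, if $n_1,n_2 \geq 1$ (open first quadrant) set
$$
Q_k \;:=\; [(n_1-1)\kappa,\, n_1\kappa] \times [(n_2-1)\kappa,\, n_2\kappa],
$$
and define $Q_k$ in the other three open quadrants by the obvious reflection across the axes. Every $x = (x_1,x_2) \in Q_k$ then satisfies $|x_i| \leq |k_i|$, hence $|x|^2 \leq k^2$, and monotonicity of $f$ gives $\kappa^2 f(k^2) \leq \int_{Q_k} f(|x|^2)\,dx$. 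As $k$ varies over the off-axis lattice the cells $Q_k$ are pairwise disjoint, and their union is all of $\R^2$ minus the coordinate axes (a Lebesgue null set). Summing and passing to polar coordinates yields
$$
\sum_{k \text{ off-axis}} f(k^2) \;\leq\; \frac{1}{\kappa^2} \int_{\R^2} f(|x|^2)\,dx \;=\; \frac{2\pi}{\kappa^2}\int_0^{\infty} f(t^2)\, t\, dt,
$$
matching the third term.

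For an on-axis point $k = \kappa(n,0)$ with $n\neq 0$ a one-dimensional comparison suffices: since $t \mapsto f(t^2)$ is decreasing on $[0,\infty)$, one has $\kappa f(n^2\kappa^2) \leq \int_{(|n|-1)\kappa}^{|n|\kappa} f(t^2)\,dt$. Summing over $n\in\Z\setminus\{0\}$ contributes a factor of two (from positive and negative $n$), and the identical bound on the second axis contributes another factor of two, giving
$$
\sum_{k\text{ on axis},\, k\neq 0} f(k^2) \;\leq\; \frac{4}{\kappa}\int_0^{\infty} f(t^2)\, dt,
$$
which is the second term; adding the three contributions completes the proof.

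The argument is almost entirely bookkeeping, with no analytic subtlety. The only point demanding attention is the construction of the off-axis tiling: the cells $Q_k$ must interlock without gaps or overlaps across the four open quadrants while simultaneously respecting the key inequality $|x| \leq |k|$. This is precisely what forces the separate treatment of the coordinate axes and thereby produces the $4/\kappa$-term as a genuine boundary contribution to the two-dimensional integral estimate.
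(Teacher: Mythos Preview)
Your proof is correct and follows essentially the same route as the paper's. The paper's argument is terser---it invokes the fourfold symmetry of $k\mapsto f(k^2)$ to write
\[
\sum_{k\in \kappa\Z^2}f(k^2) = f(0) + 4\sum_{k\in \kappa(\Z_{+}\times\{0\})}f(k^2) + 4\sum_{k\in \kappa\Z_{+}^2}f(k^2)
\]
and then bounds each sum by the corresponding integral using monotonicity---whereas you carry out the same comparison directly in all four quadrants by explicitly constructing the cells $Q_k$. One small imprecision: your closed cells $Q_k$ are not literally pairwise disjoint (they share edges), but since the overlaps have Lebesgue measure zero the inequality $\sum_k\int_{Q_k}\leq\int_{\R^2}$ still holds; you may want to say ``essentially disjoint'' or ``with interiors pairwise disjoint''.
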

 
\begin{proof}
By the symmetry of the function $k\mapsto f(k^2)$,
$$
  \sum_{k\in \kappa\Z^2}f(k^2) = f(0) + 4 \sum_{k\in \kappa(\Z_{+}\times\{0\})}f(k^2) + 4 \sum_{k\in \kappa\Z_{+}^2}f(k^2),
$$
which, by the monotonicity of $f$ is bounded from above by the integrals given in the statement.
\end{proof}

\section{The polaron and molecule states}
\label{Sect:Pol_and_Mol_Equation}

Two trial states that are investigated intensively in the physics literature are the polaron ansatz and the molecule ansatz at first order in a particle-hole expansion. The polaron and the molecule ansatz are expected to approximate the ground state of the Fermi polaron well in the case of weak and strong coupling between the impurity and the Fermi gas, respectively \cite{CombescotGiraud}.

Fix $\mu > 0$. Let $\ket{\textrm{FS}_\mu}$ denote the Fermi sea with Fermi energy $\mu$, which is given by
\begin{equation} \label{Fermi_Sea}
   \ket{\textrm{FS}_\mu} = \prod_{k^2 \leq \mu} a^*_k \ket{\text{vac}},
\end{equation}
where $\ket{\text{vac}}$ is the vacuum state. Let
$$
   N_\mu := |\{ k \in \kappa \Z^2 \, : \, k^2 \leq \mu \}|.
$$
denote the number of fermions in $\ket{\textrm{FS}_\mu}$ and let $E_\mu$ be the kinetic energy of the Fermi sea, i.e.
$$
   H_f \ket{\textrm{FS}_\mu} = E_\mu \ket{\textrm{FS}_\mu}
$$
with $H_f := \sum_k k^2 \: a_k^* a_k$. The total momentum of the Fermi sea vanishes. In fact, $P_f \ket{\textrm{FS}_\mu} = 0$ with $P_f := \sum_k k \: a_k^* a_k$.

The trial state which is often referred to as polaron ansatz was first proposed by Chevy \cite{Chevy}. It is represented by
\begin{equation} \label{Polaron_Physics}
   \ket{\textrm{P}} = \alpha_0 b_0^* \ket{\textrm{FS}_\mu} + \sum_{\substack{K^2 > \mu \\ q^2 \leq \mu}} \alpha_{K, q} b^*_{q-K} a^*_K a_q \ket{\textrm{FS}_\mu}
\end{equation}
with coefficients $\alpha_0, \alpha_{K, q} \in \C$.
The polaron trial state \eqref{Polaron_Physics} is a state of total momentum zero with $N_\mu$ fermions. It consists of the ground state $b_0^* \ket{\textrm{FS}_\mu}$ of the kinetic energy on $\HH_{N_\mu}$, and the first term in the so-called particle-hole expansion. The action of the operator $a_q$ with $q^2 \leq \mu$ can be interpreted as the creation of a ``hole'' in the Fermi sea and $a_K^*$ with $K^2 > \mu$ creates a ``particle'' with momentum outside the Fermi sphere $k^2=\mu$.

The molecule ansatz was proposed independently by Chevy and Mora \cite{ChevyMora} and by Punk, Dumitrescu and Zwerger \cite{PDZ}. It reads
\begin{equation} \label{Molecule_Physics}
   \ket{\textrm{M}} = \sum_{K^2 > \mu} \beta_K b^*_{-K} a^*_K \ket{\text{FS}_\mu} + \sum_{\substack{K^2,L^2 > \mu \\ q^2 \leq \mu}} \beta_{K,L,q} b^*_{q-K-L} a^*_K a^*_L a_q \ket{\text{FS}_\mu}
\end{equation}
with coefficients $\beta_K, \beta_{K,L,q} \in \C$. Compared to the polaron ansatz, it is also a state of total momentum zero, but with $N_\mu + 1$ fermions. 

Since there is no explicit expression for $H$, it is not possible to compute $\sprod{P}{HP}$ and  $\sprod{M}{HM}$ directly. In the physics literature this problem is avoided by computing the regularized expressions  $\sprod{P}{H_nP}$ and  $\sprod{M}{H_nM}$ instead. The conditions of stationarity subject to the normalization conditions for the states $|P\rangle$ and $|M\rangle$  lead to implicit equations for the Langrange multiplier $E$, from which the regularization and (some of) the variational parameters can be eliminated. The solutions for the Lagrange multipliers in these equations are called \emph{polaron energy}, $E_P$, and \emph{molecule energy}, $E_M$, respectively, and these numbers are considered upper bounds for the ground state energy of the system.  This procedure lacks any justification in the absence of a definition for the (non-regularized) Hamiltonian.

In this section, we apply the variational principle established in \Cref{Sect:Variational_Principle} (\Cref{Convergence_Theorem}) to prove that the polaron energy $E_P$ given, e.g., by (4) in \cite{Parish} and the molecule energy $E_M$ given by (6) and (7) in \cite{Parish} yield upper bounds to the ground state energy of $H$ on $\HH_{N_\mu}$ and $\HH_{N_\mu + 1}$, respectively. Since the polaron and the molecule ansatz have non-equal numbers of fermions while the Fermi energies (or chemical potentials) agree, the proper observable to minimize is $H-\mu N$, rather that $H$. This means that the polaron ansatz should be considered a better approximation to the ground state of the Fermi polaron than the molecule ansatz, if
\begin{equation} \label{Polaron_Better}
   E_P < E_M - \mu,
\end{equation}
and the molecule ansatz should be considered a better approximation to the ground state of the Fermi polaron than the polaron ansatz, if
\begin{equation} \label{Molecule_Better}
   E_P > E_M - \mu.
\end{equation}

\subsection*{Chevy's polaron equation}

In this and the following section we use the variational principle for the operator $\phi(E)$ to show that the polaron and molecule energies are indeed upper bounds to the ground state energy of $H_{N_\mu}$ and $H_{N_\mu+1}$, respectively. Since we work with $\phi(E)$ rather than $H$, we need analogs of \eqref{Polaron_Physics} and \eqref{Molecule_Physics} in $\HH_{N_\mu-1}$ and $\HH_{N_\mu}$, respectively. In the case of the polaron, our choice is 
\begin{equation} \label{Polaron_Linden}
   | \widetilde{\textrm{P}} \rangle = \sum_{\substack{q^2 \leq \mu}} \tilde{\alpha}_q m_q^* a_q \ket{\textrm{FS}_\mu},
\end{equation}
which will be justified by \Cref{Polaron_Equation}, below. Note that $ | \widetilde{\textrm{P}} \rangle$ depends on $N_\mu$ free parameters, while \eqref{Polaron_Physics} has infinitely many variational parameters.

According to \Cref{Upper_Bound}, any solution $E$  to
\begin{equation}\label{new-var-polaron}
   \min_{\lVert \widetilde{P} \rVert = 1} \sprod{\widetilde{\textrm{P}}}{\phi(E) \widetilde{\textrm{P}}} = 0
\end{equation}
is an upper bound to the ground state energy of $H$ on $\HH_{N_\mu}$. In order to compute expectation values of $\phi(E)$ with respect to excitations of the Fermi sea such as \eqref{Polaron_Linden}, it is helpful to first invert the normal ordering of $a_l^*$ and $a_k$ in \eqref{explicit-phi} if $k^2 \leq \mu$ or $l^2 \leq \mu$. By the pull-through formula this leads to
\begin{align}
   \phi(z) &= \sum_q m_q^* G_\mu(H_f-z,q) m_q + \sum_{\substack{q \\ l^2 \leq \mu<k^2}} \left(m_{q+k}^* \frac{1}{H_f + \frac{1}{M}q^2 + k^2 - z} a_l^* a_k m_{q+l} + \textrm{h.c.} \right) \nonumber\\
   &+ \sum_{\substack{q \\ k^2,l^2 > \mu}} \!\!\! m_{q+k}^* a_l^* \frac{1}{H_f + \frac{1}{M} q^2 + k^2 + l^2 - z} a_k m_{q-l} - \sum_{\substack{q \\ k^2,l^2 \leq \mu}} \!\!\! m_{q+k}^* a_k \frac{1}{H_f + \frac{1}{M} q^2 - z} a_l^* m_{q+l} \label{Fermi_Sea_Phi}
\end{align}
for $z\in \R_{-}\cup(\C\backslash\R)$, with
\begin{equation} \label{Function_G}
   G_\mu(\lambda, q) := \sum_k \left( \frac{1}{(1+\frac{1}{M}) k^2 - E_B} - \frac{\chi(k^2 > \mu)}{\frac{1}{M}(q-k)^2 + k^2 + \lambda} \right)
\end{equation}
for $\lambda \in \R$ and $q \in \R^2$. Only the first and the fourth terms of \eqref{Fermi_Sea_Phi} give contributions to the matrix elements in \eqref{new-var-polaron}. Explicitly,
$$
   \sprod{\widetilde{\textrm{P}}}{\phi(E) \widetilde{\textrm{P}}}  = \sum_{q^2\leq \mu}  |\tilde{\alpha}_q|^2 G_\mu(E_\mu-E-q^2, q) - \frac{1}{E_\mu-E}\sum_{q^2,l^2\leq \mu}  \tilde{\alpha}_q^{*} \tilde{\alpha}_l,
$$ 
which is valid for $E<E_\mu$ by analytic continuation. The expression on the right hand side is a quadratic form in $(\tilde{\alpha}_q)_{q^2\leq \mu}$ depending on $\lambda=E_\mu-E$. We are going to write it as a quadratic form in the space  $\hh_\mu=\linhull\{\ph_q:q^2\leq \mu\}$. To this end let $ \tilde{\alpha}=\sum_{q^2 \leq \mu}   \tilde{\alpha}_q \ph_q $, let $\xi := \sum_{q^2 \leq \mu} \ph_q$, and let  $T(\lambda)$ be the linear operator in $\hh_{\mu}$ that, in the basis $\{\ph_q\mid q^2\leq \mu\}$, is diagonal with eigenvalues $G_\mu(\lambda - q^2, q)$. Then, for $E<E_\mu$, $\sprod{\widetilde{\textrm{P}}}{\phi(E) \widetilde{\textrm{P}}} = \sprod{ \tilde{\alpha}}{P(E_\mu-E) \tilde{\alpha}} $, where
$$
P(\lambda) :=  T(\lambda) - \frac{1}{\lambda}\ket{\xi} \bra{\xi}.
$$

The operator $P(\lambda)$ is of the form of the operators considered in the example of Section~\ref{sec:BS-operator}, and its Birman-Schwinger operator, which is a number, for $z=0$ reads
$$
       \lambda -  \sprod{\xi}{T(\lambda)^{-1}\xi}  = \lambda -  \sum_{q^2 \leq \mu} G_\mu(\lambda - q^2, q)^{-1}.
$$
This explains much of the following proposition.

\begin{prop}[Polaron Equation]\label{Polaron_Equation}
For $\lambda>0$ and $\ell \in \N$,
$$
   \mu_\ell(P(\lambda)) = 0 \quad \Rightarrow \quad \mu_\ell(H) \leq E_\mu-\lambda.
$$
In particular, any solution $\lambda$ to the polaron equation
\begin{equation} \label{Chevy_Equation}
   \lambda = \sum_{q^2 \leq \mu} G_\mu(\lambda- q^2, q)^{-1}
\end{equation}
defines an upper bound $E_\mu-\lambda$ to the ground state energy of $H$ on $\HH_{N_\mu}$. Equation \eqref{Chevy_Equation} has at least one solution $\lambda>0$ and the largest solution is characterized by  $\mu_1(P(\lambda)) = 0$. A (non-normalized) trial state of the form \eqref{Polaron_Linden} associated with the largest solution $\lambda$ has the coefficients
\begin{align}\label{Optimal_Trial_State}
   \tilde{\alpha}_q = G_\mu(\lambda- q^2, q)^{-1}.
\end{align}
\end{prop}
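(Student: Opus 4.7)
The plan is to apply the Birman-Schwinger framework of Sections~\ref{sec:BS-operator} and~\ref{Sect:Variational_Principle} twice, in nested fashion: once to the full Hamiltonian $H$ via $\phi(E)$, and once, at the inner level, to the finite-dimensional operator $P(\lambda)$ on $\hh_\mu$ using the rank-one-perturbation example at the end of Section~\ref{sec:BS-operator}. The identity $\sprod{\widetilde{\mathrm{P}}}{\phi(E)\widetilde{\mathrm{P}}} = \sprod{\tilde\alpha}{P(E_\mu-E)\tilde\alpha}$ derived just before the statement is the bridge between them. Note that $\tilde\alpha \mapsto |\widetilde{\mathrm{P}}\rangle$ is a linear injection $\hh_\mu \hookrightarrow \tilde\HH_{N_\mu-1}$ because the vectors $m_q^* a_q |\mathrm{FS}_\mu\rangle$ are pairwise orthogonal for distinct $q$ with $q^2\leq\mu$. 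Consequently, if $P(\lambda)$ has at least $\ell$ non-positive eigenvalues, then $\phi(E_\mu-\lambda)$ is non-positive on the $\ell$-dimensional image subspace in $\tilde\HH_{N_\mu-1}$; the min-max formula \eqref{min-max} then forces $\mu_\ell(\phi(E_\mu-\lambda))\leq 0$, and Corollary~\ref{Upper_Bound} yields $\mu_\ell(H)\leq E_\mu-\lambda$. Specialising to $\mu_\ell(P(\lambda))=0$ gives the first assertion of the proposition, and, combined with the equivalence established in the next step, it gives the upper-bound claim for every solution of \eqref{Chevy_Equation}.

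Next, I would recognise $P(\lambda) = T(\lambda) - \lambda^{-1}|\xi\rangle\langle\xi|$ as exactly the form treated in the rank-one example at the end of Section~\ref{sec:BS-operator}, with $H_0 \rightsquigarrow T(\lambda)$, $g \rightsquigarrow \lambda^{-1}$ and $\eta \rightsquigarrow \xi$. Its scalar Birman-Schwinger function at the point $0$ is
\begin{equation*}
    \lambda - \sprod{\xi}{T(\lambda)^{-1}\xi} \;=\; \lambda - \sum_{q^2 \leq \mu} G_\mu(\lambda - q^2, q)^{-1},
\end{equation*}
and that example gives $0\in\sigma(P(\lambda))$ if and only if this number vanishes, i.e.\ if and only if \eqref{Chevy_Equation} holds. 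The corresponding (non-normalised) null vector, provided by the same example, is $T(\lambda)^{-1}\xi = \sum_{q^2\leq\mu} G_\mu(\lambda-q^2,q)^{-1}\ph_q$, which under the identification $\tilde\alpha\mapsto|\widetilde{\mathrm{P}}\rangle$ reproduces the coefficients \eqref{Optimal_Trial_State}. Invertibility of $T(\lambda)$ relies on strict positivity of the diagonal entries $G_\mu(\lambda-q^2,q)$, which follows from the uniform lower bound on $\phi_n^0$ established in Lemma~\ref{Facts_about_Phi}(ii) after passing to the limit.

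Existence of a positive solution and the characterisation of the largest one would then be obtained by analysing $\lambda\mapsto\mu_1(P(\lambda))$. Since $\hh_\mu$ is finite-dimensional and $T(\lambda)$ depends norm-continuously on $\lambda>0$, this function is continuous. Two limits drive the argument: as $\lambda\to 0^+$ the rank-one negative perturbation $-\lambda^{-1}|\xi\rangle\langle\xi|$ dominates the bounded operator $T(\lambda)$ and forces $\mu_1(P(\lambda))\to-\infty$; as $\lambda\to\infty$ each diagonal entry $G_\mu(\lambda-q^2,q)$ grows without bound (the subtracted term in \eqref{Function_G} vanishes while the renormalisation term remains fixed), so $T(\lambda)\to+\infty$ in the operator sense while the rank-one term is $O(1/\lambda)$, yielding $\mu_1(P(\lambda))\to+\infty$. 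By the intermediate value theorem the closed zero set of $\mu_1(P(\cdot))$ is non-empty and bounded above, so it possesses a maximum $\lambda_*$. For $\lambda>\lambda_*$ one has $\mu_1(P(\lambda))>0$ by continuity together with the $+\infty$ limit, and in particular $0\notin\sigma(P(\lambda))$; hence $\lambda_*$ is automatically the largest $\lambda$ for which $0\in\sigma(P(\lambda))$, which, by the second step, is the largest solution of \eqref{Chevy_Equation}.

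The main technical obstacle I anticipate is the quantitative control of $G_\mu(\lambda-q^2,q)$ for $q$ near the Fermi edge and for $\lambda\to 0^+$: one must secure strict positivity of the diagonal entries of $T(\lambda)$ uniformly enough to make the $\lambda\to 0^+$ asymptotic rigorous, and one must quantify the linear growth of $G_\mu(\lambda-q^2,q)$ in $\lambda$ to justify the $\lambda\to\infty$ statement. These bounds rest on the same two-dimensional estimates that underlie the proof of Theorem~\ref{Regularization_Theorem} in Section~\ref{Sect:Regularization_Schemes}; once they are in place, the rest of the proof is a direct assembly of the two Birman-Schwinger correspondences.
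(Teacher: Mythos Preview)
Your overall approach matches the paper's: the isometric embedding $\hh_\mu\hookrightarrow\tilde\HH_{N_\mu-1}$, the min-max comparison $\mu_\ell(P(\lambda))\geq\mu_\ell(\phi(E_\mu-\lambda))$, \Cref{Upper_Bound}, the rank-one Birman--Schwinger analysis of $P(\lambda)=T(\lambda)-\lambda^{-1}|\xi\rangle\langle\xi|$, and the intermediate-value argument for $\mu_1(P(\cdot))$ are exactly what the paper does. There is, however, one genuine gap.

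Your claim that the diagonal entries $G_\mu(\lambda-q^2,q)$ are strictly positive for all $\lambda>0$, attributed to \Cref{Facts_about_Phi}(ii), is not justified. That lemma only yields $\phi_n(\tau)\geq c$ for $\tau$ \emph{sufficiently negative}; it says nothing about $E$ close to $E_\mu$. In fact the Remark immediately following the proposition states explicitly that zeros of $G_\mu(\lambda-q^2,q)$ are expected to occur, so the equivalence ``$0\in\sigma(P(\lambda))\Leftrightarrow\eqref{Chevy_Equation}$'' cannot hold for every $\lambda>0$. What is actually needed---and what the paper supplies---is invertibility of $T(\lambda)$ \emph{at the specific} $\lambda$ where $\mu_1(P(\lambda))=0$. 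The paper gets this from the strict inequality $\mu_1(P(\lambda))<\mu_1(T(\lambda))$, proved by testing with the vector $\ph_{q_0}$ that realises $\mu_1(T(\lambda))$: one finds $\sprod{\ph_{q_0}}{P(\lambda)\ph_{q_0}}=\mu_1(T(\lambda))-\lambda^{-1}$, whence $\mu_1(T(\lambda))>0$ whenever $\mu_1(P(\lambda))=0$, so $T(\lambda)$ is positive definite there. Without this step your conclusion that the largest zero $\lambda_*$ of $\mu_1(P(\cdot))$ actually solves \eqref{Chevy_Equation}, and your identification of the optimal coefficients as the components of $T(\lambda_*)^{-1}\xi$, remain incomplete.

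A minor side remark: your heuristic for $\lambda\to\infty$ (``the subtracted term vanishes while the renormalisation term remains fixed'') is not quite right, since both $k$-sums in \eqref{Function_G} diverge individually; it is their \emph{difference} that grows, and the paper extracts $G_\mu\to\infty$ from the divergence of $\mu_\tau(0,0)$ as $\tau\to-\infty$ established in the proof of \Cref{Facts_about_Phi}.
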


\noindent
\emph{Remark.} Equation \eqref{Chevy_Equation} with $\lambda=E_\mu-E_P$ agrees with the implicit equation (4) in \cite{Parish} for the polaron energy $E_P$, see also \cite{Chevy}. \Cref{Polaron_Equation} explains the meaning of this equation and it justifies our choice \eqref{Polaron_Linden} for the polaron trial states.

\begin{proof}
Let $I$ denote the linear isometry $I:\hh_\mu\to \HH_{N_\mu-1}$ defined by $I\ph_q = m_q^* a_q \ket{\textrm{FS}_\mu}$. Let $E=E_\mu-\lambda$.
From the equation $\sprod{\beta}{P(\lambda)\beta}=\sprod{I \beta}{\phi(E) I \beta}$ it is clear that $\mu_\ell(P(\lambda))\geq \mu_\ell(\phi(E))$ for $\ell=1\ldots N_\mu$. Hence, by  \Cref{Upper_Bound}, $\mu_\ell(P(\lambda))\leq 0$ implies that $\mu_\ell(H)\leq E=E_\mu-\lambda$ provided that $\lambda>0$.

The connection between $ \mu_\ell(P(\lambda)) = 0$ and  the equation \eqref{Chevy_Equation}  is explained by the example  of Section \ref{sec:BS-operator}: since $\lambda - \sum_{q^2 \leq \mu} G_\mu(\lambda - q^2, q)^{-1}$ is the Birman-Schwinger operator associated with $P(\lambda)$ and $z=0$, \eqref{Chevy_Equation} implies that $\mu_\ell(P(\lambda))=0$ for some $\ell$. Conversely, if $\mu_\ell(P(\lambda))=0$ and $G_\mu(\lambda - q^2, q)\neq 0$ for all $q$, $q^2\leq \mu$, then \eqref{Chevy_Equation}  holds.  The condition $G_\mu(\lambda - q^2, q)\neq 0$ is satisfied at least for $\ell=1$, because $\mu_1(P(\lambda))<\mu_1(T(\lambda))$. Indeed, if $\mu_1(T(\lambda))=G_\mu(\lambda-q_0^2,q_0)$ by choice of $q_0$, then 
$$
     \mu_1(P(\lambda)) \leq \sprod{\ph_{q_0}}{P(\lambda)\ph_{q_0}} = \sprod{\ph_{q_0}}{T(\lambda)\ph_{q_0}} - \frac{1}{\lambda}
     = \mu_1(T(\lambda)) - \frac{1}{\lambda}.
$$
Next we show that there exists $\lambda>0$ such that $\mu_1(P(\lambda)) = 0$. It follows from \eqref{Function_G}, or from \eqref{PhiE_Phiz_kurz} and \eqref{RE_Rz}, that $\lambda \mapsto P(\lambda)$ is a continuous matrix valued function on $\R_{+}$. Hence, the eigenvalues $\mu_\ell(P(\lambda))$ are continuous functions of $\lambda$. From \Cref{Facts_about_Phi} we see that $\mu_1(P(\lambda)) \to \infty$ as $\lambda \to \infty$. On the other hand, $\mu_1(P(\lambda)) \to -\infty$ as $\lambda \searrow 0$, because  $\mu_1(P(\lambda))\leq \sprod{\ph_0}{P(\lambda)\ph_0} = G_\mu(\lambda,0) - \lambda^{-1}$ where $G_\mu(\lambda,0)$ is continuous in $\lambda=0$. Hence, there exists $\lambda>0$ such that $\mu_1(P(\lambda)) = 0$. By the example of Section~\ref{sec:BS-operator}, an eigenvector of $P(\lambda)$ belonging to the eigenvalue $\mu_1(P(\lambda))=0$ is given by $T(\lambda)^{-1}\xi$. Its coefficients are $\tilde{\alpha}_q=\sprod{\ph_q}{T(\lambda)^{-1}\xi} = G_\mu(\lambda - q^2, q)^{-1}$.

Finally, let $\lambda$ be the largest solution to  \eqref{Chevy_Equation} and suppose that $\mu_\ell(P(\lambda))=0$ for some $\ell>1$, while   $\mu_1(P(\lambda))<0$. Then, by the arguments above, $\mu_1(P(\lambda'))=0$ for some $\lambda'>\lambda$. It follows that $\lambda'$ is a solution to \eqref{Chevy_Equation},  which contradicts the assumption on $\lambda$. 
\end{proof}

\noindent\emph{Remark.} In the proof above we have shown that $\mu_1(P(\lambda))<\mu_1(T(\lambda))$. For general $\ell\geq 2$ we have
\begin{equation}\label{eval-rank}
      \mu_{\ell-1}(T(\lambda))\leq\mu_\ell(P(\lambda))\leq \mu_{\ell}(T(\lambda))
\end{equation}
due to the fact that $P(\lambda)$ is a rank-one perturbation of $T(\lambda)$. The symmetry of $q\mapsto G_\mu(\lambda - q^2, q)$ implies that all these eigenvalues of $T(\lambda)$ are degenerate, with the possible exception of $G_\mu(\lambda,0)$. Hence if $\mu_\ell(P(\lambda))=0$ for some $\ell\geq 2$, then, in view of \eqref{eval-rank}, it is very likely that $G_\mu(\lambda - q^2, q)=0$ for some $q$ and the polaron equation \eqref{Chevy_Equation} is not defined. In particular the statement of \Cref{Polaron_Equation} about $\mu_1(P(\lambda))$ will not generalize to $\ell\geq 2$. In \cite{Diss-Linden} this is verified explicitly for $\ell=2$ and $M=1$.

\subsection*{The molecule ansatz}
The molecule ansatz in the physics literature is given by \eqref{Molecule_Physics}. We show that a solution $E_M$ to the energy equations associated with the molecule ansatz (see (6) and (7) in \cite{Parish}) yields an upper bound to the ground state energy of $H$ on $\HH_{N_\mu + 1}$. The general argument is the same as in the case of the polaron ansatz and in particular it is based on  \Cref{Upper_Bound}.

The crucial difference between the polaron and the molecule ansatz is the number of fermions. For the polaron ansatz we found the representation \eqref{Polaron_Linden} in the Hilbert space $\HH_{N_\mu-1}$, which served as a trial state for $\phi(E)$. In analogy to the original trial states \eqref{Polaron_Physics} and \eqref{Molecule_Physics}, we expect the molecule ansatz for the operator $\phi(E)$ to have one fermion more than the polaron ansatz \eqref{Polaron_Linden} and to be represented by a state in $\HH_{N_\mu}$.

In fact, it turns out that an appropriate ansatz for the molecule trial state for the operator $\phi(E)$ is given by
\begin{equation} \label{Form_Molecule}
   |\widetilde{\textrm{M}}\rangle = m_0^*\ket{\text{FS}_\mu} + \sum_{q^2 \leq \mu} \sum_{K^2 > \mu} \gamma_{Kq} m_{q-K}^* a_K^* a_q \ket{\text{FS}_\mu},
\end{equation}
with $\gamma_{Kq} \in \C$. The state $| \widetilde{\textrm{M}} \rangle$ will not be normalized in general, because the coefficient of the first term is fixed to $1$. Computing the expectation value of $\phi(E)$  in the state $| \widetilde{\textrm{M}} \rangle$ using \eqref{Fermi_Sea_Phi}, which is valid for $z\in \R_{-}\cup(\C\backslash\R)$, and analytic continuation yields
\begin{align}
 \sprod{\widetilde{\textrm{M}}}{\phi(E) \widetilde{\textrm{M}}}
 = G_\mu(E_\mu-E, 0) &+ \sum_{K^2 > \mu,\, q^2 \leq \mu} (\gamma_{Kq} + \overline{\gamma}_{Kq}) \frac{1}{(1 + \frac{1}{M})K^2 + E_\mu - E}\nonumber\\ 
 &+ \sum_{K^2 > \mu,\, q^2 \leq \mu} |\gamma_{Kq}|^2 G_\mu(K^2 \!-\! q^2 \!+\! E_\mu-E, q \!-\! K) \nonumber\\*
 &+ \sum_{\substack{K^2, L^2 > \mu, \\ q^2 \leq \mu}}\! \frac{\overline{\gamma}_{Lq} \gamma_{Kq}}{\frac{1}{M}(q-K-L)^2 + K^2 + L^2 - q^2 + E_\mu - E}\nonumber\\
 &-\sum_{\substack{K^2 > \mu, \\ p^2, q^2 \leq \mu}} \frac{\overline{\gamma}_{Kq} \gamma_{Kp}}{(1+\frac{1}{M}) K^2 + E_\mu - E} \label{Molecule_Expectation_Value}
\end{align}
for $E<E_\mu+\mu$. We look for critical points of \eqref{Molecule_Expectation_Value} as function of the parameters $\gamma_{Kq}$ and obtain the condition
\begin{align}
 0 &= \frac{1}{(1+\frac{1}{M})K^2 + E_\mu - E} + \gamma_{Kq} \cdot G_\mu(K^2 \!-\! q^2 \!+\! E_\mu-E, q \!-\! K) \nonumber\\
   &\quad +\!\! \sum_{L^2 > \mu} \gamma_{Lq} \cdot \frac{1}{\frac{1}{M}(q-K-L)^2 + K^2 + L^2 - q^2 + E_\mu - E}\nonumber\\
   &\quad - \sum_{p^2 \leq \mu} \gamma_{Kp} \cdot \frac{1}{(1+\frac{1}{M})K^2 + E_\mu - E} \label{Molecule_Equation_2}
\end{align}
for all $K,q \in \kappa \Z^2$ with $q^2 \leq \mu$ and $K^2 > \mu$. We multiply this equation by $\overline{\gamma}_{Kq}$, sum both sides of it over $K$ and $q$ and combine it with the equation $\sprod{\widetilde{\textrm{M}}}{\phi(E) \widetilde{\textrm{M}}} = 0$ to get
\begin{equation}
 G_\mu(E_\mu-E,0) +\! \sum_{\substack{K^2 > \mu, \\ q^2 \leq \mu}}\! \gamma_{Kq} \frac{1}{(1 + \frac{1}{M})K^2 + E_\mu - E} = 0 \label{Molecule_Equation_1}.
\end{equation}
According to  \Cref{Upper_Bound}, a solution $(E, \{\gamma_{Kq}\})$ of \eqref{Molecule_Equation_2} and \eqref{Molecule_Equation_1} with $E < E_\mu$ is an upper bound $E$ for the ground state energy of $H$. Note that the equations \eqref{Molecule_Equation_2} and \eqref{Molecule_Equation_1} coincide with the equations for the molecule ground state energy (c.f. (6) and (7) in \cite{Parish}), if $E$ is replaced by $E + \mu$. The necessity of this modification was discussed in \eqref{Polaron_Better} and \eqref{Molecule_Better}.


\section{The Fermi-polaron in $\R^2$}
\label{sec:all-space}

In this section we  construct the Hamiltonian for the Fermi polaron in $\R^2$. This is another application of the general framework presented in Section~\ref{Sect:Strong_Resolvent_Limit}. To avoid the introduction of angel-particles, we pass to center-of-mass and relative coordinates, where the regularized Hamiltonian for fixed center-of-mass momentum $P$ has the structure that is required by \Cref{Convergence_Theorem}.

Let $\eta \in L^2(\R^2)$ be compactly supported, $\eta\geq 0$, $\eta(x) = \eta(-x)$ and $\int\eta(x)dx = 2\pi$. Let $\eta_n(x) := n^2 \eta(nx)$ for $n\in \N$. Then, $\int\eta_n(x)dx = 2\pi$ for all $n \in \N$, the Fourier transform $\hat{\eta}_n \in L^2(\R^2)$ is real-valued, $\hat{\eta}_n(k) = \hat{\eta}(k/n)$, $|\hat{\eta}_n(k)| \leq 1$ and $\hat\eta_n(k) \to 1$ as $n \to \infty$ for all $k \in \R^2$. The regularized quadratic form describing the energy of $N$ fermions with one impurity particle of mass $M$ in $\R^2$ is defined on $L^2(\R^2) \otimes \HH_N$, where  $\HH_N := \bigwedge\nolimits^N L^2(\R^2)$, and given by
\begin{align}
 & \int d\boldsymbol{x} dy \left( \frac{1}{M} |\nabla_y \psi(y, \boldsymbol{x})|^2 + \sum_{i=1}^N |\nabla_{x_i} \psi(y, \boldsymbol{x})|^2 \right) \nonumber \\
 &\qquad\qquad - g_n \sum_{i=1}^N \int dx_1 \, ... \, \widehat{dx_i} \, ... \, dx_N dy \left| \int \! dx_i \, \eta_n(x_i - y) \psi(y, \boldsymbol{x}) \right|^2, \label{Quadratic_Form_R2}
\end{align}
where $\boldsymbol{x}=(x_1,\ldots,x_N)$ and the coupling constant is defined by the renormalization condition
$$
   g_n^{-1} = \int \! dk \: \frac{\hat{\eta}_n(k)^2}{(1+\frac{1}{M}) k^2 - E_B}.
$$
As in the previous sections, $E_B < 0$ is a fixed parameter of the system, which plays the role of the impurity-fermion binding energy (cf. \eqref{Renormalization_Condition}). We now write the quadratic form in terms of center-of-mass and relative coordinates,
$$
   R = \frac{My + \sum_{i=1}^N x_i}{M + N} \qquad \textrm{and} \qquad r_i = x_i - y,
$$
and then make a Fourier transform with respect to all the new coordinates $R,r_1,\ldots,r_N$. The new wave function, resulting from the unitary transformations given by the change of coordinates and the Fourier transform, will  simply be denoted by $\hat\psi$.  The quadratic form now reads
\begin{align}\nonumber
   & \int d\mathbf{k} dP \left( \frac{P^2}{M+N}  + \frac{1}{M} \left|\sum_{i=1}^N k_i \right|^2 + \sum_{i=1}^N k_i^2\right) |\hat{\psi}(P, \mathbf{k})|^2  \\
   &\qquad\qquad - g_n \sum_{i=1}^N \int dk_1\ldots\widehat{dk_i} \ldots dk_N dP \left| \int \! dk_i \, \hat\eta_n(k_i-P/(M+N)) \hat{\psi}(P, \mathbf{k}) \right|^2,\label{FT-form-R2}
\end{align}
and it is defined on the set of all $\hat{\psi}$ with $ \int(P^2+\sum_{i=1}^Nk_i^2)|\hat{\psi}(P, \mathbf{k})|^2\,d\mathbf{k} dP  <\infty$.

Let $H_n$ denote the self-adjoint Hamiltonian associated with this semi-bounded quadratic form. It is unitarily equivalent to the Hamiltonian associated with \eqref{Quadratic_Form_R2}. Then 
\begin{equation}\label{def-Hn-R2}
   H_n = \int_{\R^2}^{\oplus} H_n(P)\, dP,
\end{equation}
where
$$
   H_n(P) = \frac{1}{M+N} P^2 + \Hrel(n,P)
$$
and in the language of second quantization,
$$
      \Hrel(n,P) = H_0-g_n a^*(\hat{\eta}_{n,P})a(\hat{\eta}_{n,P})
$$
with $\hat{\eta}_{n,P}(k) = \hat{\eta}_{n}(k-P(M+N)^{-1})$ and 
$$
    H_0 = \frac{1}{M}P_f^2 + H_f, \qquad P_f :=  \int k a_k^* a_k\, dk, \quad H_f := \int k^2 a_k^* a_k\, dk.
$$
The $P$-dependence of the Hamiltonian $\Hrel(n,P)$ in the center-of-mass frame is somewhat surprising, because non-relativistic many-particle Hamiltonians with  two-body potential are independent of the momentum of the center of mass, after the kinetic energy of the center of mass has been subtracted. The regularized interaction given in \eqref{Quadratic_Form_R2} is not a sum of two-body-potentials, but it approximates a sum of $\delta$-potentials and hence the $P$-dependence of $\Hrel(n,P)$ should disappear as $n\to\infty$. It will disappear as we will see below.

The Hamiltonian $\Hrel(n,P)$ has the general form required for an application of \Cref{Convergence_Theorem}. The verification of the hypotheses of that theorem follows the line of arguments given in Section \ref{Sect:Regularization_Schemes} and the details can be found in \cite{Diss-Linden}. Here we merely summarize the main steps and results:

\begin{enumerate}
\item For all $z\in \rho(H_0)$ the limit 
$$
     B_z := \lim_{n\to\infty} a(\hat{\eta}_{n,P})(H_0-z)^{-1}
$$
exists and it is independent of $P\in\R^2$ and the choice of $\eta$. In particular, $A\ph =  \lim_{n\to\infty} a(\hat{\eta}_{n,P})\ph$ exists for $\ph\in D(H_0)$ and
\begin{equation}\label{eq-for-A}
       (A\ph)(k_1,\ldots,k_n) = \sqrt{N}\int dk\, \ph(k,k_1,\ldots,k_{N-1}).
\end{equation}
\item Let $D=D(H_f)\cap \HH_{N-1}$ and for $z\in \rho(H_0)$ let
$$
    \phi_n(z) = g_n^{-1} -  a(\hat{\eta}_{n,P}) (H_0-z)^{-1} a^{*}(\hat{\eta}_{n,P}).
$$
Then for all $w\in D$ and all $E<0$,
$$
     \phi_n(E)w\to  \phi(E)w,\qquad (n\to\infty)
$$
where $ \phi(E) =  \phi^0(E)+ \phi^I(E)$ is essentially self-adjoint on $D$ and 
\begin{align*}
       \phi^0(E)&= \frac{\pi}{1 + \frac{1}{M}} \log \left( \frac{\frac{1}{M+1} P_f^2 + H_f - E}{-E_B} \right)\\
        \phi^I(E) &= \int \! dk \, dl \: a_k^* \frac{1}{\frac{1}{M}(P_f + k + l)^2 + H_f + k^2 + l^2 - E} \: a_l.
\end{align*}
The operator $\phi^{I}(E)$ is bounded and hence $\phi(E)$ is self-adjoint on $D(\phi^{0}(E))$.
\item For every $c>0$ there exists $E_c\leq -1$ such that for all $\mu\leq E_c$ and all $n\in\N$ sufficiently large,
$$
          \phi_n(\mu) \geq c.
$$
\end{enumerate}
Thanks to \Cref{Convergence_Theorem} there exists a self-adjoint operator $\Hrel$ in $\HH_N$ such that $\Hrel(n,P)\to \Hrel$ in the strong resolvent sense and for $\mu\leq E_{c=1}$,
$$
   (\Hrel - \mu)^{-1} = (H_0-\mu)^{-1} + B_\mu^* \phi(\mu)^{-1} B_\mu.
$$
Since $B_\mu$ and $\phi(\mu)$ are independent of $P$ and $\eta$, so is the operator $\Hrel$. Moreover, by \Cref{lower-bound}, for every $E<0$,
$$
    \phi(E)\geq 0\quad\Rightarrow\quad \Hrel\geq E,
$$
which was announced and used in \cite{GriesemerLinden1}.

\begin{thm}
There exists a semibounded self-adjoint operator $H$ in $L^2(\R^2,\HH_N)$ such that $H_n\to H$ in the strong resolvent sense, where $H_n$ is defined by    \eqref{def-Hn-R2}. If $E<0$ and if the operator $\phi(E)$ in $\HH_{N-1}$ is defined as above, then 
$$
      \phi(E)\geq 0\quad\Rightarrow\quad H\geq E.
$$
\end{thm}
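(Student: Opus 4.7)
The plan is to realize $H$ as a direct integral with fibers that depend on $P$ only through a scalar shift, and then derive both the convergence and the lower bound fiberwise.

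First I would define $H(P) := P^2/(M+N) + \Hrel$ for each $P \in \R^2$ and set
$$
   H := \int_{\R^2}^{\oplus} H(P)\, dP
$$
on $L^2(\R^2,\HH_N)$. Since $\Hrel$ does not depend on $P$ and the $P$-dependence of $H(P)$ is only through the scalar $P^2/(M+N)$, the family $\{H(P)\}$ is a measurable field of self-adjoint operators, so $H$ is self-adjoint on its natural domain. It is semibounded because $\Hrel$ is (by items (1)--(3) and \Cref{Convergence_Theorem}) and $P^2/(M+N) \geq 0$.

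For the strong resolvent convergence, I would fix $z \in \C \setminus \R$ and exploit that both $H_n$ and $H$ are direct integrals, so their resolvents act fiberwise:
$$
   \bigl((H_n - z)^{-1}\psi\bigr)(P) = (H_n(P) - z)^{-1}\psi(P), \qquad \bigl((H - z)^{-1}\psi\bigr)(P) = (H(P) - z)^{-1}\psi(P).
$$
Items (1)--(3) stated just above the theorem verify the hypotheses of \Cref{Convergence_Theorem} for the family $\Hrel(n,P)$, producing $\Hrel(n,P) \to \Hrel$ in the strong resolvent sense with a $P$-independent limit; adding the scalar $P^2/(M+N)$ preserves this, so $(H_n(P) - z)^{-1} \to (H(P) - z)^{-1}$ strongly for every $P$. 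The uniform bound $\|(H_n(P)-z)^{-1}\| \leq 1/|\Ima z|$ provides the integrable majorant $\|\psi(P)\|^2/|\Ima z|^2$, and dominated convergence on $L^2(\R^2,\HH_N)$ then delivers $(H_n - z)^{-1}\psi \to (H-z)^{-1}\psi$ for every $\psi$.

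For the lower bound I would assume $\phi(E) \geq 0$ with $E<0$, invoke the implication stated immediately before the theorem to get $\Hrel \geq E$, and then use $P^2/(M+N) \geq 0$ to conclude $H(P) \geq E$ uniformly in $P$. Consequently
$$
   \sprod{\psi}{(H-E)\psi} = \int_{\R^2} \sprod{\psi(P)}{(H(P)-E)\psi(P)}\, dP \geq 0
$$
for every $\psi \in D(H)$, i.e.\ $H \geq E$. The hard part will not be deep: all genuine analytic content lives in items (1)--(3) at the fiber level; the remaining steps are the measurability of the fiber family (immediate from its scalar $P$-dependence) and the interchange of limit and direct integral (dominated convergence via the universal $1/|\Ima z|$ bound).
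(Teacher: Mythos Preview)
Your proposal is correct and follows essentially the same route as the paper: define $H(P)=P^2/(M+N)+\Hrel$, set $H=\int^{\oplus}H(P)\,dP$, and obtain strong resolvent convergence by combining the fiberwise convergence $\Hrel(n,P)\to\Hrel$ (shifted by the scalar $P^2/(M+N)$) with dominated convergence using the uniform resolvent bound. Your treatment of the lower bound $\phi(E)\geq 0\Rightarrow H\geq E$ is in fact more explicit than the paper's own proof, which leaves that step implicit.
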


\begin{proof}
Let $H(P):D(\Hrel)\subset \HH_N\to\HH_N$ be defined by 
$$
       H(P) = \frac{P^2}{M+N} + \Hrel,
$$
and let 
$$
      H = \int_{\R^2}^{\oplus} H(P)\, dP.
$$
This means $(H\psi)(P) = H(P)\psi(P)$ where $D(H)$ is the space of all $\psi\in L^2(\R^2,\HH_N)$ such that $\psi(P)\in D(\Hrel)$ for a.e. $P\in \R^2$ and 
$\int \|H(P)\psi(P)\|^2\, dP <\infty$. Then $H$ is self-adjoint and bounded from below because $\Hrel$ is bounded from below. For all $\psi\in L^2(\R^2,\HH_N)$
\begin{align*}
    &\|((H-i)^{-1}-(H_n-i)^{-1})\psi\|^2\\
    &= \int  \|((H(P)-i)^{-1}-(H_n(P)-i)^{-1})\psi(P)\|^2   \, dP\\
    &=\int  \|((\Hrel+P^2/(M+N)-i)^{-1}\\ &\qquad\qquad -(\Hrel(n,P)+P^2/(M+N)-i)^{-1})\psi(P)\|^2   \, dP\\
    &\to 0,\qquad (n\to\infty)
\end{align*}
by Lebesgue dominated convergence and the strong resolvent convergence $\Hrel(n,P)\to \Hrel$.
\end{proof}

We now show that $\Hrel$ is a TMS Hamiltonian associated with the $(N+1)$-particle system in the center-of-mass frame with point interaction among the fermions and the impurity. To this end we set 
$$
       \alpha = -\frac{\pi}{1+M^{-1}}\log(-E_B)
$$
and we define the functions
\begin{align*}
     L_{\lambda}(k_1,\ldots,k_{N-1}) &= \frac{\pi}{1+M^{-1}}\log\left(\frac{1}{M+1}\Bigg|\sum_{j=1}^{N-1}k_j\Bigg|^2 + \sum_{j=1}^{N-1}k_j^2 - \lambda\right)\\
    G_{\lambda}(k_1,\ldots,k_{N}) &= \left(\frac{1}{M}\Bigg|\sum_{j=1}^{N}k_j\Bigg|^2 + \sum_{j=1}^{N}k_j^2 - \lambda\right)^{-1}.
\end{align*}
By definition of $\phi(\lambda)$, for $w\in D(\phi) = D(\phi^0(\lambda))$ and $\lambda<0$,
\begin{align}
   (\phi(\lambda)w)(k_1,\ldots,k_{N-1}) &=\ (\alpha+ L_{\lambda}(k_1,\ldots,k_{N-1}) )w(k_1,\ldots,k_{N-1})\nonumber\\
     &\qquad +\sum_{j=2}^{N}(-1)^{j}\int (G_{\lambda}w_j)(q,k_1,\ldots,k_{N-1})\, dq,\label{eq-for-phi}
\end{align}
and 
\begin{equation}\label{eq-for-Rstar}
    (B_\lambda^{*}w)(k_1,\ldots,k_{N}) =  \frac{1}{\sqrt{N}}\sum_{j=1}^N (-1)^{j+1}(G_{\lambda}w_j)(k_1,\ldots,k_{N})
\end{equation}
where $w_j$ denotes the function defined by
$$
      w_j(k_1,\ldots,k_{N}) = w(k_1,\ldots,k_{j-1},k_{j+1}\ldots, k_{N}).
$$
By \Cref{Explicit_Characterization_of_H}, for given $\lambda<0$ the statement  $\ph\in D(\Hrel)$ is equivalent to the existence of some $w\in D(\phi)$ such that $\ph-B_\lambda^{*}w \in D(H_0)$ and
\begin{equation}\label{TMS-1}
  A(\ph-B_\lambda^{*}w) = \phi(\lambda)w.
\end{equation}
In view of \eqref{eq-for-A} this means, for $\xi=w/\sqrt{N}$, that 
\begin{equation}\label{TMS-2}
   \int (\ph-\sqrt{N}B_\lambda^{*}\xi)(q,k_1,\ldots,k_{N-1})dq  = (\phi(\lambda)\xi)(k_1,\ldots,k_{N-1}),
\end{equation}
for almost all $(k_1,\ldots,k_{N-1})\in \R^{2(N-1)}$. By \eqref{eq-for-Rstar} the left side of this equation becomes
\begin{align*}
    \int \big(\ph(q,k_1,\ldots,k_{N-1}) &- G_\lambda(q,k_1,\ldots,k_{N-1})\xi(k_1,\ldots,k_{N-1})\big)dq\\  &+  \sum_{j=2}^{N}(-1)^{j}\int     (G_{\lambda}\xi_j)(q,k_1,\ldots,k_{N-1})\, dq
\end{align*}
where the second term agrees with the second term of \eqref{eq-for-phi} with $w\to\xi$. Hence \eqref{TMS-2} reduces to
$$
     \int \big(\ph(q,K) - G_\lambda(q,K)\xi(K)\big)\,dq  =  (\alpha+ L_{\lambda}(K))\xi(K),
$$  
where $K=(k_1,\ldots,k_{N-1})$. This means that, in the limit $R\to\infty$,
\begin{align*}
   \int_{|q|\leq R} \ph(q,K)\, dq &= \left(\int_{|q|\leq R} G_\lambda(q,K)\,dq +  \alpha+ L_{\lambda}(K)\right)\xi(K) +o(1)\\
   &= \frac{\pi}{1+M^{-1}} \log \Big(\frac{(1 + M^{-1})R^2}{-E_B}\Big)\xi(K) +o(1).
\end{align*}
This is the TMS boundary condition written in Fourier space \cite{Figari, Five_Italians}.
In the case $N=1$ we see that $\ph\in D(\Hrel)$ if and only there exists some $\xi\in \C$ such that  
$$
        \int(\ph(k) - G_\lambda(k)\xi)\,dk =  \frac{\pi}{1 + M^{-1}} \log(\lambda/E_B)\xi,
$$
which means that
$$
    \int_{|k|\leq R} \ph(k)\,dk  =  \frac{\pi}{1 + M^{-1}} \log \Big(\frac{(1 + M^{-1})R^2}{-E_B}\Big)\xi + o(1),\quad (R\to\infty). 
$$


\section{Two species of fermions}
\label{sec:two-species}

	In this section we explain how the setup and the results of Section~\ref{Sect:Fermi_Polaron_Box} are generalized to systems of two species of fermions. We show that the abstract theory developed in Sections \ref{sec:BS-operator} and \ref{Sect:Strong_Resolvent_Limit} applies to this more general class of systems. 

We consider a system composed of $N_1$ fermions of mass $m_1$ and $N_2$ fermions of mass $m_2$ in a box $\Omega=[0,L]^2\subset\R^2$ with periodic boundary conditions. The Hilbert space of this system is
$$
       \HH_{N_1,N_2} :=  \bigwedge\nolimits^{\!N_1} L^2(\Omega) \otimes \bigwedge\nolimits^{\!N_2} L^2(\Omega)
$$
and the regularized Hamiltonian, in second quantized representation, reads
$$
   H_{\alpha,\beta}:= H_0 - g_{\alpha,\beta} W_{\alpha,\beta}
$$
where $H_0$, $g_{\alpha,\beta}$ and $W_{\alpha,\beta}$ are defined as in Section~\ref{Sect:Fermi_Polaron_Box} with the substitutions 
$M\to m_1$ and $1\to m_2$ of the masses. Explicitly,
\begin{align*}
       H_0 &:= \sum_k k^2 (\tfrac{1}{m_1} a^*_k a_k + \tfrac{1}{m_2} b^*_k b_k),\\
       g_{\alpha,\beta}^{-1} &:= \sum_{k} \frac{\alpha(k)^2 \beta(-k)^2}{(\frac{1}{m_1}+\frac{1}{m_2}) k^2 - E_B}.
\end{align*}
Formally, the operator $V_{\alpha,\beta}$ reads as in  Section~\ref{Sect:Fermi_Polaron_Box} as well, but now
$$
        V_{\alpha,\beta}:  \HH_{N_1,N_2} \to   L^2(\Omega)\otimes \HH_{N_1-1,N_2-1},
$$
where 
$$
    m_q^*:\HH_{N_1,N_2} \to  L^2(\Omega)\otimes \HH_{N_1,N_2}     
$$
creates a state $\ph_q$ in the attached  $L^2(\Omega)$, and $m_p\ph_q = \sprod{\ph_p}{\ph_q}=\delta_{p,q}$. Assumption \eqref{First_Condition_Regularization} that $C(\alpha,\beta)<\infty$ implies that $V_{\alpha,\beta}$ and $W_{\alpha,\beta}$ are bounded operators with 
$$
          \|V_{\alpha,\beta}\| \leq \sqrt{N_1N_2} C(\alpha,\beta)^{1/2}\quad \text{and} \quad W_{\alpha,\beta}=V_{\alpha,\beta}^{*}V_{\alpha,\beta},
$$
generalizing \Cref{Regularized_Interaction}. Again, as in Section~\ref{Sect:Fermi_Polaron_Box}, the parameter $E_B$, by definition of $g_{\alpha,\beta}$, agrees with the ground state energy of $H_{\alpha,\beta}\rst \HH_{1,1}$ in the sector of total momentum $q=0$. This shows, in particular, that the abstract theory described in Section \ref{sec:BS-operator} and \ref{Sect:Strong_Resolvent_Limit} applies to the system of $N_1+N_2$ fermions. We expect the results of Section~\ref{Sect:Regularization_Schemes} to generalize in a straightforward way.

\bigskip\noindent
\textbf{Acknowledgement.} We thank Rodolfo Figari,  Rafaele Carlone, Alessandro Teta, and Jan Philip Solovej for extended discussions and for the hospitality at the university of Naples, the university La Sapienza in Rome, and the university of Copenhagen. We thank Hans Peter B\"uchler for bringing the Fermi polaron to our attention and for several inspiring discussions. We thank an annonymous referee for pointing out the work of Andrea Posilicano and we thank Andrea for helpful remarks about his pertinent publications. Our work was supported by the \emph{Deutsche Forschungsgemeinschaft (DFG)} through the Research Training Group 1838: \emph{Spectral Theory and Dynamics of Quantum Systems}. 


\end{document}